%% file: main.tex
\documentclass[11pt,letterpaper]{article}
\usepackage[letterpaper,margin=1in]{geometry}

\usepackage{mathpazo}
\usepackage{amsmath,amssymb,amsthm,mathtools}
\usepackage{authblk}
\usepackage{array}
\usepackage{booktabs}
\usepackage{graphicx}
\usepackage{physics}
\usepackage{cite}
\usepackage{enumitem}
\usepackage[dvipsnames]{xcolor}
\usepackage[colorlinks=true,allcolors=blue]{hyperref}
\usepackage[nottoc,notlof,notlot]{tocbibind}

\newtheorem{theorem}{Theorem}[section]
\newtheorem{lemma}[theorem]{Lemma}

\newtheorem{corollary}[theorem]{Corollary}

\theoremstyle{definition}
\newtheorem{definition}[theorem]{Definition}

\newcommand{\poly}{\operatorname{poly}}
\newcommand{\polylog}{\operatorname{polylog}}
\newcommand{\bigtimes}{\mathop{\times}\displaylimits}

\begin{document}

\title{Learning sparse quantum states from single-qubit measurements}

\author[1,2]{Su-un Lee\thanks{suun@uchicago.edu}}
\author[1]{Liang Jiang\thanks{liangjiang@uchicago.edu}}
\author[2]{Kunal Sharma\thanks{kunals@ibm.com}}
\affil[1]{The University of Chicago}
\affil[2]{IBM Research}

\maketitle

\begin{abstract}
  We study the problem of learning a sparse quantum state, an $n$-qubit quantum state whose density matrix has at most $s$ nonzero matrix entries in an unknown product basis. While such states admit compact classical descriptions, they can carry long-range entanglement that prevents reconstruction from local reduced density matrices alone. Therefore, previous learning approaches addressed such long-range-entangled states using many entangling gates to extract the necessary information. In this work, we show that sparse states can nevertheless be efficiently learned using only single-qubit measurements. Specifically, when the sparsity $s$ is constant, our algorithm can learn sparse states from single-qubit measurements with polynomial sample complexity and classical computational complexity. When $s$ grows polynomially with $n$, sparse states can still be learned from single-qubit measurements with polynomial sample complexity, although efficient classical computation is not guaranteed in general. In this regime, however, the classical computational complexity becomes quasipolynomial when the state is sparse in an unknown basis that is a product of a known fixed finite set of single-qubit bases (e.g., eigenbases of Pauli operators). These results establish efficient learning of sparse states with long-range entanglement without entangling gates, and the single-qubit measurement requirements make our algorithms compatible with current quantum devices.
\end{abstract}

\clearpage

\tableofcontents
\clearpage

\input{01_introduction}
\input{02_technical_overview}
\input{03_preliminaries_shared_tools}
\input{04_selected_entry_tomography}
\input{05_low_order_pauli_moments}
\input{06_tree_merging}
\input{07_sparsity_certification}


\bibliographystyle{unsrturl}
\bibliography{references}
\end{document}

%% file: 01_introduction.tex
\section{Introduction}
\label{sec:introduction}

Learning an unknown quantum state from measurement data is a fundamental task in quantum information, quantum complexity, and quantum metrology, with applications to the characterization and benchmarking of quantum devices~\cite{gebhartLearningQuantumSystems2023,anshu2023surveycomplexitylearningquantum}. However, a generic $n$-qubit state requires exponentially many parameters to describe, which makes even storing and processing a full classical description prohibitively costly. Fully reconstructing the state, i.e., quantum state tomography, therefore requires exponentially many measurements in general~\cite{haahSampleOptimalTomographyQuantum2017, odonnell2016efficient}, and these resource requirements make learning generic quantum states for large systems impractical.
To overcome this challenge, there are two complementary approaches. One is predicting only the selected properties of the state without reconstructing the full state~\cite{aaronsonShadowTomographyQuantum2020, Huang_2020PredictingManyProperties}. The other approach, which we consider in this work, retains full reconstruction of the state while restricting attention to classes of quantum states that admit efficient classical representations.

Since all quantum states are represented as density matrices, a natural class of states to consider consists of those represented by sparse matrices. We say an $n$-qubit state $\rho$ is $s$-sparse if its density matrix has at most $s$ nonzero matrix entries in some product basis. Here, a product basis is obtained by choosing an orthonormal basis for each qubit and taking tensor product. When the sparsity $s$ grows at most polynomially with $n$, the state admits an efficient classical description consisting of the local bases and the locations and values of its nonzero matrix entries. Despite this restriction, sparse states can have various multipartite entanglement, including Greenberger--Horne--Zeilinger (GHZ) states~\cite{GHZ}, W states~\cite{durThreeQubitsCan2000}, and Dicke states~\cite{Dickestate}.
Sparse states also play a practical role in sample-based diagonalization methods for quantum chemistry and many-body physics, in which the sparse ground state of a Hamiltonian can ensure that the Hamiltonian is efficiently diagonalizable in a subspace spanned by a small number of samples~\cite{robledomoreno2024chemistryexactsolutionsquantumcentric,yu2025quantumcentricalgorithmsamplebasedkrylov, smith2026quantumcentricsimulationhydrogenabstraction, kirby2026observationimprovedaccuracyclassical, Danilov_2025, Kaliakin2025, Sugisaki_2025, mikkelsen2025quantumselectedconfigurationinteraction, piccinelli2026quantumchemistryprovableconvergence}.

Learning algorithms with single-qubit measurements are particularly attractive for current noisy intermediate-scale quantum (NISQ) devices, as they require only local basis rotations and readouts, avoiding any entangling gates. An important class of states that can be learned from single-qubit measurements is quantum states prepared by constant-depth circuits. Refs.~\cite{huangLearningShallowQuantum2024, landauLearningQuantumStates2025, kim2024learningstatepreparationcircuits} showed these states can be efficiently learned using single-qubit measurements. Due to the constant-depth preparation, the states have a finite correlation length. This results that a collection of local reduced density matrices on regions of bounded radius contains sufficient information to reconstruct the full state, where those reduced density matrices can be efficiently estimated from random single-qubit Pauli measurements.

However, this approach does not apply to sparse states in general. This is because sparse states can carry long-range entanglement. For example, the $n$-qubit GHZ state with unknown phase $\theta$,
\begin{equation}
  \lvert\mathrm{GHZ}_\theta \rangle=\frac{\lvert0^n\rangle+e^{i\theta}\lvert1^n\rangle}{\sqrt2},
  \label{eq:ghz-theta-state}  
\end{equation}
is sparse in the computational basis, but tracing out even one qubit removes the information about the phase, so every reduced density matrix is independent of $\theta$. Consequently, no constant-depth circuit can generate such sparse quantum states, and any collection of local reduced density matrices cannot determine the full state. This prevents the previous single-qubit measurement learning approach from applying to generic sparse states.

Matrix product states (MPSs) form another important class of states, which can accommodate such long-range entanglement. Described by one-dimensional tensor networks, MPSs include all pure sparse quantum states, including GHZ states. MPS learning has been extensively studied~\cite{landoncardinal2010efficientdirecttomographymatrix, cramerEfficientQuantumState2010, bakshiLearningClosestProduct2025, linEfficientMatrixProduct2026}. To address long-range entanglement, however, the protocols in these works have to progressively disentangle the qubits. This requires many entangling gates during learning, with linear-depth circuits in one dimension and logarithmic depth with all-to-all connectivity.

In summary, the existing single-qubit learning protocols do not extend to general sparse states, which can exhibit long-range entanglement, while the methods discussed above handle such entanglement using many entangling gates. To our knowledge, even restricted to a simple case where the states are sparse in the computational basis, existing methods for efficiently learning sparse states require entangling gates~\cite{sen2026learningsparsequantumstates, Cai2016OptimalLargeScaleQuantumStateTomography, wong2025efficientquantumtomographypolynomial, benderskySelectiveEfficientQuantum2013, morrisSelectiveQuantumState2019}. This leads to our central question:
\begin{quote}
  \begin{center}
    \emph{Can sparse states be efficiently learned from single-qubit measurements?}
  \end{center}
\end{quote}
Here, efficiency refers separately to sample complexity and classical computational complexity, which are the number of independently prepared copies of $\rho$ and the classical runtime needed to process the measurement data and reconstruct the learned state, respectively.

\subsection{Results}
\label{sec:interim-results}

\begin{table}[t]
  \centering
  \footnotesize
  \setlength{\tabcolsep}{2pt}
  \renewcommand{\arraystretch}{1.25}
  \begin{tabular}{@{}
      >{\raggedright\arraybackslash}p{0.35\textwidth}
      >{\raggedright\arraybackslash}p{0.30\textwidth}
      >{\raggedright\arraybackslash}p{0.30\textwidth}@{}}
    \toprule
    \textbf{Setting} & \textbf{Sample complexity} & \textbf{Computational complexity} \\
    \midrule
    Known product basis & $O(s^4 \log(s)/\varepsilon^2)$ & $O(ns^4 \log(s)/\varepsilon^2)$ \\
    \addlinespace
    Algorithm~1; arbitrary product basis & $2^{O(s(\log s)^2)}(n/\varepsilon)^{O(s)}$ & $2^{O(s(\log s)^2)}(n/\varepsilon)^{O(s)}$ \\
    \addlinespace
    Algorithm~2; arbitrary product basis & $O(n^6s^{12}\varepsilon^{-4}\log^2(ns))$ & $(ns/\varepsilon)^{O(n)}$ \\
    \addlinespace
    Algorithm~2; product of finite single-qubit basis set & $O(n^6s^{12}\varepsilon^{-4}\log^2(ns))$ & $n^{O(\log s)}/\varepsilon^4$ \\
    \bottomrule
  \end{tabular}
  \caption{Summary of the results. Sample complexity is the number of copies of $\rho$ needed to learn the state, and computational complexity is the classical runtime required.}
  \label{tab:complexity-regimes}
\end{table}

In this work, we answer the above question affirmatively. We show that, when the sparsity $s$ is constant, sparse states can be learned from single-qubit measurements with both polynomial sample complexity and polynomial classical computational complexity. When the sparsity $s$ grows polynomially with $n$, sparse states can still be learned from single-qubit measurements with polynomial sample complexity, while the efficiency of the classical computational complexity is not guaranteed in general. However, we show that the classical computational complexity becomes quasipolynomial when the state is sparse in an unknown basis that is a product of a known fixed finite set of single-qubit bases (e.g., eigenbases of Pauli operators).

Let $\rho$ be an unknown $n$-qubit state whose density matrix has at most $s$ nonzero matrix entries in some product basis. Specifically, a product basis $B = (B_1, B_2, \ldots, B_n)$ consists of an orthonormal basis $B_i$ for each qubit $i$. We denote the matrix representation of $\rho$ in this basis by $[\rho]_B$. We say $\rho$ is $s$-sparse if there is a product basis $B$ such that $[\rho]_B$ is an $s$-sparse matrix. The goal of our learning algorithms is to output an operator $\widehat\rho$ with at most $s$ nonzero matrix entries in some product basis $B'$, such that $\widehat\rho$ is $\varepsilon$-close to $\rho$ in trace distance. Specifically, the output consists of a product basis $B' = (B'_1, B'_2, \ldots, B'_n)$ and the $s$-sparse matrix $[\widehat\rho]_{B'}$.

We first show that if a product basis $B$ in which $\rho$ is $s$-sparse is known, single-qubit measurements can efficiently learn $\rho$ even when $s$ grows polynomially with $n$. To this end, we develop \emph{selected-entry tomography}, which estimates a single entry of $[\rho]_B$. Previous works~\cite{benderskySelectiveEfficientQuantum2013, morrisSelectiveQuantumState2019, wong2025efficientquantumtomographypolynomial, patelSelectiveEfficientQuantum2026} have tackled this problem of estimating entries of density matrices, but often require entangling gates or a restricted entry set. In contrast, selected-entry tomography in our work uses only single-qubit measurements to estimate any entry of $[\rho]_B$. We show that this can be used to learn sparse states using single-qubit measurements when $B$ is known.

\begin{theorem}[Learning with known basis, Theorem~\ref{thm:known-basis-sparse-learning}]
  Let $\rho$ be an $n$-qubit $s$-sparse state, and suppose the product basis $B$ in which $\rho$ has at most $s$ nonzero matrix entries is known. Then, one can approximately learn $\rho$ in trace distance only using single-qubit measurements with high probabiltiy, from $O(s^4\log (s)/\varepsilon^2)$ copies of $\rho$ and $O(ns^4\log (s)/\varepsilon^2)$ classical runtime.
  \label{thm:known-basis-learning-summary}
\end{theorem}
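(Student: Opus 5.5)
We first rotate each qubit so that $B$ becomes the computational basis. Since $B$ is a known product basis, this is a single-qubit unitary on each qubit and can be absorbed into the measurements, so we assume $B$ is the computational basis. The plan has two stages: find the support of $[\rho]_B$, then estimate the entries on it.

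\textbf{Stage 1: support recovery from the diagonal.} Measuring every qubit in the computational basis samples strings $x$ with probability $\rho_{xx}$. By positivity, $|\rho_{xy}|^2 \le \rho_{xx}\rho_{yy}$. Hence an off-diagonal entry can be large only if both of its diagonal entries are large. We take $O(s\log(s)/\varepsilon^{2})$-type many computational-basis samples (the exact power to be fixed later). The set $S$ of observed strings then, with high probability, contains every $x$ with $\rho_{xx}\ge \eta$, for a threshold $\eta$ around $\varepsilon^2/s^2$ chosen later. Because $\rho$ is $s$-sparse, at most $s$ diagonal entries are nonzero. So every nonzero off-diagonal entry involving an unobserved string is bounded by $\sqrt{\eta}$, and there are at most $s$ of them. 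Discarding all of them, together with the unobserved diagonal entries, costs at most $O(s\sqrt{\eta}+s\eta)$ in trace norm; here we bound the trace norm by the entrywise $\ell_1$ norm. This bound is what sets $\eta$. The candidate entries are pairs $(x,y)\in S\times S$. We keep only those pairs, at most $s$ of them, that the estimates in Stage 2 flag as significant; we could also simply estimate all $|S|^2 = O(s^2)$ pairs.

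\textbf{Stage 2: selected-entry tomography with single-qubit measurements.} For a fixed pair $(x,y)$, let $D=\{i : x_i\neq y_i\}$. Write
\[
|x\rangle\langle y| = \bigotimes_{i\notin D} |x_i\rangle\langle x_i| \otimes \bigotimes_{i\in D} |x_i\rangle\langle y_i|,
\]
where each factor $|0\rangle\langle 1| = (X+iY)/2$. We choose a random phase $\phi$ on the qubits in $D$, or equivalently measure each qubit in $D$ in a basis on the equator ($X$ or $Y$). Qubits outside $D$ are measured in $Z$. This gives an unbiased estimator of $\rho_{yx}$, namely the product of $\pm1$ outcomes times phase factors, restricted to the event $z_{\bar D}=x_{\bar D}$. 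Expanding the product naively gives $2^{|D|}$ Pauli terms, which would produce exponential variance. To avoid this, we use the GHZ-type trick: measure all qubits in $D$ in the basis rotated by angles $\theta_k = 2\pi k/(|D|+1)$. Averaging the parity over these rotations isolates the coherence $\rho_{xy}$ from the other Pauli components. The single-shot estimator is then bounded by $O(1)$, with an indicator from the $Z$-outcomes on $\bar D$. Hoeffding's inequality then estimates each entry to additive precision $\delta$ with $O(\log(s)/\delta^2)$ shots. A union bound over $O(s^2)$ entries, with $\delta \approx \varepsilon/s$ so that the $\ell_1$ error over $s$ kept entries is $\le \varepsilon$, gives the claimed $O(s^4\log(s)/\varepsilon^2)$ scaling: a factor $s^2$ from the candidate pairs and a factor $s^2/\varepsilon^2$ from the precision. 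Each shot costs $O(n)$ classical processing, which gives the runtime $O(ns^4\log(s)/\varepsilon^2)$.

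\textbf{Final step and main obstacle.} We assemble $\widehat\rho$ from the estimated entries and keep the $s$ largest in magnitude. Hard-thresholding at most doubles the error. Summing the truncation and estimation errors bounds the trace distance by $\varepsilon$.

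The main obstacle is Stage 2: constructing a bounded-variance single-qubit estimator for an arbitrary coherence $\rho_{xy}$ with large Hamming distance. Its variance must not grow like $2^{|D|}$, since otherwise the shot count is exponential. I would first try the rotated-parity (phase-averaging) estimator described above and verify that it has $O(1)$ range. If that fails, the fallback is to condition on the $Z$-outcomes of $\bar D$ and exploit the sparsity of $\rho$. Sparsity implies that only $O(s)$ Pauli components restricted to $D$ can contribute, and these can be separated by a discrete Fourier transform over a few phase choices.
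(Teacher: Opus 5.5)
Your plan has the same architecture as the paper's proof: sample the diagonal in $B$ to find the support $S$, estimate the $S\times S$ block by selected-entry tomography, and keep at most $s$ entries. It goes through, but the step you flag as the main obstacle is a misdiagnosis. The estimator you wrote down first does not have exponential variance. That estimator measures each qubit of $D$ in $X$ or $Y$ chosen uniformly at random, measures $\bar D$ in $Z$, and outputs the product of outcomes times the phases times $\mathbf 1[z_{\bar D}=x_{\bar D}]$. Write $\lvert x\rangle\langle y\rvert=\bigotimes_{i\in D}\tfrac12\bigl(X+i(-1)^{x_i}Y\bigr)\otimes\bigotimes_{i\notin D}\lvert x_i\rangle\langle x_i\rvert$. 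Each of the $2^{\lvert D\rvert}$ Pauli terms has a coefficient of modulus $2^{-\lvert D\rvert}$, and sampling one term uniformly means reweighting by exactly $2^{\lvert D\rvert}$. The two cancel, so every single-shot value has modulus at most $1$, and Hoeffding applies with no dependence on $\lvert D\rvert$.

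This is precisely the paper's estimator in Theorem~\ref{thm:single-entry-tomography}. It draws $z$ uniformly and measures $\bigotimes_i V_i\bigl(i^{(x_i\oplus y_i)z_i}X^{x_i\oplus y_i}Z^{z_i}\bigr)V_i^\dagger$, i.e.\ $X$ or $Y$ on $D$, and $I$ or $Z$ on $\bar D$. It then outputs $(-1)^{z\cdot x}i^{-(x\oplus y)\cdot z}O$. Averaging its $\bar D$ factor over $z_{\bar D}$ reproduces your indicator. So the phase-averaging detour and the sparsity-based fallback are unnecessary.

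If you do keep phase averaging, your grid is off. With equatorial angles $\theta_k=2\pi k/(\lvert D\rvert+1)$, the target mode $e^{-i\theta\lvert D\rvert}$ coincides on the grid with the mode $e^{i\theta}$ whenever $\lvert D\rvert$ is odd. For $\lvert D\rvert=1$ you only ever measure $\pm X$. The angles $\pi k/(\lvert D\rvert+1)$, $k=0,\dots,\lvert D\rvert$, separate all modes.

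The rest differs from the paper only in bookkeeping. In Stage~1 you capture every string of mass at least $\eta\approx\varepsilon^2/s^2$. This needs about $s^2\log(s/\delta)/\varepsilon^2$ samples, not the $s\log(s)/\varepsilon^2$ you guessed. You then bound the discarded part by its entrywise $\ell_1$ norm, which is at most $s\sqrt\eta$. The paper instead union-bounds over the $2^{\lvert S\rvert}$ subsets of the support. This makes the total unobserved mass at most $\varepsilon^2/16$ after only $O((s+\log(1/\delta))/\varepsilon^2)$ samples. It then converts this to $\lVert\rho-P\rho P\rVert_1\le\varepsilon/2$ via Lemma~\ref{lem:projection-error-trace}.

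Similarly, your top-$s$ truncation with an entrywise-$\ell_1$ bound replaces the paper's hard threshold at $\varepsilon/4s$ followed by $\lVert\cdot\rVert_1\le\sqrt{s'}\lVert\cdot\rVert_F$. Both versions stay within the $O(s^4\log(s)/\varepsilon^2)$ budget. In either case the budget is dominated by the $s^2$ entries each estimated to precision $\Theta(\varepsilon/s)$.
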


This result reduces the task of learning $\rho$ to finding a basis in which $\rho$ is sparse. With that, we develop two algorithms for learning sparse states. The first algorithm, Algorithm~1, uses low-order Pauli moments, the expectation values of Pauli operators acting on only a few qubits. We could show that, when $s$ is constant, low-order Pauli moments can efficiently generate a list of candidate bases which includes at least one basis that makes $\rho$ $s$-sparse. Combined with the selected-entry tomography, this allows Algorithm~1 to learn $\rho$ with polynomial sample complexity and classical computational complexity, given $s$ is constant.

\begin{theorem}[Algorithm~1, Theorem~\ref{thm:pauli-moment-learning}]
  Let $\rho$ be an $n$-qubit $s$-sparse state with $s=O(1)$. One can approximately learn $\rho$ in trace distance only using single-qubit measurements with high probability, from ${\rm poly}(n, 1/\varepsilon)$ copies of $\rho$ and ${\rm poly}(n, 1/\varepsilon)$ classical runtime.
  \label{thm:constant-sparsity-learning-summary}
\end{theorem}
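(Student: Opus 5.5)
The plan is to reduce the statement to Theorem~\ref{thm:known-basis-learning-summary}. We produce a polynomial-size list of candidate product bases that, with high probability, contains a basis in which $\rho$ is (approximately) $s$-sparse. We run selected-entry tomography in each candidate, and then select among the resulting hypotheses with a tournament. Because $s=O(1)$, every quantity indexed by subsets of at most $O(s)$ qubits (or $O(\log s)$ levels of structure) has size $n^{O(s)}=\poly(n)$, and this is what keeps everything polynomial.

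\textbf{Step 1: structure of sparse states in their basis.} Write $[\rho]_B$ with nonzero entries at positions $(x,y)$, $x,y\in\{0,1\}^n$. Only the set $S$ of at most $s$ distinct strings appearing as rows or columns matters, and $|S|\le 2s$. Partition the qubits into two kinds. \emph{Constant qubits} are those on which all strings in $S$ agree. \emph{Distinguishing qubits} are the rest. The strings in $S$ can be separated by a set $D$ of at most $|S|-1\le 2s$ distinguishing qubits, and every other qubit $j$ has its value determined by the value on $D$ (the pattern on $S$ is a Boolean function of the $D$-pattern). On a constant qubit $j$ the reduced state is pure, $\ket{b_j}\!\bra{b_j}$, so $B_j$ is read off (up to a phase) from the single-qubit Bloch vector. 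For a distinguishing qubit $j$, the reduced state on $D\cup\{j\}$, a set of $O(s)$ qubits, has support on the span of at most $2s$ product vectors. From such low-order marginals, i.e. low-order Pauli moments on $O(s)$ qubits, I would extract the local basis $B_j$ as the pair of vectors spanning the qubit-$j$ components of that support.

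\textbf{Step 2: candidate generation.} We estimate all Pauli moments of weight $\le k=O(s)$ using classical shadows with random single-qubit Pauli measurements. This takes $3^{k}\log(n^k)/\eta^2=\poly(n,1/\eta)$ samples. We then enumerate all subsets $D$ of size $\le 2s$, which gives $n^{O(s)}$ choices. For each $D$ we form the product bases determined by Step~1 from the estimated marginals. Where a local basis cannot be pinned down exactly, for example when the Bloch vector is mixed or degenerate, we replace it by a finite $\delta$-net over $SU(2)$ on the $O(s)$ ambiguous qubits. This contributes a factor $(1/\delta)^{O(s)}=\poly(1/\varepsilon)$, which matches the $(n/\varepsilon)^{O(s)}$ in Table~\ref{tab:complexity-regimes}.

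\textbf{Step 3: robustness, tomography and selection.} We need a perturbation argument: if $\widehat B$ is $\delta$-close to $B$ qubit-wise only on the relevant qubits, and exact up to estimation error $\eta$ elsewhere, then $[\rho]_{\widehat B}$ is $O(s\cdot\poly(n)(\delta+\eta))$-close to an $s$-sparse matrix. This should hold because only the $O(s)$ distinguishing qubits carry non-pure structure and each constant qubit contributes error linear in its angle. For each candidate we then run the known-basis algorithm (Theorem~\ref{thm:known-basis-learning-summary}) and truncate to the $s$ largest entries. Finally we pick the best hypothesis via a Hypothesis-selection or Scheffé-type tournament on single-qubit measurement statistics. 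Such a tournament costs only $\log(\#\text{candidates})=O(s\log(n/\varepsilon))$ extra factors in samples.

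\textbf{Expected main obstacle.} The hardest step is Step~1/2's identifiability claim, namely that the basis on distinguishing qubits is recoverable from $O(s)$-local marginals even though phases like $\theta$ in the GHZ example are invisible locally. The intended resolution is that the local bases, unlike the long-range coherences, are visible locally, since the supports of the marginals are spanned by product vectors in $B$. The coherences themselves are then recovered by selected-entry tomography. I would first try to prove this by showing that the marginal on $D\cup\{j\}$ is block-diagonal in the $D$-pattern with rank-one blocks on qubit $j$, which would give $B_j$ robustly.
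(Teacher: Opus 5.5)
\textbf{The gap is in your selection step.} A Scheff\'e-type tournament only certifies closeness in the distance induced by the tests you actually run. In the quantum setting, the test for a pair $(\widehat\rho_i,\widehat\rho_j)$ that yields a trace-distance guarantee is $\operatorname{Tr}[\rho\,\Pi_{ij}]$, where $\Pi_{ij}$ is the Helstrom projector onto the positive part of $\widehat\rho_i-\widehat\rho_j$. When the two hypotheses are sparse in different product bases, $\Pi_{ij}$ is in general entangled, so it is not a single-qubit measurement. If you replace it by single-qubit measurement events, the winner is close to $\rho$ only in the weaker distance those events detect. You would then need a separate inequality showing that your chosen single-qubit tests detect the trace distance between $\rho$ and a sparse operator in a different product basis. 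This cannot be taken for granted: the GHZ$_\theta$ example shows that natural product measurements can be completely blind to large trace-distance differences. You never specify such tests. (Also, the pairwise tests do not commute, so they cannot share copies, and the overhead is polynomial in the number of candidates rather than logarithmic in it.)

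A related unproved step is running Theorem~\ref{thm:known-basis-sparse-learning} on each candidate. Its proof uses exact sparsity ($S'\subseteq S$, $\lvert S'\rvert\le s$), so it does not apply verbatim to a basis in which $\rho$ is only approximately sparse, let alone to a wrong basis.

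The paper never compares hypotheses. Each candidate faces a one-sided certificate. In Step~2, some $\le s$ strings must capture all but $\xi=\varepsilon^2/32$ of the empirical mass in that basis. In Step~3, at most $s$ entries of the estimated $S\times S$ submatrix may exceed $\theta=\varepsilon/(3s^{3/2})$. Lemma~\ref{lem:reconstruction-soundness} shows that \emph{any} passing candidate, correct or not, yields $\lVert\rho-M\rVert_1\le\varepsilon$; this uses the gentle-projection bound of Lemma~\ref{lem:projection-error-trace} plus a rank-$\le s$ Frobenius-to-trace conversion. Lemma~\ref{lem:reconstruction-completeness} shows that the accurate candidate passes.

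If you want to keep a selection rule, minimum Frobenius distance does work with single-qubit measurements. The quantity $\lVert\rho-\widehat\rho_i\rVert_F^2-\operatorname{Tr}\rho^2=\lVert\widehat\rho_i\rVert_F^2-2\operatorname{Re}\operatorname{Tr}(\widehat\rho_i^\dagger\rho)$ involves only the $\le s$ entries of $\rho$, in the basis of $\widehat\rho_i$, on the support of $\widehat\rho_i$. Selected-entry tomography estimates these. You then take the minimizer and convert to trace norm using $\operatorname{rank}(\rho-\widehat\rho_i)\le 2s$.

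Your candidate generation is a genuinely different route from the paper's, and it can be made to work, but not as you state it. The paper uses only moments of order $\ell_0=O(\log s)$, the Fourier bound $\lvert\widehat p(W)\rvert>1/(4s^{\log_2 s})$ outside the low-contrast set $L$, and contraction against $\mathbf n_j$ on a flip-set hitting set $J_\star$ to cancel coherences; it enumerates $K_\star=L\cup J_\star$.

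Your claim that $\rho_{D\cup\{j\}}$ is block-diagonal in the $D$-pattern is false. Any entry $[\rho]_B(x,y)$ whose flip set lies inside $D\cup\{j\}$ survives the partial trace. The range of the marginal then need not contain any product vector (e.g.\ a Bell pair with $n=2$), so ``the qubit-$j$ components of the support'' is not well defined.

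What is true, because $x\mapsto x_D$ is injective on $S$, is that the diagonal compressions are rank one: $\langle\phi_z\rvert_D\,\rho_{D\cup\{j\}}\,\lvert\phi_z\rangle_D=p(x)\,\lvert\phi^{(j)}_{x_j}\rangle\langle\phi^{(j)}_{x_j}\rvert$ for the unique $x\in S$ with $x_D=z$. Taking $z$ of largest trace gives weight $\ge 1/s$, so $B_j$ is recovered robustly from moments of weight $\le\lvert D\rvert+1$. This requires the bases on $D$ itself. The net must therefore be placed on the enumerated set $D$; the ``ambiguous'' qubits cannot be defined as those with mixed Bloch vectors, since for GHZ all $n$ of them are mixed.

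Likewise, your perturbation step holds for the general reason in Lemma~\ref{lem:product-basis-stability}, namely the bound $\sqrt{s\sum_i\varphi_i^2}$, not because only $O(s)$ qubits carry non-pure structure.
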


Algorithm~1 has polynomial sample complexity and classical runtime for constant $s$, but the polynomial degree depends on $s$. Therefore, the complexities of Algorithm~1 rapidly grows with $s$, and it is not efficient when $s={\rm poly}(n)$. To overcome this limitation, we develop another algorithm, Algorithm~2, which achievs polynomial sample complexity in both $n$ and $s$. In particular, Algorithm~2 divides $n$ qubits into small blocks of qubits, and learns local states. Then, it merges small blocks to learn larger blocks using selected-entry tomography, and eventually outputs the full state after merging all blocks to a single block of $n$ qubits.

While it consumes polynomially many copies of $\rho$, the classical runtime grows rapidly in $n$ and $s$. However, the classical runtime can be greatly reduced when the local basis for each qubit belongs to a known fixed finite set. Specifically, let $\mathcal A$ be a finite set of single-qubit bases, and suppose that $\rho$ is $s$-sparse in some basis $B=(B_1,\ldots,B_n)\in\mathcal A^n$. Then, we show that the classical runtime becomes quasipolynomial in $n$ when $s$ grows polynomially in $n$. Notably, the Donoho--Stark uncertainty principle~\cite{donohoUncertaintyPrinciplesSignal1989,eladGeneralizedUncertaintyPrinciple2002,boggiattoTwoAspectsDonoho2016}, which was originally developed for classical signal processing, plays a central role in reducing the classical runtime of Algorithm~2.

\begin{theorem}[Algorithm~2, Theorem~\ref{thm:tree-continuous-basis} and Theorem~\ref{thm:tree-finite-bases}]
  Let $\rho$ be an $n$-qubit $s$-sparse state. One can approximately learn $\rho$ in trace distance only using single-qubit measurements with high probabiltiy, from ${\rm poly}(n, s, 1/\varepsilon)$ copies of $\rho$ and $(ns/\varepsilon)^{O(n)}$ classical runtime. When $\rho$ is $s$-sparse in a product basis $B=(B_1,\ldots,B_n)\in\mathcal A^n$ for some finite set of single-qubit bases $\mathcal A$, the classical runtime becomes $n^{O(\log s)}/\varepsilon^4$.
\end{theorem}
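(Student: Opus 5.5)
The plan is to build Algorithm~2 as a bottom-up merging procedure over a balanced binary tree of qubit blocks, using the selected-entry tomography of Theorem~\ref{thm:known-basis-learning-summary} as the only quantum primitive. The key structural fact I would establish first is a \emph{restriction lemma}: if $\rho$ is $s$-sparse in $B=(B_1,\dots,B_n)$, then for every block $S\subseteq[n]$ the reduced state $\rho_S$ is $s$-sparse in $B_S$. Indeed, each nonzero entry of $[\rho_S]_{B_S}$ is a partial sum over entries of $[\rho]_B$ that agree on $S^c$. Moreover, the support pattern (row/column index pairs) of $[\rho_S]_{B_S}$ is contained in the projection of the support of $[\rho]_B$.

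\textbf{Leaves.} Partition the qubits into blocks of size $O(1)$ and learn each block state by standard single-qubit tomography, using $\poly(s,1/\varepsilon)$ copies per block. For each block I then compute candidate local bases in which the estimate is approximately $s$-sparse.

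\textbf{Merging two sibling blocks $S_1,S_2$.} I would maintain a small list of candidate (basis, support) pairs for each block. For a union block, the candidate support must lie inside the product of the two supports, so it has at most $s^2$ rows and $s^2$ columns. The candidate basis is the product of the children's candidates. Selected-entry tomography on these $\le s^4$ entries, which uses only single-qubit measurements in the product basis, estimates $[\rho_{S_1\cup S_2}]$ on this support. I would then keep the $s$ largest entries, using the restriction lemma to argue the true support is captured.

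\textbf{Error and sample accounting.} Errors are taken entrywise. There are $O(n)$ merges, each costing $\poly(n,s,1/\varepsilon)$ copies after a union bound over $\log$-many confidence parameters. Converting entrywise error to trace distance with an $s$-sparse matrix costs a $\poly(s)$ factor, which gives the $O(n^6s^{12}\varepsilon^{-4}\log^2(ns))$ total.

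\textbf{Main obstacle: the bases are unknown and continuous.} The correct local basis is not determined by a single child's reduced state. For the GHZ state, for example, the one-qubit marginals are maximally mixed and so are sparse in every basis. The children therefore cannot commit to one basis.

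For arbitrary bases, my first attempt would be to discretize each $B_i$ with an $\varepsilon/(ns)$-net on $SU(2)$. The final consistency check of sparsity then enumerates the candidates. I expect the classical search over nets to produce the $(ns/\varepsilon)^{O(n)}$ runtime, while samples stay polynomial because tomography is only run on the actual merged supports. The obstacle I would expect to fight is showing that basis enumeration never forces extra measurements. The approach would be to reuse classical-shadow-style data: the single-qubit measurement outcomes in random bases can be reprocessed offline for any candidate basis.

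For the finite set $\mathcal A$, I would use the Donoho--Stark uncertainty principle to prune candidates. A nonzero vector cannot be sparse in two mutually incoherent bases simultaneously: $|\mathrm{supp}_{B}|\cdot|\mathrm{supp}_{B'}|\ge 1/\mu^2$. Hence, within a block, any wrong choice of local basis on $k$ qubits inflates sparsity by roughly $c^{k}$ for a constant $c>1$ depending on $\mathcal A$. Consequently, candidate basis assignments that keep sparsity $\le s$ can differ from the truth on only $O(\log s)$ qubits per block. The candidate list then has size $n^{O(\log s)}$, and propagating this through the tree gives the $n^{O(\log s)}/\varepsilon^4$ runtime. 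Making this pruning quantitative under estimation error, where the sparsity is only approximate, is the step I would expect to need the most care.
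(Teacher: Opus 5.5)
\textbf{There is a genuine gap in your merge step.} You take the parent's basis to be the product of the children's bases and keep the $s$ largest entries there. But $\rho_Q$ need not be $s$-sparse in $B_L\times B_R$ unless the children happened to pick restrictions of a global sparse basis, which, as you note, they cannot determine.

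For example, take GHZ with $n\ge3$, which is $4$-sparse. Its two-qubit marginal is $\frac14(I+Z\otimes Z)$, and this has $8$ nonzero entries in the $X\otimes X$ basis. Your size-one leaves may legitimately choose $X$, since single-qubit marginals are maximally mixed.

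Your remedy, keeping lists of candidate bases per block and re-processing reusable classical-shadow data offline, does not keep the sample count polynomial. An entry of $[\rho_Q]_{B'}$ is the expectation of a product operator that is non-identity on up to $|Q|$ qubits ($|Q|=n$ at the root). Its estimator from non-adaptive random single-qubit measurements has variance exponential in $|Q|$. This is exactly the paper's GHZ-in-an-unknown-Pauli-basis obstruction: the coherence is visible only if every qubit is measured in $X$ or $Y$, which happens with probability $(2/3)^n$. Measuring afresh for each candidate instead multiplies the sample count by the list size, which is exponential in the continuous case. Copy-to-copy adaptivity with a single committed basis per block is essential.

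\textbf{The missing idea is that a child only needs the right subspace, not the right basis.} Any product basis $B_L$ with a projector $P_L$ onto at most $s$ of its vectors and small leakage $\operatorname{Tr}[(I-P_L)\rho_L]$ suffices.
\begin{itemize}
\item Since $\operatorname{supp}\rho_Q\subseteq\operatorname{supp}\rho_L\otimes\operatorname{supp}\rho_R$, a single selected-entry tomography on the $s^2\times s^2$ block of $A=P_L\otimes P_R$, in the known basis $B_L\times B_R$, recovers essentially all of $\rho_Q$.
\item The parent then re-bases purely classically. It searches over an $\varepsilon_2$-net, or in the finite case over the Hamming ball of radius $H=O(\log s/\kappa)$ around $B_L\times B_R$, for a product basis and an at-most-$s$-vector projector minimizing the empirical leakage $\operatorname{Tr}[(I-P)\widehat\sigma]$.
\item It commits to that one basis before the next level is measured.
\end{itemize}

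This also supplies the known center that your Donoho--Stark pruning lacks. The paper applies Corollary~\ref{cor:matrix-donoho-stark} to $[\rho_Q]_{B_\star}$, which is $s$-sparse, and to $[\rho_Q]_{B_L\times B_R}$, which is approximated by the $s^4$-sparse $A\rho_QA$ with relative error $e_T\le2\sqrt{s\eta_d}$. This gives $1-e_T\le\mu^{2h}s^{5/2}$, so $B_\star$ differs from $B_L\times B_R$ on at most $H$ qubits.

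Finally, your entrywise error accounting omits the adaptive propagation of leakage through the tree. The leakage obeys $\eta_{d+1}\le8\eta_d+4s\varepsilon_1+2\xi_d$ (Lemma~\ref{lem:generic-one-merge-stability}). Over $D=O(\log n)$ levels this forces $\varepsilon_1=\Theta(\varepsilon^2/(s^2n^3))$, which is where the $n^6s^{12}\varepsilon^{-4}$ count actually comes from. A $\poly(s)$ entrywise-to-trace conversion alone does not yield it.
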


In summary, we show that sparse states can be learned from single-qubit measurements with polynomial sample complexity, and polynomial classical computational complexity when $s$ is constant. When $s$ grows polynomially with $n$, the classical computational complexity is not guaranteed in general sparse states, but it becomes quasipolynomial when the state is sparse in a basis that is a product of a finite set of single-qubit bases. Table~\ref{tab:complexity-regimes} summarizes our results.

\subsection{Discussion}


\begin{figure}
  \centering
  \includegraphics[width=0.7\linewidth]{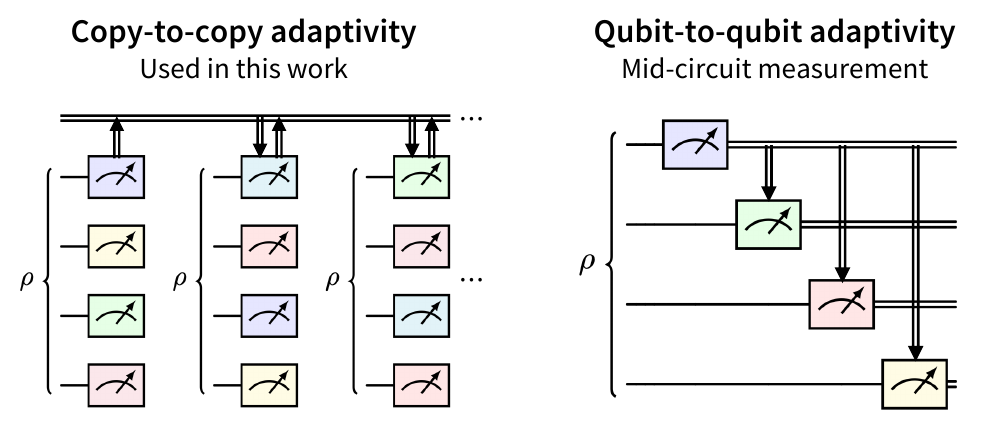}
  \caption{Two kinds of adaptivity in single-qubit measurements. Our protocol uses copy-to-copy adaptivity, where measurement bases on qubits for each copy of $\rho$ may depend on outcomes from earlier copies, but not on outcomes within the same copy. On the other hand, qubit-to-qubit adaptivity permits the basis of each qubit to depend on measurement outcomes on other qubits within the same copy. This essentially requires mid-circuit measurement and feed-forward.}
  \label{fig:adaptivity}
\end{figure}

We first note that our algorithms can be used to test whether an arbitrary input state is $s$-sparse. So far, we assumed that the input state $\rho$ is guaranteed to be an $s$-sparse state, but this promise may not be known to hold. Specifically, if $\rho$ is an arbitrary quantum state, which may or may not be $s$-sparse, one may want to determine whether $\rho$ is $s$-sparse. We show in Section~\ref{sec:sparsity-certification} that a simple variant of Algorithm~1 can be used to achieve this: If $\rho$ is indeed approximately $s$-sparse, Algorithm~1 can be used to output $\mathsf{YES}$ along with an $s$-sparse operator $\widehat \rho$ that is $\varepsilon$-close to $\rho$ in trace distance. On the other hand, if the input state is $\varepsilon$-far from any $s$-sparse state in trace distance, it outputs $\mathsf{NO}$. Therefore, for a state whose structure and classical description are a priori unknown, this enables a search for a compact description whose accuracy can be certified upon acceptance.

We emphasize that \emph{adaptivity} plays a crucial role in our learning algorithms. There are two notions of adaptivity in single-qubit measurements [Fig.~\ref{fig:adaptivity}]. The first is \emph{copy-to-copy adaptivity}, where the measurement bases on qubits for each copy of $\rho$ may depend on outcomes from earlier copies. This is the adaptivity used in our algorithms, and it does not require any additional resources within measurement circuits beyond ordinary single-qubit measurements. The other is \emph{qubit-to-qubit adaptivity}, where the measurement basis of each qubit can depend on the measurement outcomes of other qubits within the same copy of $\rho$. While it has been shown that this is strictly more powerful than the copy-to-copy adaptivity (e.g., Ref.~\cite{gupta2026fewsinglequbitmeasurements}), this essentially requires mid-circuit measurement and feed-forward, which is relatively noisy and costly in current quantum devices~\cite{riste2012feedbackcontrolsolidstatequbit, riste2012initializationmeasurement, corcoles2021exploitingdynamicquantum, gupta2024probabilisticerrorcancellation, ivashkov2024highfidelitymultiqubitgeneralized, lund2026constantdepthquantumimaginarytime, chu2026learningmidcircuitmeasurementbackaction}. Throughout this work, we do not use qubit-to-qubit adaptivity.

To illustrate the role of copy-to-copy adaptivity in our algorithms, consider the GHZ state in Eq.~\eqref{eq:ghz-theta-state}, but now expressed in an unknown product of single-qubit Pauli bases. We may take this product basis to be the computational basis without loss of generality, although it remains unknown to the algorithm. A product Pauli measurement then can reveal the phase $\theta$ only if every qubit is measured in the $X$ or $Y$ basis; measuring even one qubit in the $Z$ basis makes the joint output distribution independent of $\theta$. However, without any prior information about the product basis, the best we can do is randomly guess the correct Pauli measurement basis for each qubit and the probability of choosing such a basis is $(2/3)^n$, which is exponentially unlikely. On the other hand, our algorithms use outcomes from earlier copies to guide the choice of measurement bases on later copies. This adaptive basis choice ensures that our algorithms eventually find and measure in the correct basis.

Lastly, our algorithms estimate expectation values of observables and probability masses to reconstruct the output state, and therefore are compatible with standard error mitigation methods~\cite{temme2017errormitigation, bravyi2021mitigatingmeasurementerrors, vandenbergProbabilisticErrorCancellation2023, Maciejewski2020mitigationofreadout, nation2021scalablemitigationmeasurementerrors, cai2023quantumerrormitigation}. Especially, as our algorithms only use expectation values of product observables obtained from single-qubit measurements, probabilistic error cancellation is achieved with additional sampling overhead $\exp(O(ne))$, where $e$ denotes the single-qubit readout error rate~\cite{temme2017errormitigation, bravyi2021mitigatingmeasurementerrors, vandenbergProbabilisticErrorCancellation2023}. Meanwhile, current state-of-the-art quantum processors already achive error rates of $e \lesssim 1/n$~\cite{morvanPhaseTransitionsRandom2024, google2025quantumerrorcorrection, ransford2026qubittrappedionquantumcomputer, zhu2022quantumcomputationaladvantage}, rendering error mitigation overhead manageable. These observations suggest that our learning algorithms can be implemented in current quantum devices, robust to readout errors with only a modest additional sampling cost.

The rest of the paper is organized as follows. Section~\ref{sec:technical-overview} explains the main ideas behind both Algorithm~1 and 2. Section~\ref{sec:shared-tools} introduces the preliminary tools used throughout our analyses, and Section~\ref{sec:selected-entry-tomography} develops selected-entry tomography and the learning method for the states sparse in a known product basis. Section~\ref{sec:continuous-fixed-sparsity} and Section~\ref{sec:algorithm1-results} present the results and analyses of Algorithm~1 and Algorithm~2, respectively. Finally, Section~\ref{sec:sparsity-certification} extends Algorithm~1 to testing sparsity of arbitrary input states.

%% file: 02_technical_overview.tex
\section{Technical overview}
\label{sec:technical-overview}

Let $\rho$ be an $n$-qubit state and let $B=(B_1,\ldots,B_n)$ be a product basis in which $\rho$ is $s$-sparse. Here, $B_i=\{\lvert\phi_0^{(i)}\rangle,\lvert\phi_1^{(i)}\rangle\}$ is an orthonormal basis for qubit $i$. For a bitstring $x\in\{0,1\}^n$, we write $\lvert\phi_x\rangle=\bigotimes_{i=1}^n\lvert\phi_{x_i}^{(i)}\rangle$. We denote the matrix of $\rho$ in this basis by $[\rho]_B$, with entries $[\rho]_B(x,y):=\langle\phi_x\rvert\rho\lvert\phi_y\rangle$. Then the sparsity condition is $\bigl\lvert\operatorname{supp}([\rho]_B)\bigr\rvert\le s$. Since $\rho$ is positive semidefinite, we have
\begin{equation}
  \bigl\lvert[\rho]_B(x,y)\bigr\rvert^2
  \le [\rho]_B(x,x)[\rho]_B(y,y).
  \label{eq:overview-positivity}
\end{equation}
Therefore, if an off-diagonal matrix entry $[\rho]_B(x,y)$ is nonzero, then both diagonal entries $[\rho]_B(x,x)$ and $[\rho]_B(y,y)$ must also be nonzero. Consequently, with
\begin{equation}
  S = \{x\in\{0,1\}^n:[\rho]_B(x,x)>0\},
\end{equation}
the support of $[\rho]_B$ is confined to an $S\times S$ submatrix, where $S$ is a set of at most $s$ bitstrings.

If the product basis $B$ were known, the learning procedure would proceed as follows. Direct measurements in $B$ produce samples from the distribution $p_B(x)=[\rho]_B(x,x)$. With sufficiently many samples, we can find all bitstrings $x$ in $S$ except those whose probabilities are negligible. Once $S$ is found, we can estimate the submatrix of $[\rho]_B$ on $S\times S$ only using single-qubit measurements. For this purpose, we introduce selected-entry tomography, a simple method for estimating selected matrix entries of $\rho$ using randomized single-qubit measurements, building upon the previous works on selective quantum tomography~\cite{benderskySelectiveEfficientQuantum2013, morrisSelectiveQuantumState2019, patelSelectiveEfficientQuantum2026}.

To explain selected-entry tomography, let $B$ be the computational basis without loss of generality. Then, we have $\rho=\sum_{u,w\in\{0,1\}^n}[\rho]_B(u,w)\lvert u\rangle\langle w\rvert$, and the goal is to estimate $[\rho]_B(x,y)$ for a fixed pair of bitstrings $(x,y)$. To this end, we choose $z\in\{0,1\}^n$ uniformly at random and note that $\operatorname{Tr}[X^{x\oplus y}Z^z\rho]=\sum_{u}(-1)^{z\cdot u}[\rho]_B(u,u\oplus x\oplus y)$. Here, for a bitstring $a\in\{0,1\}^n$, we write $X^a=\bigotimes_{i=1}^nX^{a_i}$ and $Z^a=\bigotimes_{i=1}^nZ^{a_i}$, and $\oplus$ denotes the bitwise XOR. Since $\mathbb E_z[(-1)^{z\cdot(x\oplus u)}]=\mathbf 1[x=u]$, we have
\begin{equation}
  \mathbb E_z[(-1)^{z\cdot x}\operatorname{Tr}(X^{x\oplus y}Z^{z}\rho)] = [\rho]_B(x,y).
  \label{eq:overview-entry-estimator}
\end{equation}
Therefore, we can estimate $[\rho]_B(x,y)$ with an additive error by measuring $X^{x\oplus y}Z^z$ with randomly chosen $z$. This procedure only uses single-qubit measurements because $X^{x\oplus y}Z^z$ is proportional to a tensor product of single-qubit Pauli operators. In this way, we can estimate the submatrix of $[\rho]_B$ on $S\times S$ by repeating selected-entry tomography $|S|^2$ times. This concludes the learning procedure for the case where the product basis $B$ is known.

The learning problem is therefore reduced to finding a product basis in which the state has a sparse matrix representation, or a sufficiently accurate candidate for such a basis. Note that such product bases need not be unique, and any candidate product basis that permits the required sparse representation is sufficient. Next, we describe how our algorithms, Algorithm~1 and Algorithm~2, construct such a basis.

\subsection{Algorithm 1: low-order Pauli moments}
\label{sec:algorithm2-overview}

\begin{figure}
  \centering
  \includegraphics[width=0.9\linewidth]{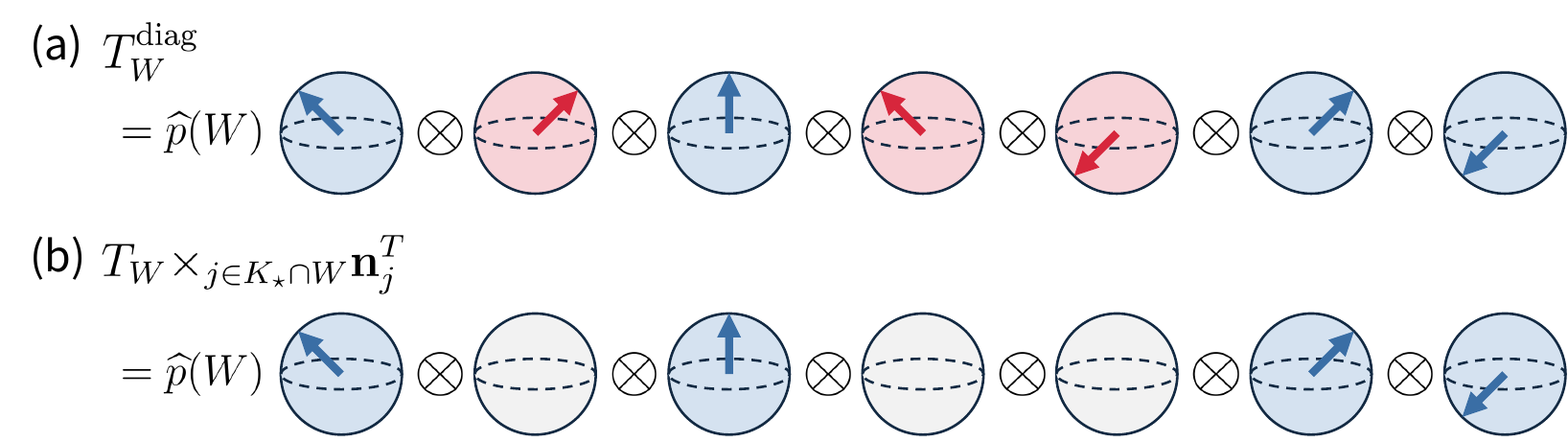}
  \caption{Overview of Algorithm 1. The red qubits are the ones in the difficult set $K_\star$, and the blue ones are the remaining qubits. (a) Without the contribution of the off-diagonal entries in $[\rho]_B$, the Pauli moment tensor is a tensor product of single-qubit basis vectors. (b) By contracting the Pauli moment tensor with the basis vectors of difficult qubits, off-diagonal contributions are canceled out and we can extract the basis vectors of the remaining qubits.
    \label{fig:low-order-correlations}}
\end{figure}

Algorithm~1 learns states that are sparse in an arbitrary unknown product basis, with polynomial sample complexity and classical runtime for a fixed sparsity $s$. Its central idea is to generate candidate local bases by measuring the low-order Pauli moments, which are expectation values of products of single-qubit Pauli operators. The algorithm then uses the selected-entry tomography to test each candidate and reconstruct the state in an accepted basis.

To explain how low-order Pauli expectation values supply candidate local bases, let $B=(B_1,\dots,B_n)$ with $B_i=\{\lvert\phi_0^{(i)}\rangle,\lvert\phi_1^{(i)}\rangle\}$ be the unknown product basis in which the input state $\rho$ is $s$-sparse. Let $p(x)=[\rho]_B(x,x)$ denote the diagonal entries of $\rho$ in basis $B$. Since $[\rho]_B$ has at most $s$ nonzero matrix entries, $p$ is supported on at most $s$ bitstrings. For $W\subseteq[n]$, we write the (unnormalized) Fourier coefficient of $p$ as
\begin{equation}
  \widehat{p}(W):=\sum_{x\in\{0,1\}^n} p(x)\prod_{i\in W}(-1)^{x_i}.
\end{equation}
We denote the Bloch vector of the basis state $\lvert\phi_0^{(i)}\rangle$ by $\mathbf n_i\in\mathbb R^3$ so that $\langle\phi_b^{(i)}\rvert\boldsymbol\sigma\lvert\phi_b^{(i)}\rangle=(-1)^b\mathbf n_i$, where $\boldsymbol\sigma=(X,Y,Z)$ is the vector of single-qubit Pauli matrices. For each nonempty $W\subseteq[n]$ and $\alpha=(\alpha_i)_{i\in W}\in\{X,Y,Z\}^W$, we define the \emph{Pauli moment tensor} $T_W$ by
\begin{equation}
  [T_W]_{\alpha}
  :=\operatorname{Tr}\!\left[
    \left(\bigotimes_{i\in W}\alpha_i\right)\rho
    \right],
  \label{eq:overview-product-pauli-tensor}
\end{equation}
Each entry of $T_W$ is the expectation value of a Pauli observable supported on $W$ and can therefore be estimated using single-qubit measurements.

If $\rho$ were diagonal in $B$, i.e., $\rho=\sum_x p(x)\lvert\phi_x\rangle\langle\phi_x\rvert$, then $T_W$ would be useful for learning the Bloch vectors $\mathbf n_i$ for $i\in W$. As $\langle\phi_b^{(i)}\rvert\boldsymbol\sigma\lvert\phi_b^{(i)}\rangle=(-1)^b\mathbf n_i$, the Pauli moment tensor of the diagonal state is given by
\begin{equation}
  T_W^{\mathrm{diag}}
  =\widehat p(W)\bigotimes_{i\in W}\mathbf n_i.
  \label{eq:overview-diagonal-factorization}
\end{equation}
Thus, whenever $\widehat p(W)\ne0$, $T_W^{\mathrm{diag}}$ is a rank-one tensor; see Figure~\ref{fig:low-order-correlations}(a). To extract the basis on a qubit $i\in W$, we view $T_W^{\mathrm{diag}}$ as a matrix with row index $\alpha_i$ and column index $(\alpha_j)_{j\in W\setminus\{i\}}$ and perform singular value decomposition (SVD). Its leading left singular vector is then $\mathbf n_i$, up to sign. Hence we can learn the basis on qubit $i$ provided that (i) there exists a set $W\subseteq[n]$ containing $i$ that is small enough for $T_W$ to be estimated, (ii) $\lvert\widehat p(W)\rvert$ is sufficiently large so that we can robustly extract $\mathbf n_i$ via SVD, and (iii) the contributions of off-diagonal entries of $[\rho]_B$ to $T_W$ can be removed.

We first show that the conditions (i) and (ii) hold for all qubits except a small number of them. Specifically, our Fourier analysis of sparse Boolean functions in Section~\ref{sec:sparse-distribution-fourier-analysis} shows that there exists a subset of qubits $L\subseteq[n]$ with $\lvert L\rvert = O(\log s)$ such that for any $i\notin L$, there exists $W\subseteq[n]$ with $i\in W$ satisfying
\begin{equation}
  \lvert W\rvert = O(\log s),
  \qquad
  \lvert\widehat p(W)\rvert>\frac{1}{4s^{\log_2s}}.
  \label{eq:overview-low-order-signal}
\end{equation}
Thus, except for at most $O(\log s)$ qubits, every qubit appears in a small-size Pauli moment tensor with a non-negligible Fourier coefficient.

What remains is to remove the contributions of off-diagonal entries of $[\rho]_B$ to $T_W$. To this end, we define the flip set by $F(x,y):=\{i\in[n]:x_i\ne y_i\}$ for a nonzero matrix entry $[\rho]_B(x,y)$ with $x\ne y$. This entry can contribute to $T_W$ only if $F(x,y)\subseteq W$; otherwise, the contribution of $[\rho]_B(x,y)$ in Eq.~\eqref{eq:overview-product-pauli-tensor} is traced out. Therefore, for $T_W$ with $\lvert W\rvert=O(\log s)$, it suffices to remove the off-diagonal entries $[\rho]_B(x,y)$ whose nonempty flip sets $F(x,y)$ are contained in $W$.

Let $\mathcal F:=\{F(x,y):[\rho]_B(x,y)\ne0,\ x\ne y\}$ be the set of distinct nonempty flip sets arising from nonzero off-diagonal entries. For each $F\in\mathcal F$, choose one representative qubit $j_F\in F$ and define $J_\star:=\{j_F:F\in\mathcal F\}$. Since each distinct flip set arises from one of the at most $s$ nonzero matrix entries, $\lvert\mathcal F\rvert\le s$ and hence $\lvert J_\star\rvert\le s$. By construction, $J_\star$ intersects the flip set of every nonzero off-diagonal entry.

Then, the set $J_\star$ removes the off-diagonal contributions as follows. Given a nonzero off-diagonal entry $[\rho]_B(x,y)$ with $x\ne y$ contributing to $T_W$, we have $F(x,y)\subseteq W$, so there exists $j_\star \in J_\star\cap W$. Then, we contract $T_W$ with $\mathbf n_{j_\star}$:
\begin{equation}
  \left[T_W \times \mathbf n_{j_\star}^T\right]_{\alpha_{W\setminus\{j_\star\}}}
  := \sum_{\alpha_{j_\star}} [T_W]_{\alpha} [\mathbf n_{j_\star}]_{\alpha_{j_\star}}.
  \label{eq:overview-single-contraction}
\end{equation}
Since $\mathbf n_j\cdot\boldsymbol\sigma = \lvert\phi_{0}^{(j)}\rangle\!\langle\phi_{0}^{(j)}\rvert - \lvert\phi_{1}^{(j)}\rangle\!\langle\phi_{1}^{(j)}\rvert$ for all $j$, we have $\langle\phi_{y}\rvert \mathbf n_{j_\star}\cdot\boldsymbol\sigma \lvert\phi_{x}\rangle=0$
as $x_{j_\star} \ne y_{j_\star}$. Contraction in Eq.~\eqref{eq:overview-single-contraction} therefore annihilates the contribution of the off-diagonal term $[\rho]_B(x,y)\lvert\phi_{x}\rangle\!\langle\phi_{y}\rvert$ in $T_W$ of Eq.~\eqref{eq:overview-product-pauli-tensor}. Thus, contracting $T_W$ against $\mathbf n_j$ for every $j\in J_\star\cap W$ removes all off-diagonal contributions.

Therefore, all three conditions for learning local bases are satisfied for all qubits except those in $L$ and $J_\star$. Combining $L$ and $J_\star$, we define the set $K_\star$ of \emph{difficult qubits}:
\begin{equation}
  K_\star:=L\cup J_\star,
  \qquad
  \lvert K_\star\rvert\le s+O(\log_2s).
  \label{eq:overview-automatic-enumeration-set}
\end{equation}
Then, for any $i\notin K_\star$, there exists $W$ such that $i\in W$ and
\begin{equation}
  T_W\bigtimes_{j\in K_\star\cap W}\mathbf n_j^T
  =\widehat p(W)
  \bigotimes_{k\in W\setminus K_\star}\mathbf n_k,
  \label{eq:overview-exact-filtering}
\end{equation}
with $\lvert W\rvert=O(\log s)$ and $\lvert\widehat p(W)\rvert>1/(4s^{\log_2s})$; see Figure~\ref{fig:low-order-correlations}. Consequently, given the local basis vectors $\mathbf n_j$ for all $j\in K_\star$, one can learn the bases of all remaining qubits $i\notin K_\star$ by performing an SVD of the tensor in Eq.~\eqref{eq:overview-exact-filtering}.

We do not a priori know which qubits form the difficult set $K_\star$ or their local bases. We therefore guess them by brute force, trying every qubit set $K$ of size at most $s+O(\log_2s)$ together with epsilon-net candidates for the local bases on those qubits $(\mathbf n_j)_{j\in K}$. For each guess of $K$ and associated local bases, we contract the low-order Pauli moment tensors using those local bases to find the basis $\mathbf n_i$ for all $i\notin K$ via SVD as described above. In this way, we generate a list of candidate bases $B'$ in which $\rho$ may be $s$-sparse.

While most of these candidate bases are incorrect, at least one candidate will have $K=K_\star$ and sufficiently accurate local bases of qubits in $K_\star$; thus, there is at least one correct basis in the candidate list. Algorithm~1 then tests if each basis $B'$ makes $[\rho]_{B'}$ $s$-sparse: By performing the selected-entry tomography with that basis, it estimates $[\rho]_{B'}$ and counts the number of significant entries. If the number of significant entries is at most $s$, then the basis $B'$ is accepted; otherwise, it is rejected and the algorithm moves on to the next candidate. Finally, after finding a correct basis, Algorithm~1 outputs an $s$-sparse operator $\widehat \rho$ that is close to the input state $\rho$.

Here, the enumeration for finding the difficult qubit set contributes $n^{O(s)}$ choices, and the epsilon-net and low-order Pauli-moment tensors also have polynomial size for every fixed $s$. Therefore, both the sample and computational complexities are $n^{O(s)}$, which is polynomial for fixed $s$.

\subsection{Algorithm 2: tree-structured merging}
\label{sec:algorithm1-overview}

\begin{figure}
  \centering
  \includegraphics[width=0.9\linewidth]{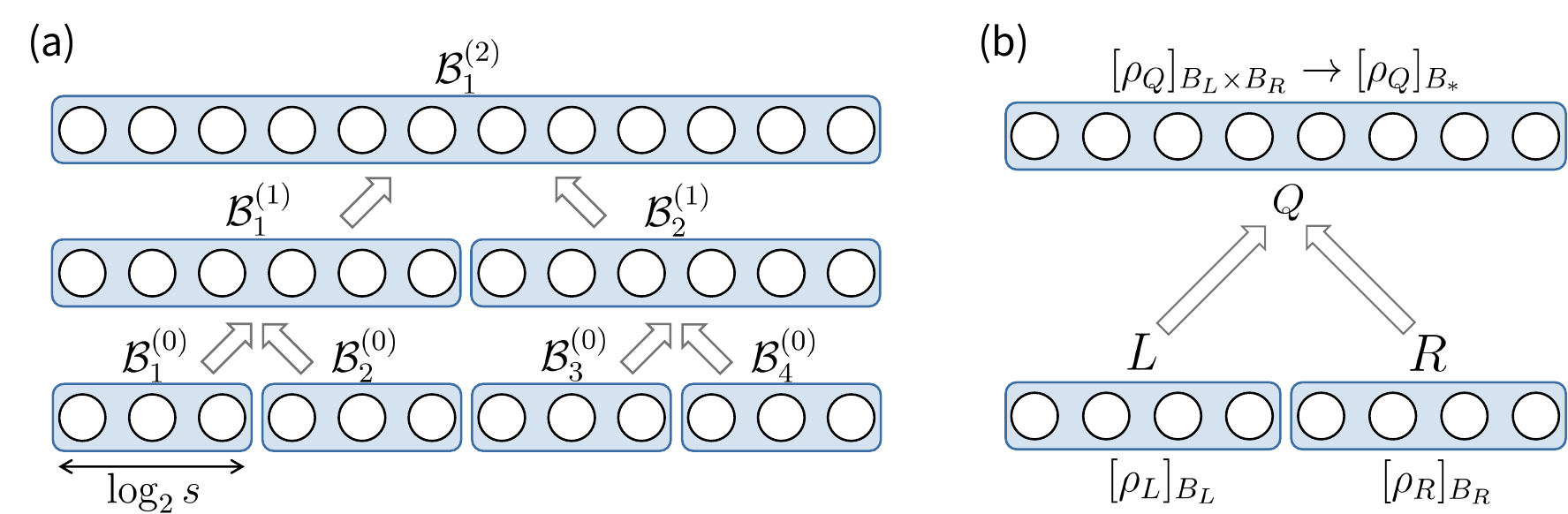}
  \caption{Overview of Algorithm~2. (a) Tree structure of Algorithm~2: starting from the leaves $\mathcal B_j^{(0)}$, each parent block is formed by merging two adjacent child blocks. (b) Each child block $L$ and $R$ retains a subspace spanned by at most $s$ basis vectors in $B_L$ and $B_R$, respectively. Their tensor product spans at most $s^2$ basis vectors in $B_L \times B_R$, so selected-entry tomography estimates at most $s^4$ matrix entries. The algorithm then finds a basis $B_*$ and retains at most $s$ basis vectors.\label{fig:tree-merging}}
\end{figure}

Algorithm~1 has polynomial sample complexity and classical runtime for fixed $s$, but it becomes quickly impractical when $s$ grows. Complementing this shortcoming, we present Algorithm~2 which achieves polynomial sample complexity in both $n$ and $s$. Its classical runtime depends on the basis $B$ in which the input state $\rho$ is $s$-sparse: it is super-exponential if $B$ can be an arbitrary product basis, and becomes quasipolynomial if $B$ is a product of single-qubit bases drawn from a fixed finite set, when $s$ grows polynomially with $n$.

Algorithm~2 first divides the $n$ qubits into $K=O(n/\log s)$ disjoint blocks $\mathcal B_j^{(0)}$, for $j=1,\ldots,K$, each containing at most $\lfloor\log_2s\rfloor$ qubits. Here, we take $s\ge2$ without loss of generality. It then recursively merges adjacent pairs: for $d=0,1,\ldots,D-1$ with $D=O(\log n)$, blocks $\mathcal B_{2j-1}^{(d)}$ and $\mathcal B_{2j}^{(d)}$ are combined into $\mathcal B_j^{(d+1)}$, i.e., $\mathcal B_j^{(d+1)}=\mathcal B_{2j-1}^{(d)}\sqcup\mathcal B_{2j}^{(d)}$. At level $D$, the single block $\mathcal B_1^{(D)}$ contains all $n$ qubits; see Figure~\ref{fig:tree-merging}(a). As will be shown, Algorithm~2 recursively learns the reduced density matrices of $\rho$ on the blocks on each level by selected-entry tomography using the previous levels' outcomes, proceeding toward level $D$. The central observation is that sparsity is inherited by reduced density matrices. If $\rho$ has at most $s$ nonzero matrix entries in a product basis $B$, then every reduced density matrix on $Q \subseteq[n]$, $\rho_Q=\operatorname{Tr}_{Q^c}(\rho)$ has at most $s$ nonzero entries in the restriction of $B$ to $Q$. Indeed, the partial trace discards some entries and sums others together, so it cannot increase their number. Thus, each block admits an $s$-sparse representation.

With that inherited sparsity, Algorithm~2 proceeds as follows. Initially, it sets $B_j^{(0)}$ to be the computational basis on each $\mathcal B_j^{(0)}$. Note that for $S_{\mathcal B_j^{(0)}} = \{x:[\rho_{\mathcal B_j^{(0)}}]_{B_j^{(0)}}(x,x)>0\}$, we have $|S_{\mathcal B_j^{(0)}}| \le s$, as the Hilbert-space dimension of each leaf block is at most $s$. Now, consider merging two child blocks $L=\mathcal B_{2j-1}^{(d)}$ and $R=\mathcal B_{2j}^{(d)}$ to form a parent block $Q=\mathcal B_j^{(d+1)}$. Suppose that $\rho_L$ and $\rho_R$ are supported on at most $s$ vectors in $B_L$ and $B_R$, respectively. Specifically, let $S_L:=\{x:[\rho_L]_{B_L}(x,x)>0\}$ and define $S_R$ analogously. Then $\lvert S_L\rvert,\lvert S_R\rvert\le s$. Then, $\rho_Q$ is supported on at most $s^2$ vectors in the product basis $B_L\times B_R$. In particular, we have
\begin{equation}
  \operatorname{supp}([\rho_Q]_{B_L \times B_R})
  \subseteq (S_L\times S_R)\times(S_L\times S_R).
  \label{eq:overview-tree-merge}
\end{equation}
Therefore, $\rho_Q$ is confined to a submatrix with at most $s^4$ entries in the basis $B_L\times B_R$. With that basis $B_L\times B_R$, we learn $\rho_Q$ by performing selected-entry tomography that estimates the submatrix of $[\rho_Q]_{B_L\times B_R}$ on $S_L\times S_R$ with at most $s^4$ matrix entries.

In this way, we can learn the reduced density matrices of $\rho$ on all blocks at level $d+1$ using the bases and retained subspaces of the child blocks at level $d$. However, if we naively repeat this procedure, the sparsity will rapidly increase with the level $d$, as it is raised to the fourth power at each merge. To prevent this, we compact the $s^4$-entry representation of $\rho_Q$ by classically searching for a product basis $B_*$ that makes $[\rho_Q]_{B_*}$ supported on at most $s$ basis vectors. Note that there exists such a basis $B_*$ as $\rho_Q$ inherits the sparsity of $\rho$. After repeating the merges until level $D$, final reconstruction searches for a basis and truncates to at most $s$ matrix entries; see Figure~\ref{fig:tree-merging}(b).

Although the sample complexity of this procedure remains polynomial, the classical computational challenge lies in finding a product basis $B_*$ and a subspace spanned by at most $s$ of its basis vectors at each merge. In the general case where the product basis in which $\rho$ is sparse is arbitrary, Algorithm~2 searches for $B_*$ by discretizing each local basis using a $\poly(\varepsilon/n)$-net and trying all possible product bases. This results in a large search space size, leading to $(ns/\varepsilon)^{O(n)}$ classical runtime.

However, the overhead of finding such a basis $B_*$ is greatly reduced when the unknown basis $B$ consists of a fixed, known finite set $\mathcal A$, i.e., $B_i\in\mathcal A$ for every qubit $i$ and thus $B\in\mathcal A^n$. In this case, the naive search space size is still exponential in $n$, $|\mathcal A|^n$. Nevertheless, we can reduce the search space by using the Donoho--Stark uncertainty principle \cite{donohoUncertaintyPrinciplesSignal1989,eladGeneralizedUncertaintyPrinciple2002,boggiattoTwoAspectsDonoho2016}, which was originally developed in the context of classical signal processing. In its matrix form, the Donoho--Stark uncertainty principle states the following: let $M$ and $N$ be non-zero $s$-sparse and $t$-sparse matrices, respectively, and suppose that $N=VMV^\dagger$ for a unitary matrix $V$. Then, we have
\begin{equation}
  \sqrt{st}\ge \frac{1}{\lVert V\rVert_{\max}^2},
  \label{eq:overview-matrix-donoho-stark}
\end{equation}
where $\lVert V\rVert_{\max}:=\max_{a,b}\lvert V_{ab}\rvert$.
As an illustrative example, take $\mathcal A$ to be the eigenbases of the Pauli operators $X$, $Y$, and $Z$. For a parent block $Q$ at one tree merge, consider that $N=[\rho_Q]_{B'}$ has at most $s^4$ entries in the current Pauli product basis $B'$, and let $M=[\rho_Q]_B$ be the compressed matrix with at most $s$ entries in another Pauli product basis $B$. Then $N=VMV^\dagger$, where $V=\bigotimes_{i\in Q}V_i$ and each $V_i$ is a single-qubit Clifford gate mapping $B_i$ to $B'_i$. If $B$ and $B'$ differ on $h$ qubits, we have
\begin{equation}
  \lVert V\rVert_{\max}=\prod_{i\in Q}\lVert V_i\rVert_{\max}=2^{-h/2}.
\end{equation}
The Donoho--Stark uncertainty principle then gives \(h\le \frac52\log_2s\), so the two product bases differ on only $O(\log s)$ qubits. Therefore, instead of searching all of $\mathcal A^{\lvert Q\rvert}$, we can enumerate only those product bases that differ from $B'$ on at most $O(\log s)$ qubits. Then, the number of possible product bases is at most $\lvert Q\rvert^{O(\log s)}$, and it leads to the $n^{O(\log s)}$ classical runtime of Algorithm~2. The same reasoning applies to the case of any fixed finite set $\mathcal A$; see Section~\ref{sec:algorithm1-results}.

The Donoho--Stark uncertainty principle does not, however, yield a useful reduction of the search space for arbitrary local bases. In that setting, the finite set $\mathcal A$ is replaced by an epsilon-net with inverse-polynomial resolution. Two bases in such a net can have a change-of-basis unitary satisfying $\lVert V_i\rVert_{\max}=1-1/\operatorname{poly}(n)$, rather than being bounded away from one. Consequently, $\lVert V\rVert_{\max}$ decays too slowly with the number $h$ of changed local bases, and Eq.~\eqref{eq:overview-matrix-donoho-stark} gives only $h\le\operatorname{poly}(n)\log s$. This bound can exceed $n$ and is therefore vacuous, leaving the search space exponentially large.

In these ways, both algorithms combine selected-entry tomography with their respective approaches to find an appropriate product basis that makes the input state $\rho$ sparse. So far, however, we have neglected the errors from the intermediate observable estimation and selected-entry tomography. Therefore, what remains is to show that both algorithms are robust to such estimation errors. The central challenge is that both algorithms are adaptive, so the estimation errors in the previous steps can propagate through the adaptive steps. In the rest of the paper, we carefully analyze such error propagation in both algorithms, and show that the final output $\widehat\rho$ is indeed close to the input state $\rho$ in trace distance with high probability.

%% file: 03_preliminaries_shared_tools.tex
\section{Preliminaries}
\label{sec:shared-tools}

A single-qubit orthonormal basis $\{\lvert \phi_0\rangle,\lvert \phi_1\rangle\}$ is represented, up to sign, by a unit Bloch vector $\mathbf n\in\mathbb R^3$. Denoting $\boldsymbol{\sigma}=(X,Y,Z)$, the corresponding basis states satisfy
\begin{equation}
  \lvert\phi_b\rangle\!\langle\phi_b\rvert
  =\frac{I+(-1)^b\mathbf n\cdot\boldsymbol{\sigma}}{2},
  \qquad b\in\{0,1\}.
  \label{eq:basis-bloch-vector}
\end{equation}
Note that $\mathbf n$ and $-\mathbf n$ represent the same basis with the two labels 0 and 1 swapped, which does not affect the sparsity structure of a state. We therefore define the label-invariant angular distance between Bloch axes by
\begin{equation}
  d_{\mathrm{ang}}(\mathbf n,\mathbf n')
  :=\arccos\left\lvert\mathbf n\cdot\mathbf n'\right\rvert.
  \label{eq:angular-distance}
\end{equation}
With this definition of angular distance, an epsilon-net for single-qubit bases is defined as follows:

\begin{definition}[Single-qubit epsilon-net]
  \label{def:single-qubit-epsilon-net}
  A finite set of unit Bloch vectors, denoted by $\mathcal N(\varepsilon)$, is a single-qubit $\varepsilon$-net if for any unit vector $\mathbf n \in \mathbb R^3$, there exists $\mathbf n' \in \mathcal N(\varepsilon)$ such that $\mathbf n' \ne \pm \mathbf n$ and $d_{\mathrm{ang}}(\mathbf n,\mathbf n')\le\varepsilon$.
\end{definition}
An elementary sphere-covering argument shows that such a net can be chosen with
\begin{equation}
  \lvert\mathcal N(\varepsilon)\rvert
  \le (C/\varepsilon)^2
\end{equation}
for a universal constant $C$. For a subset of qubits $Q\subseteq[n]$, we use $\mathcal N(\varepsilon)^Q$ to denote the set of all possible basis assignments on qubits in $Q$. Thus, for every product basis on $Q$, the product net contains a basis whose local axes are within angular distance $\varepsilon$ on every qubit.

We now introduce the \emph{leakage error}, the approximation error caused by discarding components outside a subspace. Specifically, if $A$ is an orthogonal projector, then its leakage error is defined as $\eta$ with $\operatorname{Tr}[(I-A)\rho] \le \eta$.

\begin{lemma}[Leakage error in Frobenius distance]
  \label{lem:projection-error}
  Let $\rho$ be a quantum state and $A$ an orthogonal projector. If
  $\operatorname{Tr}[(I-A)\rho]\le\eta$ for some $0 \le \eta \le 1$, then
  \begin{equation}
    \lVert\rho-A\rho A\rVert_F^2\le2\eta.
    \label{eq:projection-error}
  \end{equation}
\end{lemma}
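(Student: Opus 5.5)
Write $A^\perp := I - A$ for the complementary projector. The plan is to split $\rho - A\rho A$ into blocks with respect to the decomposition $I = A + A^\perp$, compute the Frobenius norm of each block, and bound each one using positivity of $\rho$ and the leakage condition $\operatorname{Tr}[A^\perp\rho]\le\eta$.

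Expanding $\rho = (A+A^\perp)\rho(A+A^\perp)$ gives
\begin{equation}
  \rho - A\rho A = A\rho A^\perp + A^\perp\rho A + A^\perp\rho A^\perp .
\end{equation}
The three terms are mutually orthogonal in the Hilbert--Schmidt inner product, because $A A^\perp = 0$ kills every cross trace. Hence
\begin{equation}
  \lVert\rho-A\rho A\rVert_F^2 = 2\lVert A\rho A^\perp\rVert_F^2 + \lVert A^\perp\rho A^\perp\rVert_F^2 .
\end{equation}

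For the off-diagonal blocks, write $\lVert A\rho A^\perp\rVert_F^2=\operatorname{Tr}[A^\perp\rho A\rho A^\perp]$. Since $A\le I$, we have $\rho A\rho\le\rho^2$, so this trace is at most $\operatorname{Tr}[A^\perp\rho^2 A^\perp]$. Using $\rho^2\le\rho$ (valid because $0\le\rho\le I$), it is further bounded by $\operatorname{Tr}[A^\perp\rho]\le\eta$. This crude bound would give $2\eta+\eta^2$ in total, which is slightly too large, so the cross term needs to be handled more carefully.

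The sharper approach is to set $a=\operatorname{Tr}[A\rho]$ and $b=\operatorname{Tr}[A^\perp\rho]\le\eta$, so that $a+b=1$. We then use $\operatorname{Tr}[A^\perp\rho A\rho A^\perp]\le\lVert A\rho A\rVert_{\mathrm{op}}\operatorname{Tr}[A^\perp\rho A^\perp]$. To justify this, write $\rho=\sqrt\rho\sqrt\rho$; then $A^\perp\rho A\rho A^\perp = (A^\perp\sqrt\rho)(\sqrt\rho A\sqrt\rho)(\sqrt\rho A^\perp)$, and $\lVert\sqrt\rho A\sqrt\rho\rVert_{\mathrm{op}}=\lVert A\rho A\rVert_{\mathrm{op}}\le a$. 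This bounds each cross block by $ab$. For the last block, $\lVert A^\perp\rho A^\perp\rVert_F^2\le(\operatorname{Tr}A^\perp\rho A^\perp)^2=b^2$. Combining,
\begin{equation}
  \lVert\rho-A\rho A\rVert_F^2\le 2ab+b^2 = b(2a+b) = b(2-b)\le 2b\le 2\eta .
\end{equation}

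The main obstacle is the cross-term estimate, namely showing $\lVert A\rho A^\perp\rVert_F^2\le \operatorname{Tr}[A\rho]\,\operatorname{Tr}[A^\perp\rho]$. The naive bound through $\rho^2\le\rho$ loses the factor $a$ and only yields $2\eta+\eta^2$. The $\sqrt\rho$ sandwich together with the operator-norm bound $\lVert\sqrt\rho A\sqrt\rho\rVert_{\mathrm{op}}\le\operatorname{Tr}[A\rho]$ (an operator norm of a PSD matrix is at most its trace) is what closes the gap. Everything else is a routine expansion.
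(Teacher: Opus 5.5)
Your proof is correct and follows the paper's route: the Hilbert--Schmidt-orthogonal three-block splitting $\lVert\rho-A\rho A\rVert_F^2=2\lVert A\rho A^\perp\rVert_F^2+\lVert A^\perp\rho A^\perp\rVert_F^2$, the bound $\lVert A^\perp\rho A^\perp\rVert_F^2\le b^2$, and the closing estimate $2ab+b^2\le 2b\le 2\eta$. The only difference is how you prove the cross-term bound $\lVert A\rho A^\perp\rVert_F^2\le ab$: you use the $\sqrt\rho$ sandwich together with $\lVert A\rho A\rVert_{\mathrm{op}}\le\operatorname{Tr}[A\rho]$, whereas the paper sums the $2\times 2$ principal-minor inequalities $\lvert\langle u_i\rvert C\lvert v_j\rangle\rvert^2\le\langle u_i\rvert X\lvert u_i\rangle\langle v_j\rvert Z\lvert v_j\rangle$ over orthonormal bases of $\operatorname{Im}(A)$ and $\operatorname{Im}(I-A)$, and the two arguments are equally short.
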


\begin{proof}
  Let the actual leakage be
  \begin{equation}
    \eta':=\operatorname{Tr}[(I-A)\rho]\le\eta.
    \label{eq:projection-error-actual-leakage}
  \end{equation}
  Relative to $\operatorname{Im}(A)\oplus\operatorname{Im}(I-A)$, write
  \begin{equation}
    \rho=
    \begin{pmatrix}
      X         & C \\
      C^\dagger & Z
    \end{pmatrix},
    \qquad
    \operatorname{Tr}(X)=1-\eta',
    \qquad
    \operatorname{Tr}(Z)=\eta'.
    \label{eq:projection-error-block-decomposition}
  \end{equation}
  Choose orthonormal bases $\{u_i\}$ and $\{v_j\}$ for $\operatorname{Im}(A)$ and $\operatorname{Im}(I-A)$, respectively. Since $\rho$ is positive semidefinite, we have
  \begin{equation}
    \lvert\langle u_i\rvert C\lvert v_j\rangle\rvert^2
    \le
    \langle u_i\rvert X\lvert u_i\rangle
    \langle v_j\rvert Z\lvert v_j\rangle,
  \end{equation}
  as the $2\times 2$ submatrix of $\rho$ on $u_i,v_j$ is also positive semidefinite. Summing this inequality over $i$ and $j$ yields
  \begin{equation}
    \lVert C\rVert_F^2
    \le\operatorname{Tr}(X)\operatorname{Tr}(Z)
    =\eta'(1-\eta').
    \label{eq:projection-error-cross-block}
  \end{equation}
  Since $Z\succeq0$, we also have
  \begin{equation}
    \lVert Z\rVert_F^2\le(\operatorname{Tr}Z)^2=(\eta')^2.
    \label{eq:projection-error-omitted-block}
  \end{equation}
  Substituting Eqs.~\eqref{eq:projection-error-cross-block} and~\eqref{eq:projection-error-omitted-block}
  therefore yields
  \begin{equation}
    \begin{aligned}
      \lVert\rho-A\rho A\rVert_F^2
       & =2\lVert C\rVert_F^2+\lVert Z\rVert_F^2 \\
       & \le2\eta'(1-\eta')+(\eta')^2
      \le2\eta'
      \le2\eta.
    \end{aligned}
  \end{equation}
\end{proof}

\begin{lemma}[Leakage error in trace distance]
  \label{lem:projection-error-trace}
  Let $\rho$ be a quantum state and $A$ an orthogonal projector. If $\operatorname{Tr}[(I-A)\rho]\le\eta$ for some $0\le\eta\le1$, then
  \begin{equation}
    \lVert\rho-A\rho A\rVert_1\le2\sqrt\eta.
    \label{eq:projection-error-trace}
  \end{equation}
\end{lemma}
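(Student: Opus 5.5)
We reuse the block decomposition from the proof of Lemma~\ref{lem:projection-error}, with $\eta':=\operatorname{Tr}[(I-A)\rho]\le\eta$. Relative to $\operatorname{Im}(A)\oplus\operatorname{Im}(I-A)$ we write
\begin{equation}
  \rho-A\rho A=
  \begin{pmatrix}
    0 & C \\
    C^\dagger & Z
  \end{pmatrix}.
\end{equation}
The plan is to bound the trace norm of the two pieces separately with the triangle inequality. One piece is the block-diagonal part $\operatorname{diag}(0,Z)$. The other is the off-diagonal Hermitian part $\begin{pmatrix}0&C\\C^\dagger&0\end{pmatrix}$.

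The diagonal piece is easy. Since $Z\succeq0$, its trace norm is $\operatorname{Tr}Z=\eta'$.

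For the off-diagonal piece, the singular values of $\begin{pmatrix}0&C\\C^\dagger&0\end{pmatrix}$ are those of $C$, each appearing twice, so its trace norm is $2\lVert C\rVert_1$. To bound $\lVert C\rVert_1$ we use positivity of $\rho$ through the standard factorization $C=X^{1/2}KZ^{1/2}$ with $\lVert K\rVert_\infty\le1$. Hölder's inequality then gives
\begin{equation}
  \lVert C\rVert_1\le\lVert X^{1/2}\rVert_2\,\lVert Z^{1/2}\rVert_2=\sqrt{\operatorname{Tr}X\,\operatorname{Tr}Z}=\sqrt{\eta'(1-\eta')}.
\end{equation}
This is the only delicate step. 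The Frobenius bound $\lVert C\rVert_F^2\le\eta'(1-\eta')$ from Lemma~\ref{lem:projection-error} is not enough on its own, because converting it to trace norm would introduce a rank factor. If citing the factorization is undesirable, an alternative is to purify $\rho=\operatorname{Tr}_E\lvert\psi\rangle\langle\psi\rvert$. Then $C=\operatorname{Tr}_E[(A\otimes I)\lvert\psi\rangle\langle\psi\rvert((I-A)\otimes I)]$, and since partial trace is trace-norm contractive,
\begin{equation}
  \lVert C\rVert_1\le\lVert(A\otimes I)\psi\rVert\,\lVert((I-A)\otimes I)\psi\rVert=\sqrt{(1-\eta')\eta'}.
\end{equation}

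Combining the two pieces gives
\begin{equation}
  \lVert\rho-A\rho A\rVert_1\le2\sqrt{\eta'(1-\eta')}+\eta'.
\end{equation}
It remains to show $2\sqrt{\eta'(1-\eta')}+\eta'\le2\sqrt{\eta'}$ for $\eta'\in[0,1]$. Write $t=\sqrt{\eta'}\in[0,1]$. The inequality becomes $2\sqrt{1-t^2}+t\le2$, i.e. $4(1-t^2)\le(2-t)^2$, which reduces to $5t^2-4t\ge0$. This fails for $t<4/5$, so the naive triangle split is too lossy and the two pieces must be handled jointly.

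A cleaner route is the purification together with the fact that the trace norm of a rank-two difference of pure states is computable exactly. Let $\lvert\psi\rangle$ be a purification and $\lvert\psi_A\rangle=(A\otimes I)\lvert\psi\rangle$. Then $\rho-A\rho A=\operatorname{Tr}_E(\lvert\psi\rangle\langle\psi\rvert-\lvert\psi_A\rangle\langle\psi_A\rvert)$, and contractivity of the partial trace bounds $\lVert\rho-A\rho A\rVert_1$ by $\lVert\psi\psi^\dagger-\psi_A\psi_A^\dagger\rVert_1$. For vectors $u,v$ we have
\begin{equation}
  \lVert uu^\dagger-vv^\dagger\rVert_1=\sqrt{(\lVert u\rVert^2+\lVert v\rVert^2)^2-4\lvert\langle u,v\rangle\rvert^2}.
\end{equation}
With $\lVert\psi\rVert^2=1$ and $\lVert\psi_A\rVert^2=\langle\psi,\psi_A\rangle=1-\eta'$, this equals
\begin{equation}
  \sqrt{(2-\eta')^2-4(1-\eta')^2}=\sqrt{\eta'(4-3\eta')}\le2\sqrt{\eta'}\le2\sqrt\eta.
\end{equation}
The main step to verify is this rank-two trace-norm formula. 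It follows from the trace $\lVert u\rVert^2-\lVert v\rVert^2$ and the determinant $-(\lVert u\rVert^2\lVert v\rVert^2-\lvert\langle u,v\rangle\rvert^2)$ on $\operatorname{span}\{u,v\rangle$, which give the two nonzero eigenvalues $\lambda_\pm$ with
\begin{equation}
  \lvert\lambda_+\rvert+\lvert\lambda_-\rvert=\sqrt{(\lambda_+-\lambda_-)^2}.
\end{equation}
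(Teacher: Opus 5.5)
Your final argument is correct and is essentially the paper's proof: purify $\rho$, compare $\lvert\psi\rangle$ with $(A\otimes I)\lvert\psi\rangle$, and use that the trace norm cannot increase under partial trace; the block-decomposition detour can simply be dropped. The only difference is the last step, where the paper avoids your exact rank-two eigenvalue computation and instead writes $\lvert\psi\rangle\langle\psi\rvert-\lvert\varphi\rangle\langle\varphi\rvert=(\lvert\psi\rangle-\lvert\varphi\rangle)\langle\psi\rvert+\lvert\varphi\rangle(\langle\psi\rvert-\langle\varphi\rvert)$ and applies the triangle inequality with $\lVert\lvert a\rangle\langle b\rvert\rVert_1=\lVert a\rVert_2\lVert b\rVert_2$, giving $(1+\sqrt{1-\eta'})\sqrt{\eta'}\le2\sqrt{\eta'}$; this is slightly weaker than your exact value $\sqrt{\eta'(4-3\eta')}$ but needs no spectral calculation.
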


\begin{proof}
  Again, let the actual leakage be
  \begin{equation}
    \eta':=\operatorname{Tr}[(I-A)\rho]\le\eta.
  \end{equation}
  Take a purification $\lvert\psi\rangle$ of $\rho$, and set $\lvert\varphi\rangle:=(A\otimes I)\lvert\psi\rangle$. Then, we have
  \begin{equation}
    \lVert\lvert\psi\rangle-\lvert\varphi\rangle\rVert_2=\sqrt{\operatorname{Tr}[(I-A)\rho]} = \sqrt{\eta'}.
  \end{equation}
  Since trace distance cannot increase under partial trace, we have
  \begin{equation}
    \begin{aligned}
      \lVert\rho-A\rho A\rVert_1
      &\le\left\lVert\lvert\psi\rangle\!\langle\psi\rvert
      -\lvert\varphi\rangle\!\langle\varphi\rvert\right\rVert_1\\
      &=\left\lVert(\lvert\psi\rangle-\lvert\varphi\rangle)\langle\psi\rvert
      +\lvert\varphi\rangle(\langle\psi\rvert-\langle\varphi\rvert)\right\rVert_1\\
      &\le\bigl(\lVert\lvert\psi\rangle\rVert_2
      +\lVert\lvert\varphi\rangle\rVert_2\bigr)
      \lVert\lvert\psi\rangle-\lvert\varphi\rangle\rVert_2\\
      &\le2\sqrt{\eta'}\le2\sqrt\eta.
    \end{aligned}
  \end{equation}
  Here, we used the identity $\lVert\lvert \psi \rangle\!\langle \phi\rvert\rVert_1=\lVert \psi\rVert_2\lVert \phi\rVert_2$ and $\lVert\lvert\varphi\rangle\rVert_2\le1$.
\end{proof}

Thus leakage controls both the Frobenius and trace-norm errors due to projections. To propagate leakage error through a sequence of projectors, we also need the following inequality:

\begin{lemma}[Triangle inequality for leakage error]
  \label{lem:projection-leakage-triangle}
  Let $\rho$ be a state and $A,P$ orthogonal projectors. Then,
  \begin{equation}
    \sqrt{\operatorname{Tr}[(I-P)\rho]}
    \le
    \sqrt{\operatorname{Tr}[(I-A)\rho]}
    +\sqrt{\operatorname{Tr}[(I-P)A\rho A]}.
    \label{eq:projection-leakage-triangle}
  \end{equation}
\end{lemma}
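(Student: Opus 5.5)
The plan is to follow the purification argument of Lemma~\ref{lem:projection-error-trace}, which turns leakage into a squared Euclidean distance between vectors. After that, the inequality is just the ordinary triangle inequality in Hilbert space. Take a purification $\lvert\psi\rangle$ of $\rho$ on the system plus a reference. Write $\tilde A:=A\otimes I$ and $\tilde P:=P\otimes I$; both are orthogonal projectors on the enlarged space. For any orthogonal projector $\Pi$ on the system, the identity $(I-\Pi)^\dagger(I-\Pi)=I-\Pi$ gives
\begin{equation}
  \bigl\lVert (I-\tilde\Pi)\lvert\psi\rangle\bigr\rVert_2^2=\operatorname{Tr}[(I-\Pi)\rho].
\end{equation}
Applied to $\Pi=P$ and $\Pi=A$, this rewrites the left-hand side as $\lVert(I-\tilde P)\lvert\psi\rangle\rVert_2$ and the first term on the right as $\lVert(I-\tilde A)\lvert\psi\rangle\rVert_2$.

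For the second term on the right, set $\lvert\varphi\rangle:=\tilde A\lvert\psi\rangle$. Then $\lvert\varphi\rangle\langle\varphi\rvert$ is a (subnormalized) purification of $A\rho A$, since tracing out the reference gives $A\rho A$. Hence $\operatorname{Tr}[(I-P)A\rho A]=\lVert(I-\tilde P)\lvert\varphi\rangle\rVert_2^2$.

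Now decompose $\lvert\psi\rangle=\lvert\varphi\rangle+(I-\tilde A)\lvert\psi\rangle$ and apply $(I-\tilde P)$:
\begin{equation}
  (I-\tilde P)\lvert\psi\rangle=(I-\tilde P)\lvert\varphi\rangle+(I-\tilde P)(I-\tilde A)\lvert\psi\rangle.
\end{equation}
Take norms and use the triangle inequality. For the second summand, $I-\tilde P$ is a projector and so has operator norm at most $1$, which gives $\lVert(I-\tilde P)(I-\tilde A)\lvert\psi\rangle\rVert_2\le\lVert(I-\tilde A)\lvert\psi\rangle\rVert_2=\sqrt{\operatorname{Tr}[(I-A)\rho]}$. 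Combining this with the identities above yields exactly the claimed inequality.

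The only point needing care is that $A\rho A$ is subnormalized. This causes no trouble: the vector identity holds for any positive operator with the stated purification, and nowhere is $\lVert\lvert\varphi\rangle\rVert_2=1$ required. No additional assumptions such as commutation of $A$ and $P$ are needed, so no real obstacle is expected.
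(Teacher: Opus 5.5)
Your proposal is correct and is essentially the paper's proof: purify $\rho$, insert $I=A+(I-A)$ on $(I-P)\otimes I\lvert\psi\rangle$, apply the vector triangle inequality, and bound $\lVert(I-P)(I-A)\otimes I\lvert\psi\rangle\rVert_2\le\lVert(I-A)\otimes I\lvert\psi\rangle\rVert_2$. Your identification of $\operatorname{Tr}[(I-P)A\rho A]$ with $\lVert(I-\tilde P)\tilde A\lvert\psi\rangle\rVert_2^2$ matches the paper's as well.
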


\begin{proof}
  Let $\lvert\psi\rangle$ be the purification of $\rho$. Then, the leakage errorr are given by
  \begin{equation}
    \operatorname{Tr}[(I-P)\rho]
    =\lVert(I-P)\otimes I\lvert\psi\rangle\rVert_2^2 \quad \text{and}\quad
    \operatorname{Tr}[(I-A)\rho]
    =\lVert(I-A)\otimes I\lvert\psi\rangle\rVert_2^2.
  \end{equation}
  Similarly, we have
  \begin{equation}
    \lVert(I-P)A\otimes I\lvert\psi\rangle\rVert_2^2
    =\operatorname{Tr}[(I-P)A\rho A].
  \end{equation}
  Meanwhile, inserting the identity $I = (I-A)+A$ on $(I-P)\otimes I\lvert\psi\rangle$ gives:
  \begin{equation}
    (I-P)\otimes I\lvert\psi\rangle
    =(I-P)(I-A)\otimes I\lvert\psi\rangle
    +(I-P)A\otimes I\lvert\psi\rangle.
    \label{eq:projection-leakage-triangle-vector-decomposition}
  \end{equation}
  Applying the triangle inequality to Eq.~\eqref{eq:projection-leakage-triangle-vector-decomposition}, we have
  \begin{equation}
    \begin{aligned}
      \sqrt{\operatorname{Tr}[(I-P)\rho]}
      &=\lVert(I-P)\otimes I\lvert\psi\rangle\rVert_2\\
      &\le
      \lVert(I-P)(I-A)\otimes I\lvert\psi\rangle\rVert_2
      +\lVert(I-P)A\otimes I\lvert\psi\rangle\rVert_2\\
      &\le
      \lVert(I-A)\otimes I\lvert\psi\rangle\rVert_2
      +\lVert(I-P)A\otimes I\lvert\psi\rangle\rVert_2\\
      &=
      \sqrt{\operatorname{Tr}[(I-A)\rho]}
      +\sqrt{\operatorname{Tr}[(I-P)A\rho A]}.
    \end{aligned}
  \end{equation}
\end{proof}

%% file: 04_selected_entry_tomography.tex
\section{Selected-entry tomography}
\label{sec:selected-entry-tomography}

\subsection{Estimating selected entries and submatrices}
\label{sec:selected-entries-submatrices}

Selected-entry tomography is an essential tool in both Algorithm~1 and Algorithm~2. Given a known product basis $B=(B_1,\dots,B_n)$, it efficiently estimates the matrix element $[\rho]_B(x,y)$ of a selected entry $(x,y)$. Notably, selected-entry tomography achieves this goal only by performing single-qubit measurements. To be specific, denote $B=(B_1,\dots,B_n)$ with $B_i = \{|\phi^{(i)}_0\rangle, |\phi^{(i)}_1\rangle\}$ for each $i\in[n]$. Let $V_i$ be the single-qubit unitary that transforms computational basis states to $B_i$, i.e., $V_i|0\rangle=|\phi^{(i)}_0\rangle$ and $V_i|1\rangle=|\phi^{(i)}_1\rangle$.

Then, the selected-entry tomography proceeds as follows. We choose $z\in\{0,1\}^n$ uniformly at random, and measure the following observable:
\begin{equation}
  \bigotimes_{i\in[n]} V_i \left(i^{(x_i\oplus y_i)z_i}X^{x_i\oplus y_i}Z^{z_i}\right)V_i^\dagger.
  \label{eq:single-entry-tomography-pauli-observable}
\end{equation}
Note that $i^{(x_i\oplus y_i)z_i}X^{x_i\oplus y_i}Z^{z_i} \in \{I,X,Y,Z\}$, so this observable is a tensor product of single-qubit Pauli observables rotated by the $V_i$. Therefore, this measurement can be realized by single-qubit measurements, and we get a sample of $\pm1$ from each copy. Let $O^{(j)}(x,y;z^{(j)})$ be the $j$-th measurement outcome for $j=1,2,\dots,M$, where $z^{(j)}\in\{0,1\}^n$ is the randomly chosen $z$ for the $j$-th copy of $\rho$. Finally, we output
\begin{equation}
  [\widehat{\rho}]_B(x,y) := \frac{1}{M} \sum_{j=1}^M (-1)^{z^{(j)}\cdot x} i^{-(x\oplus y)\cdot z^{(j)}} O^{(j)}(x,y;z^{(j)}).
  \label{eq:single-entry-tomography-entry-estimator}
\end{equation}
The following theorem shows that this procedure efficiently estimates the matrix element $[\rho]_B(x,y)$.

\begin{theorem}[Selected-entry tomography]
  \label{thm:single-entry-tomography}
  Fix a product basis $B=(B_1,\dots,B_n)$ and $x,y\in\{0,1\}^n$. For an $n$-qubit state $\rho$, let $\rho_{xy}:=[\rho]_B(x,y)$. The selected-entry tomography outputs $\widehat\rho_{xy}$ such that $\lvert\widehat\rho_{xy}-\rho_{xy}\rvert\le\varepsilon$ with probability at least $1-\delta$, using at most $\frac{4}{\varepsilon^2}\log\frac{4}{\delta}$ copies of $\rho$.
\end{theorem}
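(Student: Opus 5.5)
The plan is to show that each summand in Eq.~\eqref{eq:single-entry-tomography-entry-estimator} is an unbiased estimator of $\rho_{xy}$ whose real and imaginary parts lie in $[-1,1]$. A Hoeffding bound applied separately to the real and imaginary parts then gives the sample complexity. Throughout, write $a:=x\oplus y$.

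\emph{Step 1 (reduction to the computational basis).} Set $V:=\bigotimes_i V_i$ and $\sigma:=V^\dagger\rho V$. Then $[\rho]_B(x,y)=\langle x\rvert\sigma\lvert y\rangle=:\sigma(x,y)$. Since phases multiply across the tensor product, the measured observable in Eq.~\eqref{eq:single-entry-tomography-pauli-observable} equals $V P_z V^\dagger$ with $P_z=i^{a\cdot z}X^aZ^z$. The operator $P_z$ is Hermitian with eigenvalues $\pm1$, being a tensor product of single-qubit Paulis. Hence each outcome satisfies $O^{(j)}\in\{\pm1\}$ and
\[
\mathbb E\bigl[O^{(j)}\,\big|\,z^{(j)}=z\bigr]=\operatorname{Tr}[P_z\sigma].
\]

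\emph{Step 2 (unbiasedness).} Compute the expectation directly. Using $\langle u\rvert X^a=\langle u\oplus a\rvert$ and the substitution $w=u\oplus a$,
\[
\operatorname{Tr}[X^aZ^z\sigma]=\sum_{u}(-1)^{z\cdot(u\oplus a)}\sigma(u\oplus a,u)=\sum_w(-1)^{z\cdot w}\sigma(w,w\oplus a).
\]
Multiplying by the correction factor $(-1)^{z\cdot x}i^{-a\cdot z}$ cancels the phase $i^{a\cdot z}$. Averaging over uniform $z$ and using $\mathbb E_z[(-1)^{z\cdot(x\oplus w)}]=\mathbf 1[w=x]$ gives $\sigma(x,x\oplus a)=\sigma(x,y)=\rho_{xy}$. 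This is Eq.~\eqref{eq:overview-entry-estimator}, extended to a general basis $B$. Each summand $\xi_j:=(-1)^{z^{(j)}\cdot x}i^{-a\cdot z^{(j)}}O^{(j)}$ lies in $\{\pm1,\pm i\}$. The $\xi_j$ are i.i.d.\ across copies, because the $z^{(j)}$ are independent and the copies are independent.

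\emph{Step 3 (concentration).} Both $\operatorname{Re}\xi_j$ and $\operatorname{Im}\xi_j$ take values in $[-1,1]$. Apply Hoeffding's inequality to each with deviation $t=\varepsilon/\sqrt2$:
\[
\Pr\bigl[\lvert\operatorname{Re}\widehat\rho_{xy}-\operatorname{Re}\rho_{xy}\rvert\ge\varepsilon/\sqrt2\bigr]\le 2e^{-M\varepsilon^2/4},
\]
and likewise for the imaginary part. Requiring each bound to be at most $\delta/2$ gives $M\ge\frac{4}{\varepsilon^2}\log\frac4\delta$. A union bound over the two parts then yields $\lvert\widehat\rho_{xy}-\rho_{xy}\rvert\le\sqrt{2(\varepsilon/\sqrt2)^2}=\varepsilon$ with probability at least $1-\delta$.

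The only step needing care is the phase bookkeeping in Step 2. One must track the factor $i^{a\cdot z}$ that makes $P_z$ Hermitian, and check that it is exactly undone by $i^{-a\cdot z}$ in the estimator. The remaining steps are routine.
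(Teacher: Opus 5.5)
Your proof is correct and follows essentially the same route as the paper. You rotate to $V^\dagger\rho V$ and check unbiasedness via $\mathbb E_z[(-1)^{z\cdot(x\oplus w)}]=\mathbf 1[w=x]$, with the phase $i^{a\cdot z}$ undone by the estimator, and then apply two-sided Hoeffding to the real and imaginary parts at deviation $\varepsilon/\sqrt2$ with a union bound, giving $4e^{-M\varepsilon^2/4}\le\delta$. Your remark that the phases multiply to $i^{\sum_i a_i z_i}$, so that $a\cdot z$ must be read as an integer sum rather than mod $2$, is a useful clarification that the paper leaves implicit.
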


\begin{proof}
  Let $V_B:=\bigotimes_{i=1}^nV_i$ and $\rho_B:=V_B^\dagger\rho V_B$, so $[\rho]_B(x,y)=\langle x\rvert\rho_B\lvert y\rangle$. Averaging simultaneously over the uniform choice of $z$ and the measurement outcome gives
  \begin{equation}
    \begin{aligned}
      \mathbb E\left[\widehat\rho_{xy}\right]
       & =\mathbb E_{z\sim\{0,1\}^n} \left[(-1)^{z\cdot x}i^{-(x\oplus y)\cdot z}
      \operatorname{Tr}\!\left(i^{(x\oplus y)\cdot z}X^{x\oplus y}Z^z\rho_B\right) \right]\\
       & =\mathbb E_{z\sim\{0,1\}^n}\left[(-1)^{z\cdot x}\operatorname{Tr}\!\left(X^{x\oplus y}Z^z\rho_B\right)\right]\\
       & =\sum_u[\rho]_B(u,u\oplus x\oplus y)\mathbb E_{z\sim\{0,1\}^n}\left[(-1)^{z\cdot(x\oplus u)}\right]\\
       & =\rho_{xy}.
    \end{aligned}
    \label{eq:single-entry-tomography-unbiasedness}
  \end{equation}
  Thus $\widehat{\rho}_{xy}$ is an unbiased estimator of $\rho_{xy}$. Each phase-adjusted outcome in Eq.~\eqref{eq:single-entry-tomography-entry-estimator} has modulus one, so its real and imaginary parts lie in
  $[-1,1]$. Moreover, $\lvert \widehat{\rho}_{xy} - \rho_{xy}\rvert\ge\varepsilon$ implies
  $\lvert\operatorname{Re}(\widehat{\rho}_{xy} - \rho_{xy})\rvert\ge\varepsilon/\sqrt2$ or
  $\lvert\operatorname{Im}(\widehat{\rho}_{xy} - \rho_{xy})\rvert\ge\varepsilon/\sqrt2$. Therefore, the two-sided Hoeffding bound applied to both real and imaginary parts, followed by a union bound, yields
  \begin{equation}
    \mathbb P\!\left[
      \lvert\widehat\rho_{xy}-\rho_{xy}\rvert\ge\varepsilon
      \right]
    \le4e^{-M\varepsilon^2/4}.
    \label{eq:single-entry-tomography-hoeffding}
  \end{equation}
  Choosing $M\ge\frac{4}{\varepsilon^2}\log\frac{4}{\delta}$ concludes the proof.
\end{proof}

Using this selected-entry tomography in Theorem~\ref{thm:single-entry-tomography}, it is straightforward to obtain an estimate of a small submatrix of $[\rho]_B$:

\begin{corollary}
  \label{cor:selected-submatrix-tomography}
  Fix a product basis $B=(B_1,\dots,B_n)$ and $S\subseteq\{0,1\}^n$ of size $t$. For an $n$-qubit state $\rho$, let $\rho_S:=[\rho]_B(S,S)$ be the submatrix of $\rho$ in the basis $B$ corresponding to the indices in $S\times S$. Running selected-entry tomography on every entry in $S\times S$ can output a Hermitian matrix $\widehat\rho_S$ such that $\lVert\widehat\rho_S-\rho_S\rVert_F\le\varepsilon$ with probability at least $1-\delta$, using at most $\frac{4t^4}{\varepsilon^2}\log\frac{4t^2}{\delta}$ copies of $\rho$.
\end{corollary}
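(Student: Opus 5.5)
We apply Theorem~\ref{thm:single-entry-tomography} to each of the $t^2$ entries of $S\times S$ independently, with a per-entry accuracy $\varepsilon'$ and failure probability $\delta'$ chosen so that the entrywise errors aggregate to Frobenius error $\varepsilon$ and a union bound gives overall failure probability $\delta$. The natural choice is $\delta'=\delta/t^2$ and $\varepsilon'=\varepsilon/t$. Then Theorem~\ref{thm:single-entry-tomography} gives that each raw estimate $\tilde\rho_{xy}$ satisfies $\lvert\tilde\rho_{xy}-\rho_{xy}\rvert\le\varepsilon/t$ with probability at least $1-\delta/t^2$. This costs $\frac{4t^2}{\varepsilon^2}\log\frac{4t^2}{\delta}$ copies per entry, hence $\frac{4t^4}{\varepsilon^2}\log\frac{4t^2}{\delta}$ copies in total over the $t^2$ entries, which is exactly the stated budget. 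Since the entries are estimated on disjoint batches of fresh copies, the union bound over the $t^2$ failure events gives that all entries are simultaneously $\varepsilon/t$-accurate with probability at least $1-\delta$.

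On that event, the raw matrix $\tilde\rho_S$ satisfies
\begin{equation}
  \lVert\tilde\rho_S-\rho_S\rVert_F^2=\sum_{x,y\in S}\lvert\tilde\rho_{xy}-\rho_{xy}\rvert^2\le t^2\cdot\frac{\varepsilon^2}{t^2}=\varepsilon^2 .
\end{equation}

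The statement requires a Hermitian output, whereas $\tilde\rho_S$ need not be Hermitian. We therefore output the Hermitian part $\widehat\rho_S:=(\tilde\rho_S+\tilde\rho_S^\dagger)/2$. Since $\rho_S$ is Hermitian (it is a principal submatrix of the Hermitian matrix $[\rho]_B$), we can write
\begin{equation}
  \widehat\rho_S-\rho_S=\tfrac12\bigl[(\tilde\rho_S-\rho_S)+(\tilde\rho_S-\rho_S)^\dagger\bigr].
\end{equation}
The triangle inequality, together with invariance of the Frobenius norm under adjoint, then gives $\lVert\widehat\rho_S-\rho_S\rVert_F\le\lVert\tilde\rho_S-\rho_S\rVert_F\le\varepsilon$. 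Equivalently, Hermitian projection is an orthogonal projection in the Frobenius inner product onto a subspace containing $\rho_S$, so it cannot increase the distance to $\rho_S$.

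An alternative would be to estimate only the $t(t+1)/2$ entries with $x\le y$ and fill in the rest by conjugation, which would save about a factor of two. This is unnecessary, since the stated bound already counts all $t^2$ entries. No step is a real obstacle. The only point needing care is matching the constants: the per-entry parameters $\varepsilon/t$ and $\delta/t^2$ must reproduce the stated count $\frac{4t^4}{\varepsilon^2}\log\frac{4t^2}{\delta}$, and the symmetrization step must not enlarge the error.
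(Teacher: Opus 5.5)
Your proposal is correct and matches the paper's proof essentially step for step. Both use per-entry accuracy $\varepsilon/t$ and failure probability $\delta/t^2$ on fresh copies, a union bound over the $t^2$ entries, the entrywise-to-Frobenius bound, and the symmetrization $(Z+Z^\dagger)/2$, which cannot increase the Frobenius distance to the Hermitian $\rho_S$; this reproduces the copy count $t^2\cdot\frac{4t^2}{\varepsilon^2}\log\frac{4t^2}{\delta}$ exactly, with the same implicit rounding as the paper.
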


\begin{proof}
  For each $(x,y)\in S\times S$, run the selected-entry tomography on separate fresh copies with error $\varepsilon/t$ and success probability $1-\delta/t^2$, and let $Z$ collect the resulting raw estimates. Define
  \begin{equation*}
    \widehat\rho_S:=\frac{Z+Z^\dagger}{2},
  \end{equation*}
  which is Hermitian. By a union bound, with probability at least $1-\delta$, every entry satisfies $\lvert[Z]_{xy}-[\rho_S]_{xy}\rvert\le\varepsilon/t$, and hence $\lVert Z-\rho_S\rVert_F\le\sqrt{t^2(\varepsilon/t)^2}=\varepsilon$. Since $\rho_S=\rho_S^\dagger$,
  \begin{equation}
    \lVert\widehat\rho_S-\rho_S\rVert_F
    \le\lVert Z-\rho_S\rVert_F
    \le\varepsilon.
  \end{equation}
  Theorem~\ref{thm:single-entry-tomography} bounds the total number of copies by
  \begin{equation}
    t^2\left(\frac{4t^2}{\varepsilon^2}\log\frac{4t^2}{\delta}\right)
    =\frac{4t^4}{\varepsilon^2}\log\frac{4t^2}{\delta}.
  \end{equation}
\end{proof}

Therefore, once the basis $B$ is chosen, selected-entry tomography can efficiently estimate any small submatrix of $[\rho]_B$ only with single-qubit measurements, which is actively used in both of our algorithms.

\subsection{Learning sparse states in a known product basis}
\label{sec:known-basis-sparse-learning}

Selected-entry tomography directly gives a learning algorithm for sparse states in a known product basis. Specifically, when the basis $B$ in which $\rho$ is $s$-sparse is known, one can easily identify the submatrix of $[\rho]_B$ containing almost all nonzero entries by measuring in $B$. Then, one can apply selected-entry tomography to estimate the entries of this submatrix.

Let $S:=\{x\in\{0,1\}^n:[\rho]_B(x,x)>0\}$. Since $\rho$ is positive semidefinite, we have
\begin{equation*}
  \lvert[\rho]_B(x,y)\rvert^2\le[\rho]_B(x,x)[\rho]_B(y,y),
\end{equation*}
so every nonzero entry of $[\rho]_B$ lies in $S\times S$. Note that $\lvert S\rvert\le s$. With this setup, the algorithm proceeds as follows:
\begin{itemize}
  \item \textbf{Step 1: Capture bitstrings.}
  Measure
  \begin{equation}
    N_1:=\left\lceil\frac{16}{\varepsilon^2}
    \left(s\log2+\log\frac2\delta\right)\right\rceil
    \label{eq:known-basis-capture-copies}
  \end{equation}
  copies of $\rho$ in $B$ and get output bitstrings. Let $S'$ be the set of unique output bitstrings.
  
  \item \textbf{Step 2: Estimate the selected entries.}
  For each $(x,y)\in S'\times S'$, apply the selected-entry tomography in Theorem~\ref{thm:single-entry-tomography} with accuracy $\varepsilon/4s$ and failure probability $\delta/2s^2$, and denote the resulting estimate by $A_{xy}$.
  
  \item \textbf{Step 3: Truncate and return the output.}
  Retain an estimated entry exactly when its magnitude is strictly greater than $\varepsilon/4s$, defining
  \begin{equation}
    [\widehat\rho]_B(x,y):=
    \begin{cases}
      A_{xy}, & (x,y)\in S'\times S'\text{ and }\lvert A_{xy}\rvert>\varepsilon/4s,\\
      0, & \text{otherwise}.
    \end{cases}
    \label{eq:known-basis-thresholding}
  \end{equation}
  Then, return $\widehat\rho$.
\end{itemize}

\begin{theorem}[Learning sparse states in a known basis]
  \label{thm:known-basis-sparse-learning}
  Suppose that $\rho$ is $s$-sparse in a known product basis $B=(B_1,\dots,B_n)$ and let $0<\varepsilon, \delta <1$. The above algorithm outputs a matrix $\widehat\rho$ that is $s$-sparse in $B$ such that $\lVert\widehat\rho-\rho\rVert_1\le\varepsilon$ with probability at least $1-\delta$. The algorithm uses only single-qubit measurements and requires
  \begin{equation}
    \begin{aligned}
      N&=O\!\left(s^4\varepsilon^{-2}\log\frac{s}{\delta}\right)
      &&\text{copies of $\rho$, and}\\
      T&=O\!\left(ns^4\varepsilon^{-2}\log\frac{s}{\delta}\right)
      &&\text{classical runtime}.
    \end{aligned}
    \label{eq:known-basis-learning-resources}
  \end{equation}
\end{theorem}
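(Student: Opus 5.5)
The proof splits the final error into three parts: leakage from bitstrings of $S$ that were never observed, estimation error on $S'\times S'$, and truncation error. Each part will be bounded in Frobenius or trace norm and then combined.

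\textbf{Step 1 (capture).} Let $p(x)=[\rho]_B(x,x)$ and $S_{\rm miss}:=S\setminus S'$. We show that with probability at least $1-\delta/2$, $p(S_{\rm miss})\le \varepsilon^2/16$. Call a subset $T\subseteq S$ heavy if $p(T)>\varepsilon^2/16$. For a fixed heavy $T$, the probability that none of the $N_1$ samples lands in $T$ is at most $(1-\varepsilon^2/16)^{N_1}\le e^{-N_1\varepsilon^2/16}$. There are at most $2^s$ such subsets, and the choice of $N_1$ in Eq.~\eqref{eq:known-basis-capture-copies} makes the union bound give failure probability at most $\delta/2$. On this event, let $A$ be the projector onto $\mathrm{span}\{\lvert\phi_x\rangle:x\in S'\cap S\}$. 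Then $\operatorname{Tr}[(I-A)\rho]=p(S_{\rm miss})\le\varepsilon^2/16$. Lemma~\ref{lem:projection-error-trace} gives $\lVert\rho-A\rho A\rVert_1\le\varepsilon/2$. Moreover, $A\rho A$ equals the restriction of $[\rho]_B$ to $S'\times S'$, because entries of $\rho$ outside $S\times S$ vanish. Note also that $S'\subseteq S$ almost surely, since only $x$ with $p(x)>0$ can be observed; hence $\lvert S'\rvert\le s$.

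\textbf{Step 2 (estimation and truncation).} By Theorem~\ref{thm:single-entry-tomography} and a union bound over at most $s^2$ entries, with probability at least $1-\delta/2$ every entry satisfies $\lvert A_{xy}-\rho_{xy}\rvert\le\varepsilon/4s$. On this event, we bound the entrywise error of the thresholded estimate:
\begin{itemize}
  \item If $\rho_{xy}=0$, then $\lvert A_{xy}\rvert\le\varepsilon/4s$, so the entry is zeroed and the error is $0$. In particular, the output support lies inside $\operatorname{supp}([\rho]_B)$, which gives $s$-sparsity.
  \item If $\rho_{xy}\neq0$, the error is at most $\varepsilon/4s$ when the entry is kept. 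When it is dropped, the error is $\lvert\rho_{xy}\rvert\le\lvert A_{xy}\rvert+\varepsilon/4s\le\varepsilon/2s$.
\end{itemize}
So at most $s$ entries are in error, each by at most $\varepsilon/2s$. Hence $\lVert\widehat\rho-A\rho A\rVert_F\le\sqrt{s}\cdot\varepsilon/2s$.

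\textbf{Step 3 (combine).} Both matrices live on a space of dimension $\lvert S'\rvert\le s$, so $\lVert\cdot\rVert_1\le\sqrt{s}\lVert\cdot\rVert_F$ gives $\lVert\widehat\rho-A\rho A\rVert_1\le\varepsilon/2$. The triangle inequality with Step 1 then yields $\lVert\widehat\rho-\rho\rVert_1\le\varepsilon$ with probability at least $1-\delta$. The main obstacle is this norm conversion. The naive entrywise sum would cost a factor $s^2$ rather than $\sqrt{s}$, so the argument needs the sparse error pattern together with the dimension bound. If the constants do not close, the per-entry accuracy can be shrunk by a constant factor; this changes only constants in the complexity.

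\textbf{Resources.} Step 2 uses at most $s^2\cdot\frac{4(4s)^2}{\varepsilon^2}\log\frac{8s^2}{\delta}=O(s^4\varepsilon^{-2}\log(s/\delta))$ copies, which dominates $N_1$. Each copy requires $O(n)$ classical work to choose $z$ and compute phases, giving total runtime $O(ns^4\varepsilon^{-2}\log(s/\delta))$. All measurements are products of single-qubit Pauli measurements rotated by the $V_i$.
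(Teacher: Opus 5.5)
Your proposal is correct and follows the paper's proof essentially step for step. The shared steps are the $2^{s}$-subset union bound for the capture step, the leakage lemma giving the $\varepsilon/2$ projection error, the thresholding at $\varepsilon/4s$ that confines the error to the at most $s$ true nonzero entries (each off by at most $\varepsilon/2s$), and the $\sqrt{\lvert S'\rvert}$ Frobenius-to-trace conversion on the image of the projector. The constants close exactly as you computed, so your fallback of shrinking the per-entry accuracy is not needed.
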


\begin{proof}
  To analyze Step~1, note that $S'\subseteq S$, so $s':=\lvert S'\rvert\le s$. We write $p(x) = [\rho]_B(x,x)$ for $x\in \{0,1\}^n$ and $p(U):=\sum_{x\in U}p(x)$ for $U\subseteq \{0,1\}^n$. Let $x^{(1)}, \dots,x^{(N_1)}$ be the measurement bitstring outcomes.
  
  If $p(U)>\varepsilon^2/16$ for $U \subseteq S$, the probability that none of $x^{(1)}, \dots,x^{(N_1)}$ belongs to $U$ is $(1-p(U))^{N_1}\le e^{-N_1\varepsilon^2/16}$. The event $p(S\setminus S')>\varepsilon^2/16$ implies that some such subset $U$ has not been observed, so a union bound over the at most $2^{\lvert S\rvert}$ subsets gives
  \begin{equation}
    \mathbb P\!\left[p(S\setminus S')>\frac{\varepsilon^2}{16}\right]
    \le2^{\lvert S\rvert}e^{-N_1\varepsilon^2/16}
    \le\frac\delta2.
    \label{eq:known-basis-capture-probability}
  \end{equation}
  Thus, with probability at least $1-\delta/2$, the unobserved probability mass $p(S\setminus S')$ is at most $\varepsilon^2/16$.

  For $x\in\{0,1\}^n$, let $\lvert\phi_x\rangle:=\bigotimes_{i=1}^n\lvert\phi^{(i)}_{x_i}\rangle$ where $B_i = \{\lvert\phi^{(i)}_0\rangle, \lvert\phi^{(i)}_1\rangle\}$, and define $P:=\sum_{x\in S'}\lvert\phi_x\rangle\!\langle\phi_x\rvert$. Write the leakage error
  \begin{equation}
    \eta:=\operatorname{Tr}[(I-P)\rho]=p(S\setminus S') \le \varepsilon^2/16.
  \end{equation}
  Applying Lemma~\ref{lem:projection-error-trace} to $P\rho P$ gives
  \begin{equation}
    \lVert\rho-P\rho P\rVert_1
    \le2\sqrt\eta
    \le\varepsilon/2.
    \label{eq:known-basis-gentle-projection}
  \end{equation}

  Now we move on to Step~2. Theorem~\ref{thm:single-entry-tomography} and the union bound give, with probability at least $1-\delta/2$,
  \begin{equation}
    \lvert A_{xy}-[\rho]_B(x,y)\rvert\le\varepsilon/4s
    \qquad\text{for every }(x,y)\in S'\times S'.
    \label{eq:known-basis-entry-accuracy}
  \end{equation}
  There are $(s')^2\le s^2$ entries to estimate, and obtaining the selected-entry tomography estimates requires
  \begin{equation}
    N_2:=s'^2\left\lceil\frac{64s^2}{\varepsilon^2}\log\frac{8s^2}{\delta}\right\rceil
    =O\!\left(s^4\varepsilon^{-2}\log\frac{s}{\delta}\right)
    \label{eq:known-basis-entry-copies}
  \end{equation}
  copies of $\rho$ in total.

  Note that every matrix returned in Step~3 has at most $s$ nonzero entries provided that the premises of Steps~1 and 2 are satisfied with probability at least $1-\delta$. Indeed, if $[\rho]_B(x,y)=0$, then $\lvert A_{xy}\rvert\le\varepsilon/(4s)$, so the entry is discarded in Step~3. Thus, at most $s$ entries survive in $\widehat\rho$. A retained entry has error at most $\varepsilon/(4s)$, whereas a discarded true entry in $S'\times S'$ has magnitude at most $\lvert[\rho]_B(x,y)-A_{xy}\rvert+\lvert A_{xy}\rvert\le\varepsilon/(2s)$. Consequently, $\widehat\rho-P\rho P$ has at most $s$ nonzero entries and acts on the $s'$-dimensional image of $P$, giving
  \begin{equation}
    \begin{aligned}
      \lVert\widehat\rho-P\rho P\rVert_F
      &\le\frac{\varepsilon}{2\sqrt{s}},\\
      \lVert\widehat\rho-P\rho P\rVert_1
      &\le\sqrt{s'}\,\lVert\widehat\rho-P\rho P\rVert_F
      \le\frac{\varepsilon}{2}\sqrt{\frac{s'}s}
      \le\frac\varepsilon2.
    \end{aligned}
    \label{eq:known-basis-reconstruction-error}
  \end{equation}
  Combining Eqs.~\eqref{eq:known-basis-gentle-projection} and~\eqref{eq:known-basis-reconstruction-error} by the triangle inequality yields $\lVert\widehat\rho-\rho\rVert_1\le\varepsilon$.

  The total number of copies used is $N=N_1+N_2=O(s^4\varepsilon^{-2}\log(s/\delta))$. Since processing each measurement outcome takes $O(n)$ time, the total classical runtime is $T=O(ns^4\varepsilon^{-2}\log(s/\delta))$.
\end{proof}

%% file: 05_low_order_pauli_moments.tex
\section{Learning with low-order Pauli moments}
\label{sec:continuous-fixed-sparsity}

In this section, we present Algorithm~1, which uses low-order Pauli moments to learn $s$-sparse states in an arbitrary unknown product basis, as briefly outlined in Section~\ref{sec:algorithm2-overview}. We first give an explicit description of Algorithm~1 and the result. We then analyze the construction of candidate product bases from low-order Pauli moments. Finally, we prove completeness and soundness of reconstruction and bound the number of copies of $\rho$ and the classical runtime.

\subsection{Explicit algorithm and result}
\label{sec:continuous-fixed-sparsity-algorithm}

Algorithm~1 takes independently prepared copies of the unknown $s$-sparse state $\rho$, accuracy of the output $\varepsilon$, and failure probability $\delta$. As outlined in Section~\ref{sec:algorithm2-overview}, Algorithm~1 generates a list of candidate product bases and then, for each candidate basis, measures fresh copies of $\rho$ in that basis to test whether the state is $s$-sparse in that basis. If a candidate basis passes the test, Algorithm~1 performs selected-entry tomography with that basis to reconstruct the state. Here, we present the explicit procedure of Algorithm~1 in detail.

For each nonempty $W\subseteq[n]$ and $\alpha=(\alpha_i)_{i\in W}\in\{X,Y,Z\}^W$, we define the Pauli moment tensor $T_W$ by
\begin{equation}
  [T_W]_\alpha
  :=\operatorname{Tr}\!\left[
    \left(\bigotimes_{i\in W}\alpha_i\right)\rho
    \right],
  \label{eq:product-pauli-tensor}
\end{equation}
where tensor indices are ordered by increasing qubit index. We set
\begin{equation}
  \ell_0:=\min\{n,\lfloor\log_2s\rfloor+1\},
  \qquad
  k_0:=s+\lceil2\log_2s\rceil.
  \label{eq:pauli-moment-learning-cutoffs}
\end{equation}
Here, $\ell_0$ will bound the order of the estimated Pauli moments and $k_0$ will bound the size of each enumerated qubit set $K$ when constructing the list. We also choose the following parameters:
\begin{equation}
  \varepsilon_1
  :=\frac{\varepsilon^2}
  {2^{12}s^{\log_2s+2}3^{\ell_0/2}\sqrt n},
  \qquad
  \varepsilon_2:=\frac{\varepsilon_1}{\ell_0},
  \qquad
  \xi:=\frac{\varepsilon^2}{32},
  \qquad
  \theta:=\frac{\varepsilon}{3s^{3/2}}.
  \label{eq:pauli-moment-learning-final-tolerances}
\end{equation}
With these settings, the algorithm proceeds as follows:

\begin{itemize}
  \item \textbf{Step 1: Generate candidate product bases.}
        \begin{enumerate}[label=(\alph*)]
          \item \textbf{Estimate the low-order Pauli moments.}
                For every nonempty $W$ with $\lvert W\rvert\le\ell_0$, estimate every entry of $T_W$ to target entrywise error $\le\varepsilon_1$. Record all estimates $\widehat T_W$.

          \item \textbf{Enumerate partial basis assignments.}
                Enumerate every subset $K\subseteq[n]$ with $\lvert K\rvert\le k_0$ and every basis assignment $(\widetilde{\mathbf n}_j)_{j\in K}\in\mathcal N(\varepsilon_2)^K$. We call each pair $(K,(\widetilde{\mathbf n}_j)_{j\in K})$ a \emph{partial basis assignment}; it specifies both the qubit set and the proposed Bloch vectors on that set.

          \item \textbf{Recover the remaining bases and output candidate product bases.}
                For each partial basis assignment $(K,(\widetilde{\mathbf n}_j)_{j\in K})$ and each $i\in[n]\setminus K$, scan the sets $W\ni i$ with $\lvert W\rvert\le\ell_0$. For each such $W$, contract $\widehat T_W$ only with the proposed Bloch vectors on $K\cap W$:
                \begin{equation}
                  \widehat T_W
                  \bigtimes_{j\in K\cap W}\widetilde{\mathbf n}_j^{T}.
                  \label{eq:pauli-moment-learning-procedure-contraction}
                \end{equation}
                Flatten this contracted tensor in mode $i$, with rows indexed by $\alpha_i$ and columns indexed by the remaining Pauli labels, and perform SVD. If its leading singular value is $\ge 3/(16s^{\log_2s})$, retain the real unit leading left singular vector $\widetilde{\mathbf n}_i$ and stop the scan for this qubit; otherwise, continue to the next $W$. If no $W$ passes for some $i$, discard the partial basis assignment. Otherwise, complete the full collection $\{\widetilde{\mathbf n}_i\}_{i\in[n]}$ to single-qubit bases, and add the resulting candidate product basis to the list. After processing every partial basis assignment, output the list of candidate product bases.
        \end{enumerate}

  \item \textbf{Step 2: Bitstring test.}
        For each candidate basis, fix that basis before measuring fresh copies of $\rho$ in it. Let $q$ denote the output bitstring distribution in that basis, and estimate it by the empirical distribution $\widehat q$ to target precision
        \begin{equation}
          \lvert\widehat q(T)-q(T)\rvert\le\xi/2
          \qquad\text{for all }T\subseteq\{0,1\}^n
          \text{ with }\lvert T\rvert\le s.
        \end{equation}
        From $\widehat q$, choose a set
        \begin{equation}
          S\in\operatorname*{arg\,max}_{T\subseteq\{0,1\}^n:\,\lvert T\rvert\le s}\widehat q(T).
        \end{equation}
        Keep the candidate basis together with its selected set $S$ when $\widehat q(S^c)\le\xi$; otherwise, discard it.

  \item \textbf{Step 3: Sparsity test with selected-entry tomography.}
        For each candidate basis that passed Step~2, use separate fresh copies of $\rho$ to perform selected-entry tomography on the $S\times S$ submatrix with accuracy $\theta/2$ for each entry. Given the raw estimate $Z$, define the symmetrized matrix $\widetilde M = \frac{1}{2}(Z + Z^\dagger)$. Denote
        \begin{equation}
          K_{\mathrm{ent}}
          :=\left\{(x,y)\in S\times S:
          \lvert[\widetilde M]_{xy}\rvert\ge\theta\right\}.
          \label{eq:pauli-moment-learning-procedure-threshold-set}
        \end{equation}
        The candidate basis passes when $\lvert K_{\mathrm{ent}}\rvert\le s$; otherwise, discard it. Process all candidate bases and return the first accepted candidate basis together with the matrix $M$ obtained by retaining the entries of $\widetilde M$ on $K_{\mathrm{ent}}$ and setting all other entries to zero. If no candidate basis is accepted, report failure.
\end{itemize}

With this procedure, Algorithm~1 achieves the following:

\begin{theorem}[Algorithm~1; arbitrary continuous product basis]
  \label{thm:pauli-moment-learning}
  Suppose that $\rho$ is $s$-sparse in an arbitrary unknown product basis, and let $0<\varepsilon, \delta <1$. With probability at least $1-\delta$, Algorithm~1 only uses single-qubit measurements and outputs an $s$-sparse operator $\widehat\rho$ such that $\lVert\widehat\rho-\rho\rVert_1\le\varepsilon$, with
  \begin{equation}
    \begin{aligned}
      N&=2^{O(s(\log s)^2)}\left(\frac n\varepsilon\right)^{O(s)}\polylog\!\left(\frac1\delta\right)
      &&\text{copies of $\rho$, and}\\
      T&=2^{O(s(\log s)^2)}\left(\frac n\varepsilon\right)^{O(s)}\polylog\!\left(\frac1\delta\right)
      &&\text{classical runtime}.
    \end{aligned}
    \label{eq:pauli-moment-learning-resources}
  \end{equation}
\end{theorem}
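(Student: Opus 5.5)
The proof splits into completeness, soundness, and resource counting, preceded by two structural lemmas about the unknown basis $B$. Fix a product basis $B$ with Bloch axes $(\mathbf n_i)$ in which $[\rho]_B$ is $s$-sparse, and let $p(x)=[\rho]_B(x,x)$.

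\textbf{Structural lemmas.} The first is the Fourier lemma promised in the overview. For a distribution $p$ supported on at most $s$ strings, I would show there is $L\subseteq[n]$ with $\lvert L\rvert\le\lceil 2\log_2 s\rceil$ such that every $i\notin L$ lies in some $W$ with $\lvert W\rvert\le\ell_0$ and $\lvert\widehat p(W)\rvert>1/(4s^{\log_2 s})$.

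My route is to restrict to qubits on which $p$ is non-constant (constant qubits have $\widehat p(\{i\})=\pm1$). Then I would argue via a basis of the $\mathbb F_2$-span of the support differences of dimension $\le\log_2 s$. Each remaining coordinate is a function of $p$ on an affine subspace of dimension $\le\lfloor\log_2 s\rfloor$. Its Fourier expansion has degree $\le\ell_0$ in a suitable choice of $\le\ell_0$ coordinates. The $s^{-\log_2 s}$ bound comes from granularity of Fourier coefficients of restricted distributions, obtained by averaging over restrictions. The exceptional set $L$ collects the "pivot" coordinates.

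The second lemma is the exact filtering identity, Eq.~\eqref{eq:overview-exact-filtering}, with $K_\star=L\cup J_\star$ and $\lvert K_\star\rvert\le k_0$. It follows from the flip-set argument in Section~\ref{sec:algorithm2-overview}.

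\textbf{Completeness (robust version).} Condition on all estimates $\widehat T_W$ being $\varepsilon_1$-accurate, which holds w.h.p.\ by Hoeffding and a union bound over $\binom{n}{\le\ell_0}3^{\ell_0}$ entries. Consider the partial assignment with $K=K_\star$ and net vectors $\widetilde{\mathbf n}_j$ within $\varepsilon_2$ of $\mathbf n_j$. Contraction with $\le\ell_0$ unit vectors changes the flattened matrix by at most $3^{\ell_0/2}\varepsilon_1+\ell_0\varepsilon_2\cdot O(1)$ in spectral norm. This is far below the gap between $1/(4s^{\log_2 s})$ and the threshold $3/(16 s^{\log_2 s})$.

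A subtlety is that the scan may accept a $W$ whose exact contracted tensor is not rank one. That case is still fine: by the filtering identity, the exact tensor for any $W$ after contraction with the true $\mathbf n_j$ ($j\in K_\star\cap W$) equals $\widehat p(W)\bigotimes\mathbf n_k$, which is rank one with its singular value equal to $\lvert\widehat p(W)\rvert$. Any $W$ passing the threshold therefore has $\lvert\widehat p(W)\rvert\gtrsim 1/(8s^{\log_2 s})$. Wedin's theorem then gives $d_{\mathrm{ang}}(\widetilde{\mathbf n}_i,\mathbf n_i)\le O(s^{\log_2 s}(3^{\ell_0/2}\varepsilon_1+\ell_0\varepsilon_2))\le \varepsilon^2/(2^{8}\sqrt n)$.

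I would next bound the effect of this rotation. A per-qubit angular error $\gamma$ gives a product unitary $U$ that is $O(\sqrt n\gamma)$-close in operator norm to one mapping $B$ to the candidate $B'$. Hence $\rho$ is $O(\varepsilon^2)$-close in trace norm to an $s$-sparse state in $B'$. This is exactly why $\varepsilon_1$ carries $\sqrt n$ and a squared $\varepsilon$: the leakage $\operatorname{Tr}[(I-P)\rho]$ onto the true support in $B'$ is $O(\varepsilon^2)\le\xi/2$. Consequently Step~2 passes, and by Lemma~\ref{lem:projection-error} the $S\times S$ block of $[\rho]_{B'}$ has at most $s$ entries exceeding $\theta/2$ in aggregate sense. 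After accounting for Frobenius error $O(\varepsilon)$, Step~3 passes. I expect this constant bookkeeping to be the main obstacle, especially ensuring that small spurious entries stay below $\theta$.

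\textbf{Soundness and resources.} Any accepted basis, correct or not, gives a good output by the argument of Theorem~\ref{thm:known-basis-sparse-learning}. Step~2 ensures leakage $\le 3\xi/2$, so $\lVert\rho-P\rho P\rVert_1\le2\sqrt{3\xi/2}\le\varepsilon/2$ by Lemma~\ref{lem:projection-error-trace}.

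Step~3 bounds the remaining error. Retained entries are $\theta/2$-accurate. Each discarded entry is below $3\theta/2$ in true value, but only $s$ entries can be large. The small-entry mass is controlled by $\lVert P\rho P\rVert_F\le1$ together with the Frobenius estimate $\sqrt{s}\,\theta$. Converting to trace norm with a factor $\sqrt{s}$ yields $\varepsilon/2$.

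The union bound runs over at most $n^{k_0}(C/\varepsilon_2)^{2k_0}=2^{O(s\log^2 s)}(n/\varepsilon)^{O(s)}$ candidates. Each candidate costs $\operatorname{poly}(s,1/\varepsilon,\log(\text{\#cand}/\delta))$ copies, which gives the stated $N$ and $T$.
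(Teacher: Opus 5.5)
Outside the first structural lemma, your plan follows the paper's proof. That includes flip-set filtering through $J_\star$, stability of the contracted tensors, the bound $\sin d_{\mathrm{ang}}\le\lVert\widehat M-M_0\rVert_{\mathrm{op}}/\widehat\lambda$ for whichever $W$ passes the scan, the Step~2--3 tests, and the count of partial assignments. The genuine gap is the lemma itself. It is what makes $\lvert K_\star\rvert\le k_0$ and what supplies a passing $W$ of order at most $\ell_0$, and your route to it fails in two places.

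First, the $\mathbb F_2$-span of the support differences does not have dimension at most $\log_2 s$. The inequality goes the other way: $r$ distinct strings force dimension at least $\log_2 r$, and the dimension can be as large as $r-1$. For $\operatorname{supp}(p)=\{0,e_1,\dots,e_{s-1}\}$, all of the first $s-1$ coordinates are pivots. Appending the coordinate $x_1\oplus\cdots\oplus x_{s-1}$ produces one whose affine relation to the pivots involves all $s-1$ of them. So neither $\lvert L\rvert\le\lceil2\log_2 s\rceil$ nor $\lvert W\rvert\le\ell_0$ comes out of this construction.

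Second, there is no granularity to exploit, because $p$ is real-valued. The support $\{x,x\oplus e_i\}$ with masses $\tfrac12\pm\delta$ has $\lvert\widehat p(W)\rvert=2\delta$ for every $W\ni i$, which is arbitrarily small. Such a qubit must be excluded for a quantitative reason, not a linear-algebraic one.

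The paper uses a contrast criterion instead. It sets $L:=\{i:\lVert D_ip\rVert_1\le1/2\}$ and gets $\lvert L\rvert\le2\log_2 s$ from $H(X_i\mid X_{-i})\ge1-\lVert D_ip\rVert_1$ together with Han's inequality $\sum_iH(X_i\mid X_{-i})\le H(X)\le\log_2 s$. It then proves, by induction on the support size $r$, that any real $g$ supported on $r$ strings has $\lvert\widehat g(W)\rvert\ge\lVert g\rVert_1/(2r^{\log_2 r})$ for some $\lvert W\rvert\le\log_2 r$. The induction splits off the minority set $\Omega_j$ of each coordinate (at most $r/2$ strings) and bounds its low-degree coefficients via $\widehat h_j(W)=\tfrac12\bigl(\widehat g(W)\pm\widehat g(W\cup\{j\})\bigr)$. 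It then covers the support by at most $r$ such sets, handling the simultaneous-majority string via $\widehat g(\varnothing)$. Applied to $g_i=2D_ip$, which has $\widehat g_i(S)=\widehat p(S\cup\{i\})$ and $\lVert g_i\rVert_1>1/2$ off $L$, this gives the required $W\ni i$.

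Two smaller steps also need repair. Product unitaries with per-qubit angular error $\gamma$ are not $O(\sqrt n\,\gamma)$-close in operator norm. Their eigenphases add, so $\lVert V'-V\rVert_{\mathrm{op}}$ can be of order $n\gamma$, which would destroy the $\sqrt n$ scaling you rely on. The $\sqrt n$ appears only on the at most $s$ support vectors of $[\rho]_B$. There, with phases aligned, $\lVert(V'-V)\lvert x\rangle\rVert_2^2=2\bigl(1-\prod_i\cos(\varphi_i/2)\bigr)\le\frac14\sum_i\varphi_i^2$, which is how the paper gets $\lVert\rho-\sigma\rVert_1\le\sqrt{sn}\,\gamma$.

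For Step~3 completeness, an aggregate or Frobenius bound cannot keep each spurious entry below $\theta=\varepsilon/(3s^{3/2})$, so argue entrywise. Each $(x,y)\notin\operatorname{supp}[\sigma]_{B'}$ has $\lvert[\rho]_{B'}(x,y)\rvert\le\lVert\rho-\sigma\rVert_{\mathrm{op}}\le\sqrt{sn}\,\gamma$, and this must lie below both $\theta/2$ and $\xi/2$. That requires an $s^{-2}$ factor in $\gamma$. Your stated bound $\varepsilon^2/(2^8\sqrt n)$ drops it, but the paper's $\varepsilon_1$ provides it, giving $\gamma=\varepsilon^2/(128s^2\sqrt n)$.
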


The proof follows the three steps of Algorithm~1. For Step~1, Section~\ref{sec:sparse-distribution-fourier-analysis} and Section~\ref{sec:exact-basis-recovery} explain how the local bases on a small set of qubits allow the remaining bases to be recovered from low-order Pauli moment tensors. Section~\ref{sec:robust-basis-recovery} shows that Step~1 robustly produces an accurate candidate basis under estimation errors. For Steps~2 and 3, Section~\ref{sec:reconstruction-certification} proves that the output of the algorithm meets the required accuracy. Finally, Section~\ref{sec:pauli-moment-learning-resource-assembly} combines these guarantees and counts the required copies of $\rho$ and classical runtime of all three steps to prove Theorem~\ref{thm:pauli-moment-learning}.

\subsection{Fourier analysis of sparse distributions}
\label{sec:sparse-distribution-fourier-analysis}

Fix the product basis $B$ in which the state is sparse and write $p(x)=[\rho]_B(x,x)$. Then, this diagonal distribution is supported on at most $s$ bitstrings. For $g:\{0,1\}^n\to\mathbb R$ and $W\subseteq[n]$, we denote the unnormalized Fourier coefficient $\widehat g(W)$ as follows:
\begin{equation}
  \widehat g(W):=\sum_{x\in\{0,1\}^n}g(x)\prod_{j\in W}(-1)^{x_j}.
  \label{eq:unnormalized-fourier}
\end{equation}
For $x\in\{0,1\}^n$, $i\in[n]$, and $b\in\{0,1\}$, let $x^{(i\mapsto b)} \in \{0,1\}^n$ be the bitstring obtained by replacing the $i$-th bit with $b$. With this notation, we define the discrete derivatives as follows.
\begin{definition}(Discrete derivative~\cite{odonnell2014analysis})
  For $p:\{0,1\}^n\to\mathbb R$ and $i\in[n]$, the discrete derivative is given by
  \begin{equation}
    (D_ip)(x)
    :=\frac{p(x^{(i\mapsto0)})-p(x^{(i\mapsto1)})}{2}.
  \end{equation}
\end{definition}
We define the set of low-contrast qubits $L$ as qubits with small discrete derivatives:
\begin{equation}
  L:=\{i\in[n]:\lVert D_ip\rVert_1\le1/2\}.
\end{equation}
We will show that $\lvert L\rvert\le2\log_2s$ and that every qubit outside $L$ belongs to a subset $W\subseteq[n]$ with $|W| \le \ell_0$ such that $\lvert\widehat p(W)\rvert > 1/(4s^{\log_2s})$. To this end, we first show the following lemma.

\begin{lemma}[Low-degree Fourier coefficient of a sparse function]
  \label{lem:sparse-low-degree-fourier-coefficient}
  Let $n$ and $s$ be positive integers. If $g:\{0,1\}^n\to\mathbb R$ is supported on at most $s$ bitstrings, then there exists $W\subseteq[n]$ such that
  \begin{equation}
    \lvert W\rvert\le\log_2s,
    \qquad
    \lvert\widehat g(W)\rvert
    \ge\frac{\lVert g\rVert_1}{2s^{\log_2s}}.
    \label{eq:sparse-fourier-bound}
  \end{equation}
\end{lemma}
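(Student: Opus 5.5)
The plan is to isolate a single support point of $g$ by a sequence of coordinate restrictions. Each restriction either halves the support, costing one Fourier degree, or strips off a light piece at no cost in degree. The target coefficient is $\widehat g(\emptyset)$ of the final one-point restriction, transported back to a coefficient of $g$. I will use two elementary facts. \emph{(a)} Fix the $i$-th coordinate, and let $g_b$ be the restriction of $g$ to $\{x_i=b\}$, viewed as a function of the other $n-1$ bits. Then for $W\not\ni i$ we have $\widehat g(W)=\widehat{g_0}(W)+\widehat{g_1}(W)$ and $\widehat g(W\cup\{i\})=\widehat{g_0}(W)-\widehat{g_1}(W)$, hence
\[
  \max\bigl(\lvert\widehat g(W)\rvert,\lvert\widehat g(W\cup\{i\})\rvert\bigr)\ge\lvert\widehat{g_b}(W)\rvert \quad\text{for each } b .
\]
\emph{(b)} For every $W$ we have $\lvert\widehat g(W)-\widehat{g_1}(W)\rvert=\lvert\widehat{g_0}(W)\rvert\le\lVert g_0\rVert_1$, and likewise with the roles of $0$ and $1$ exchanged.

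I would prove a slightly stronger statement by strong induction on $s$. If $g$ is supported on at most $s$ points, then there is $W$ with $\lvert W\rvert\le\log_2 s$ and $\lvert\widehat g(W)\rvert\ge c_s\lVert g\rVert_1$, where $c_1=1$ and $c_s$ satisfies a recursion to be fixed below. The base case $s=1$ is immediate: $\widehat g(\emptyset)=g(x)$, so $\lvert\widehat g(\emptyset)\rvert=\lVert g\rVert_1$.

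For the inductive step with support size $t>1$, pick any coordinate $i$ on which the support is not constant. This splits $g$ into $g_0,g_1$ with support sizes $t_0\le t_1$, where $t_0+t_1=t$ and $t_0\le t/2$. There are two cases.
\begin{itemize}
  \item \emph{Light-side-heavy case.} Suppose $\lVert g_0\rVert_1\ge\lVert g\rVert_1/(2t)$. Apply induction to $g_0$ to get $W$ with $\lvert W\rvert\le\log_2 t_0\le\log_2 t-1$. Fact (a) then yields a coefficient of $g$ of degree $\le\log_2 t$ and size $\ge c_{\lfloor t/2\rfloor}\lVert g\rVert_1/(2t)$.
  \item \emph{Light-side-light case.} Suppose instead $\lVert g_0\rVert_1<\lVert g\rVert_1/(2t)$. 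Apply induction to $g_1$, which has support $t_1\le t-1$ and mass $\ge(1-\tfrac{1}{2t})\lVert g\rVert_1$, to get a $W$ of degree $\le\log_2(t-1)$. By fact (b), $\lvert\widehat g(W)\rvert\ge\lvert\widehat{g_1}(W)\rvert-\lVert g_0\rVert_1$, with no increase in degree.
\end{itemize}

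The main obstacle is the second case. The additive loss $\lVert g_0\rVert_1$ must be dominated by the guaranteed coefficient $c_{t_1}\lVert g_1\rVert_1$, which can be very small. So the threshold in the case split cannot be a fixed $1/(2t)$; it has to scale with the target constant. I would instead take the threshold to be $\tau_t:=c_t/2$ relative to $\lVert g\rVert_1$, and require the recursion
\[
  c_t\le \min\Bigl(\tau_t\,c_{\lfloor t/2\rfloor},\ \bigl((1-\tau_t)\,c_{t-1}-\tau_t\bigr)\Bigr).
\]
With $\tau_t=c_t/2$, the second constraint reduces to roughly $c_t\lesssim c_{t-1}(1-c_t)-c_t/2$, which holds if $c_t$ is nonincreasing and small. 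The first constraint forces $c_t\le c_{\lfloor t/2\rfloor}^2/2$ only if we are careless, so I would rather separate the threshold from $c_t$ and set $\tau_t=1/(2t)$ as before. The light case then needs to be iterated rather than handled by a single induction step: keep peeling off light sides along a chain of coordinates while tracking the total discarded mass. Since at most $t-1$ peelings occur before the support halves or drops to a point, the total discarded mass is at most $\lVert g\rVert_1/2$. This makes each halving round cost a factor of about $1/(2t)\ge 1/(2s)$ with only additive loss $\le\frac12\lVert g\rVert_1$ overall. After at most $\log_2 s$ halvings, this gives degree $\le\log_2 s$ and a coefficient of order $\lVert g\rVert_1/(2\cdot s^{\log_2 s})$.

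Checking carefully that the accumulated additive losses across different halving rounds still total at most half of the final coefficient is the bookkeeping I expect to be delicate. The fallback, if it fails, is to make the thresholds geometrically decreasing, $\tau^{(k)}=s^{-k}/4$ in round $k$, so that the losses in round $k$ are dominated by the multiplicative guarantee $s^{-k}$ available in that round. That should give exactly the $s^{\log_2 s}$ denominator in Eq.~\eqref{eq:sparse-fourier-bound}.
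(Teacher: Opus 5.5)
There is a genuine gap, and it is the one you flagged: the light-side-light case never closes at the scale the lemma requires.

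\textbf{Why the peeling bookkeeping fails.} Fact~(b) subtracts $\lVert g_0\rVert_1$ from the coefficient that survives \emph{all later} halvings. That coefficient is only guaranteed to be of order $\lVert g\rVert_1/s^{\log_2 s}$. Your threshold, by contrast, only makes each peeled piece smaller than $\lVert g\rVert_1/(2t)$, so an aggregate loss of $\tfrac12\lVert g\rVert_1$ swamps the target.

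Within a round, a single threshold is both the upper bound on every peeled piece and the only lower bound on the piece you halve into. The peeled mass can therefore exceed what you keep, and later rounds shrink the kept mass further. For example, take eight support points and your thresholds $1/(2t)$ relative to the current mass:
\begin{itemize}
  \item peel $y_1$ and then $y_2$, with $g(y_1)=g(y_2)=-1/140$;
  \item halve (a $3$--$3$ split) into three points of total mass about $0.084$;
  \item halve again into a single point $x^\star$ with $g(x^\star)=1/70$.
\end{itemize}
Every threshold test is passed. The lower bound your argument certifies, $\lvert g(x^\star)\rvert-\lvert g(y_1)\rvert-\lvert g(y_2)\rvert$, is then $0$.

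\textbf{The fallbacks.} The geometric schedule $\tau^{(k)}=s^{-k}/4$ has the same defect. It compares round-$k$ losses with the mass available in round $k$, not with the final coefficient they are subtracted from.

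What your case split does prove is the one-step induction with $\tau_t=c_{t-1}/4$. Both cases then close with $c_t=c_{t-1}c_{\lfloor t/2\rfloor}/4$. This is a binary-partition-type recursion, so $c_s=\exp(-s^{\Theta(\log s)})$; already $c_8=2^{-34}$ against the required $2^{-10}$. That gives only a qualitative version of the lemma.

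\textbf{The missing idea.} Argue contrapositively, and split along every coordinate of the \emph{original} $g$ at once rather than restricting sequentially. Sequential restriction is exactly what makes the losses compound. The steps are:
\begin{itemize}
  \item Let $r$ be the support size, $d=\lfloor\log_2 r\rfloor$, and $A_k(h):=\max_{\lvert W\rvert\le k}\lvert\widehat h(W)\rvert$. Prove $\lVert g\rVert_1\le 2r^dA_d(g)$ by induction on $r$.
  \item For each coordinate $j$, let $h_j$ be $g$ restricted to the minority value of $x_j$, so its support has size at most $r/2$.
  \item Your fact~(a), applied to $g$ itself, gives $A_{d-1}(h_j)\le A_d(g)$. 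The induction hypothesis then gives $\lVert h_j\rVert_1\le r^{d-1}A_d(g)$.
  \item Every support point other than the simultaneous-majority string $x^\star=(b_1,\dots,b_n)$ lies in some minority side. So at most $r$ of the $h_j$ cover these points.
  \item If $x^\star$ is itself in the support, use $\lvert g(x^\star)\rvert\le\lvert\widehat g(\varnothing)\rvert+\sum_{x\ne x^\star}\lvert g(x)\rvert$. This costs only a factor $2$.
\end{itemize}
Every piece is bounded by the same quantity $A_d(g)$, so nothing is ever subtracted from a small coefficient. That is precisely what your peeling chain cannot arrange.
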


\begin{proof}
  The claim is immediate when $g=0$. Otherwise, let
  $r:=\lvert\operatorname{supp}(g)\rvert\ge1$ denote the actual support size.
  For any function $h:\{0,1\}^n\to\mathbb R$ and integer $k\ge0$, write
  \begin{equation}
    d:=\lfloor\log_2r\rfloor,
    \qquad
    A_k(h):=\max_{W\subseteq[n]:\,\lvert W\rvert\le k}\lvert\widehat h(W)\rvert.
  \end{equation}
  We prove
  \begin{equation}
    \lVert g\rVert_1\le 2r^d A_d(g)
    \label{eq:sparse-fourier-auxiliary}
  \end{equation}
  by induction on $r$. For the base case $r=1$, we have $d=0$ and $A_0(g)=\lvert\widehat g(\varnothing)\rvert=\lVert g\rVert_1$, so Eq.~\eqref{eq:sparse-fourier-auxiliary} holds.

  For the induction step, let $r\ge2$ and assume that Eq.~\eqref{eq:sparse-fourier-auxiliary} holds for every nonzero real function on $\{0,1\}^n$ with support size strictly smaller than $r$. Let $\Omega=\operatorname{supp}(g)$.
  For each $j\in [n]$, choose a majority bit $b_j = \arg\max_{b\in\{0,1\}} \lvert\{ x\in \Omega: x_j=b \}\rvert$. If $\lvert\{ x\in \Omega: x_j=0 \}\rvert = \lvert\{ x\in \Omega: x_j=1 \}\rvert$, take $b_j=0$. Let
  \begin{equation}
    \Omega_j:=\{x\in\Omega:x_j\ne b_j\}.
  \end{equation}
  Define $h_j:\{0,1\}^n\to\mathbb R$ by setting $h_j(x)=g(x)$ on $\Omega_j$ and $h_j(x)=0$ elsewhere, and write $r_j=\lvert\Omega_j\rvert$.
  If $W\subseteq[n]\setminus\{j\}$, we have
  \begin{equation}
    \begin{aligned}
    \widehat h_j(W)
    &=\sum_x \frac{h_j(x)+(-1)^{x_j + b_j + 1}h_j(x)}{2}\prod_{k\in W}(-1)^{x_k}\\
    &=\frac12\left(
    \widehat g(W)+(-1)^{b_j+1}\widehat g(W\cup\{j\})
    \right).
    \end{aligned}
    \label{eq:sparse-fourier-restriction}
  \end{equation}
  The bit $x_j$ is fixed at $1-b_j$ on the support of $h_j$, so $\widehat h_j(W)=(-1)^{1-b_j}\widehat h_j(W\setminus\{j\})$ when $j\in W$. Applying Eq.~\eqref{eq:sparse-fourier-restriction} to sets that do not contain $j$ therefore bounds every coefficient of $h_j$ of degree at most $d-1$ by $A_d(g)$, giving
  \begin{equation}
    A_{d-1}(h_j)\le A_d(g).
  \end{equation}

  Since $b_j$ is a majority bit, $0\le r_j\le\lfloor r/2\rfloor<r$. If $r_j=0$, then $h_j=0$ and contributes no mass. If $r_j=1$, the exact singleton identity gives
  \begin{equation}
    \lVert h_j\rVert_1
    =A_0(h_j)\le A_{d-1}(h_j)\le A_d(g),
  \end{equation}
  where $d\ge1$ because $r\ge2$. For $r_j\ge2$, set $d_j:=\lfloor\log_2r_j\rfloor$. Then $1\le d_j\le d-1$, and the induction hypothesis applies because $r_j<r$. Using $2r_j\le r$, we obtain
  \begin{equation}
    \begin{aligned}
      \lVert h_j\rVert_1
      &\le 2r_j^{d_j}A_{d_j}(h_j)\\
      &\le (2r_j)^{d_j}A_{d_j}(h_j)\\
      &\le r^{d-1}A_d(g).
    \end{aligned}
  \end{equation}
  Here $2\le2^{d_j}$, $(2r_j)^{d_j}\le r^{d-1}$, and $A_{d_j}(h_j)\le A_{d-1}(h_j)\le A_d(g)$. Since $r^{d-1}\ge1$, the empty and singleton cases obey the same bound. Thus, for every $j\in[n]$,
  \begin{equation}
    \lVert h_j\rVert_1\le r^{d-1}A_d(g).
    \label{eq:sparse-fourier-minority-mass}
  \end{equation}

  Let $x^\star=(b_1,\ldots,b_n)$ be the simultaneous majority bitstring. Suppose first that $x^\star\notin\Omega$. Every $x\in\Omega$ then differs from $x^\star$ in at least one bit position; choose one such position $j(x)$. Since $x_{j(x)}\ne b_{j(x)}$, the bitstring $x$ belongs to $\Omega_{j(x)}$. Thus the set $J:=\{j(x):x\in\Omega\}$ satisfies
  \begin{equation}
    \lvert J\rvert\le r,
    \qquad
    \Omega=\bigcup_{j\in J}\Omega_j.
  \end{equation}
  Using $h_j(x)=g(x)$ on $\Omega_j$ and zero elsewhere, followed by Eq.~\eqref{eq:sparse-fourier-minority-mass}, gives
  \begin{equation}
    \begin{aligned}
      \lVert g\rVert_1
      &=\sum_{x\in\Omega}\lvert g(x)\rvert\\
      &\le\sum_{j\in J}\sum_{x\in\Omega_j}\lvert g(x)\rvert\\
      &=\sum_{j\in J}\lVert h_j\rVert_1\\
      &\le\lvert J\rvert r^{d-1}A_d(g)
      \le r^dA_d(g).
    \end{aligned}
  \end{equation}

  If $x^\star\in\Omega$, it belongs to none of the sets $\Omega_j$, so its contribution must be bounded separately. The same construction covers
  $\Omega\setminus\{x^\star\}$ with at most $r-1$ sets $\Omega_j$. Writing
  \begin{equation}
    R:=\sum_{x\in\Omega\setminus\{x^\star\}}\lvert g(x)\rvert,
  \end{equation}
  we obtain $R\le(r-1)r^{d-1}A_d(g)$. The constant coefficient also gives
  \begin{equation}
    g(x^\star)
    =\widehat g(\varnothing)
    -\sum_{x\in\Omega\setminus\{x^\star\}}g(x),
  \end{equation}
  so $\lvert g(x^\star)\rvert\le A_d(g)+R$. Consequently,
  \begin{equation}
    \lVert g\rVert_1
    \le A_d(g)+2R
    \le 2r^dA_d(g).
  \end{equation}
  Both cases give $\lVert g\rVert_1\le2r^dA_d(g)$, proving Eq.~\eqref{eq:sparse-fourier-auxiliary} at support size $r$ and completing the induction.

  A set $W$ attaining $A_d(g)$ satisfies $\lvert W\rvert\le d\le\log_2r$, and Eq.~\eqref{eq:sparse-fourier-auxiliary} gives
  \begin{equation}
    \lvert\widehat g(W)\rvert=A_d(g)
    \ge\frac{\lVert g\rVert_1}{2r^d}
    \ge\frac{\lVert g\rVert_1}{2r^{\log_2r}}.
  \end{equation}
  Finally, $r\le s$ gives $r^{\log_2r} \le s^{\log_2s}$, and it concludes the proof.
\end{proof}

We now bound the size of the low-contrast set $L$.

\begin{lemma}[Low-contrast-qubit bound]
  \label{lem:low-contrast-qubit-bound}
  Let $p$ be a probability mass function on $\{0,1\}^n$ with support size at
  most $s$.
  The set of low-contrast qubits satisfies
  \begin{equation}
    \lvert L\rvert\le2\log_2s.
    \label{eq:low-contrast-qubits-bound}
  \end{equation}
\end{lemma}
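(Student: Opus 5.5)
The plan is to turn the condition $\lVert D_ip\rVert_1\le1/2$ into a lower bound on probability mass carried by hypercube edges in direction $i$ inside $\operatorname{supp}(p)$. Summing over $i\in L$ then gives a quantity that the edge-isoperimetric inequality on the Boolean cube bounds by $\tfrac12\log_2 s$.

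\textbf{Step 1: rewrite the contrast.} Pair the bitstrings $x$ with $x\oplus e_i$, where $e_i$ flips bit $i$. Each unordered pair $\{u,v\}$ with $u_i=0$, $v=u\oplus e_i$ appears twice in the sum defining $\lVert D_ip\rVert_1$, once for each value of $x_i$. Hence
\begin{equation*}
  \lVert D_ip\rVert_1=\sum_{\{u,v\}\in E_i}\lvert p(u)-p(v)\rvert ,
\end{equation*}
where $E_i$ is the set of direction-$i$ edges of the cube. Since $\sum_{E_i}(p(u)+p(v))=1$ and $\lvert a-b\rvert=a+b-2\min(a,b)$, we get
\begin{equation*}
  \sum_{\{u,v\}\in E_i}\min(p(u),p(v))=\tfrac12\bigl(1-\lVert D_ip\rVert_1\bigr).
\end{equation*}
So for every $i\in L$ this quantity is at least $1/4$.

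\textbf{Step 2: sum over $L$.} Write $E$ for the set of all cube edges. Summing the Step~1 bound over $i\in L$ gives
\begin{equation*}
  \frac{\lvert L\rvert}{4}\le\sum_{i\in L}\sum_{\{u,v\}\in E_i}\min(p(u),p(v))\le\sum_{\{u,v\}\in E}\min(p(u),p(v)).
\end{equation*}

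\textbf{Step 3: layer-cake and edge isoperimetry.} Let $A_t:=\{x:p(x)>t\}$ and let $e(A)$ denote the number of cube edges with both endpoints in $A$. Use the identity
\begin{equation*}
  \min(p(u),p(v))=\int_0^\infty\mathbf 1[p(u)>t]\,\mathbf 1[p(v)>t]\,dt .
\end{equation*}
It yields
\begin{equation*}
  \sum_{\{u,v\}\in E}\min(p(u),p(v))=\int_0^\infty e(A_t)\,dt .
\end{equation*}
Now invoke the classical edge-isoperimetric inequality for the hypercube (Harper, Bernstein, Hart): every $A\subseteq\{0,1\}^n$ satisfies $e(A)\le\tfrac12\lvert A\rvert\log_2\lvert A\rvert$. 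Each $A_t\subseteq\operatorname{supp}(p)$ has size at most $s$, so $e(A_t)\le\tfrac12\lvert A_t\rvert\log_2 s$. Then
\begin{equation*}
  \int_0^\infty e(A_t)\,dt\le\frac{\log_2s}{2}\int_0^\infty\lvert A_t\rvert\,dt=\frac{\log_2s}{2}\sum_xp(x)=\frac{\log_2s}{2}.
\end{equation*}
Combining with Step~2 gives $\lvert L\rvert/4\le(\log_2s)/2$, that is, $\lvert L\rvert\le2\log_2s$. The case $s=1$ is consistent: then $p$ is a point mass, every $\lVert D_ip\rVert_1=1$, and $L=\varnothing$.

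\textbf{Main obstacle.} The only nontrivial ingredient is the edge-isoperimetric inequality $e(A)\le\tfrac12\lvert A\rvert\log_2\lvert A\rvert$. I would cite it as standard. If a self-contained proof is wanted, I would prove it by induction on $n$: split $A$ by the last coordinate into parts of sizes $a\ge b$, note that the number of cross edges is at most $b$, and verify
\begin{equation*}
  \tfrac12 a\log_2a+\tfrac12 b\log_2b+b\le\tfrac12(a+b)\log_2(a+b),
\end{equation*}
which is a routine concavity check. The remaining steps (the pairing identity and the layer-cake representation) are bookkeeping.
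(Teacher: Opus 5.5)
Your proof is correct, and it takes a genuinely different route from the paper's. The paper argues information-theoretically. It uses $h_2(t)\ge1-\lvert2t-1\rvert$ to get $H(X_i\mid X_{-i})\ge1-\lVert D_ip\rVert_1$, derives $\sum_iH(X_i\mid X_{-i})\le H(X)\le\log_2s$ from the chain rule and monotonicity under conditioning, and then charges each $i\in L$ at least $1/2$. Since $1-\lvert2t-1\rvert=2\min(t,1-t)$, the paper's lower bound on $H(X_i\mid X_{-i})$ is exactly your $2\sum_{E_i}\min(p(u),p(v))$. So both proofs establish the same intermediate inequality $2\sum_{\{u,v\}\in E}\min(p(u),p(v))\le\log_2s$; they differ only in how they prove it. You pass to superlevel sets via the layer-cake identity and then invoke edge isoperimetry for sets, whereas the paper applies the entropy (Han-type) argument directly to the weighted distribution $p$. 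The standard entropy proof of the edge-isoperimetric inequality is this same argument run on $X$ uniform on $A$, where $H(X_i\mid X_{-i})=2e_i(A)/\lvert A\rvert$. The paper's proof therefore amounts to running that proof once on $p$ itself, which makes your layer-cake reduction unnecessary. The paper's version is fully self-contained and gives the slightly stronger $\lvert L\rvert\le2H(X)$. Yours gives a transparent combinatorial picture (a low-contrast direction forces $p$-weight onto internal edges of the support), at the cost of importing a classical theorem. Your inductive sketch of that theorem is sound: $f(b)=(a+b)\log_2(a+b)-a\log_2a-b\log_2b-2b$ is concave on $[0,a]$ with $f(0)=f(a)=0$.
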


\begin{proof}
  Write $H$ and $h_2$ for Shannon entropy and binary entropy. Let $X=(X_1,\ldots,X_n)\in \{0,1\}^n$ be a random variable drawn from distribution $p$, and let $X_{-i}:=(X_j)_{j\in[n]\setminus\{i\}}$. We have
  \begin{equation}
    \lVert D_ip\rVert_1
    = \sum_{x_{-i}\in\{0,1\}^{n-1}}
    \mathbb P[X_{-i}=x_{-i}]
    \left\lvert
    \mathbb P[X_i=0\mid X_{-i}=x_{-i}]
    -\mathbb P[X_i=1\mid X_{-i}=x_{-i}]
    \right\rvert.
    \label{eq:low-contrast-qubits-conditional-bias}
  \end{equation}
  Note that $h_2(t)\ge1-\lvert2t-1\rvert$ for all $t\in[0,1]$. It then implies
  \begin{equation}
    H(X_i\mid X_{-i})\ge1-\lVert D_ip\rVert_1.
    \label{eq:low-contrast-qubits-entropy-charge}
  \end{equation}

  It remains to control the sum of these conditional entropies.  Write
  $X_{<j}=(X_1,\ldots,X_{j-1})$.  The chain rule and monotonicity under adding
  conditioning variables give, for each $i$,
  \begin{equation}
    \begin{aligned}
      H(X_{-i})
       & =\sum_{j\ne i}
      H\!\left(X_j\mid(X_k)_{k<j,\,k\ne i}\right) \\
       & \ge\sum_{j\ne i}H(X_j\mid X_{<j}).
    \end{aligned}
  \end{equation}
  Summing over $i$ on both sides results in
  \begin{equation}
    \sum_{i=1}^nH(X_{-i})\ge(n-1)H(X).
    \label{eq:low-contrast-qubits-chain-rule}
  \end{equation}
  Because $H(X_i\mid X_{-i})=H(X)-H(X_{-i})$, Eq.~\eqref{eq:low-contrast-qubits-chain-rule}
  implies
  \begin{equation}
    \sum_{i=1}^nH(X_i\mid X_{-i})
    \le H(X)\le\log_2s,
  \end{equation}
  where the last inequality uses that the support of $p$ is at most $s$. Each $i\in L$ has $\lVert D_ip\rVert_1\le1/2$ and therefore contributes at least $1/2$ by Eq.~\eqref{eq:low-contrast-qubits-entropy-charge}. Thus $\lvert L\rvert/2\le\log_2s$, and it concludes the proof.
\end{proof}

Now, we are ready to show the following:

\begin{lemma}[Fourier detection for non-low-contrast qubits]
  \label{lem:fourier-detection}
  If $i\notin L$, there exists $W\subseteq[n]$ with $i\in W$ such that
  \begin{equation}
    \lvert W\rvert\le\ell_0,
    \qquad
    \lvert\widehat p(W)\rvert>
    \frac1{4s^{\log_2s}}.
    \label{eq:fourier-detection-detection}
  \end{equation}
\end{lemma}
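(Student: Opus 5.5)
Apply Lemma~\ref{lem:sparse-low-degree-fourier-coefficient} to the discrete derivative $g:=D_ip$, not to $p$ itself. This choice is natural because the Fourier coefficients of $D_ip$ are, up to a factor, exactly the coefficients of $p$ on sets containing $i$. Concretely, for $W\subseteq[n]\setminus\{i\}$ we expect
\begin{equation*}
  \widehat{D_ip}(W)=\widehat p(W\cup\{i\}),
\end{equation*}
while $\widehat{D_ip}(W)=0$ whenever $i\in W$.

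To verify this, note that $D_ip$ does not depend on $x_i$. Summing over both values of $x_i$, the two halves $p(x^{(i\mapsto0)})/2$ and $-p(x^{(i\mapsto1)})/2$ each appear twice. This gives $\sum_{x}(D_ip)(x)\chi_W(x)=\sum_{x}p(x)(-1)^{x_i}\chi_W(x)$, where $\chi_W(x)=\prod_{j\in W}(-1)^{x_j}$.

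Next, check that $g$ is sparse enough. Since $p$ is supported on at most $s$ bitstrings, $g(x)\neq0$ forces $x^{(i\mapsto0)}$ or $x^{(i\mapsto1)}$ to lie in $\operatorname{supp}(p)$. Hence $|\operatorname{supp}(g)|\le 2s$.

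Applying Lemma~\ref{lem:sparse-low-degree-fourier-coefficient} with sparsity $2s$ yields $W'$ with $|W'|\le\log_2(2s)=1+\log_2s$ and
\begin{equation*}
  |\widehat g(W')|\ge \frac{\lVert g\rVert_1}{2(2s)^{\log_2(2s)}}.
\end{equation*}
This constant is worse than the one we want. To fix it, use the finer structure of $g$: it is constant in $x_i$, so it is really a function on $\{0,1\}^{[n]\setminus\{i\}}$ duplicated twice. Restricted to $x_i=0$, call it $\tilde g$; its support has size at most $s$, since each element comes from a distinct projection $x_{-i}$ of a support point of $p$. Its unnormalized Fourier coefficients on $[n]\setminus\{i\}$ satisfy $\widehat g(W)=2\widehat{\tilde g}(W)$, and $\lVert g\rVert_1=2\lVert\tilde g\rVert_1$.

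Now apply the lemma to $\tilde g$ with sparsity $s$. This gives $W'\subseteq[n]\setminus\{i\}$ with $|W'|\le\log_2 s$ and $|\widehat{\tilde g}(W')|\ge\lVert\tilde g\rVert_1/(2s^{\log_2s})$. Multiplying both sides by $2$,
\begin{equation*}
  |\widehat p(W'\cup\{i\})|=|\widehat g(W')|\ge\frac{\lVert g\rVert_1}{2s^{\log_2s}}=\frac{\lVert D_ip\rVert_1}{2s^{\log_2s}}>\frac{1}{4s^{\log_2s}},
\end{equation*}
where the last step uses $i\notin L$, i.e.\ $\lVert D_ip\rVert_1>1/2$.

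Set $W=W'\cup\{i\}$. Then $i\in W$ and $|W|\le\lfloor\log_2s\rfloor+1$. Also $|W|\le n$ trivially, so $|W|\le\ell_0$.

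The main point needing care is the normalization bookkeeping. The identity between $\widehat{D_ip}$ and $\widehat p$ must carry the factor $2$ correctly, and the $\ell_1$ norm convention must count $g$ over all $2^n$ points rather than over $\{0,1\}^{n-1}$. If the factors were off by $2$, the threshold $1/(4s^{\log_2s})$ would instead become $1/(8s^{\log_2 s})$. I will check these directly from the definitions in the proof.
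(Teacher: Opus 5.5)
Your argument is correct and is essentially the paper's proof. The paper works directly with the slice-difference function $g_i(x_{-i})=p(x^{(i\mapsto0)})-p(x^{(i\mapsto1)})$ on $\{0,1\}^{n-1}$, which is $2\tilde g$ in your notation (an immaterial rescaling); it uses $\widehat g_i(S)=\widehat p(S\cup\{i\})$ and $\lVert g_i\rVert_1=\lVert D_ip\rVert_1>1/2$, and applies Lemma~\ref{lem:sparse-low-degree-fourier-coefficient} with sparsity $s$, so your detour through the $2s$-sparse $D_ip$ on $n$ bits is unnecessary.

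The one case you skip is $n=1$. There $\tilde g$ lives on $\{0,1\}^0$, and the lemma, which is stated for positive $n$, does not literally apply. The fix is to take $W'=\varnothing$ directly, since $\lvert\widehat{\tilde g}(\varnothing)\rvert=\lVert\tilde g\rVert_1$; this is exactly how the paper handles it.
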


\begin{proof}
  Fix $i\notin L$. Define the slice-difference function $g_i:\{0,1\}^{n-1}\to\mathbb R$ on the remaining bits by
  \begin{equation}
    g_i(x_{-i}):=p(x^{(i\mapsto0)})-p(x^{(i\mapsto1)})
  \end{equation}
  for all $x \in \{0,1\}^n$. Note that $g_i(x_{-i})=(2D_ip)(x)$. The support of $g_i$ is contained in the projection of $\operatorname{supp}(p)$ onto the remaining bits, so $\lvert\operatorname{supp}(g_i)\rvert\le s$. Since $D_ip(x)=g_i(x_{-i})/2$, summing over the two values of $x_i$ gives
  \begin{equation}
    \lVert D_ip\rVert_1
    =\sum_{x_{-i}}\sum_{x_i\in\{0,1\}}
    \frac{\lvert g_i(x_{-i})\rvert}{2}
    =\lVert g_i\rVert_1>\frac12.
  \end{equation}

  For $S\subseteq[n]\setminus\{i\}$, expanding the unnormalized Fourier coefficient gives the identity
  \begin{equation}
    \widehat g_i(S) =\widehat p(S\cup\{i\}).
  \end{equation}
  For $n\ge2$, apply Lemma~\ref{lem:sparse-low-degree-fourier-coefficient} to $g_i$ to obtain $S\subseteq[n]\setminus\{i\}$ with $\lvert S\rvert\le\log_2s$ and
  \begin{equation}
    \begin{aligned}
      \left\lvert\widehat p(S\cup\{i\})\right\rvert
      &=\lvert\widehat g_i(S)\rvert\\
      &\ge\frac{\lVert g_i\rVert_1}{2s^{\log_2s}}
      >\frac1{4s^{\log_2s}}.
    \end{aligned}
  \end{equation}
  If $n=1$, the same conclusion holds with $S=\varnothing$, since $g_i$ is a function on the single empty bitstring and $\lvert\widehat g_i(\varnothing)\rvert=\lVert g_i\rVert_1>1/2$.
  Set $W=S\cup\{i\}=\{i\}$. Then Eq.~\eqref{eq:fourier-detection-detection} trivially holds.
\end{proof}

\subsection{Factorization of Pauli moment tensors and the difficult set}
\label{sec:exact-basis-recovery}

Write $B_i=\{\lvert\phi_0^{(i)}\rangle,\lvert\phi_1^{(i)}\rangle\}$, $\mathbf n_i=\mathbf n_{B_i}$, and $\lvert\phi_x\rangle=\bigotimes_i\lvert\phi_{x_i}^{(i)}\rangle$ for $x\in\{0,1\}^n$. The diagonal part of $\rho$ is denoted by
\begin{equation}
  \rho_{\mathrm{diag}}
  :=\sum_xp(x)\lvert\phi_x\rangle\!\langle\phi_x\rvert,
\end{equation}
and we define the diagonal Pauli moment tensor $T_W^{\mathrm{diag}}$ by replacing $\rho$ with $\rho_{\mathrm{diag}}$ in Eq.~\eqref{eq:product-pauli-tensor}. Then, we have the following factorization of $T_W^{\mathrm{diag}}$.

\begin{lemma}[Diagonal tensor factorization]
  \label{lem:diagonal-tensor-factorization}
  For every nonempty $W\subseteq[n]$,
  \begin{equation}
    T_W^{\mathrm{diag}}
    =\widehat p(W)\bigotimes_{i\in W}\mathbf n_i.
    \label{eq:diagonal-tensor-factorization}
  \end{equation}
\end{lemma}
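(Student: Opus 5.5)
The plan is a direct computation of each entry of $T_W^{\mathrm{diag}}$ using linearity of the trace and the product structure of $\rho_{\mathrm{diag}}$. Fix a nonempty $W$ and a Pauli label $\alpha\in\{X,Y,Z\}^W$. By linearity,
\begin{equation*}
  [T_W^{\mathrm{diag}}]_\alpha=\sum_{x}p(x)\,\langle\phi_x\rvert\Bigl(\bigotimes_{i\in W}\alpha_i\Bigr)\otimes I_{[n]\setminus W}\lvert\phi_x\rangle .
\end{equation*}
Since $\lvert\phi_x\rangle$ is a product state, the expectation factorizes as $\prod_{i\in W}\langle\phi^{(i)}_{x_i}\rvert\alpha_i\lvert\phi^{(i)}_{x_i}\rangle\cdot\prod_{j\notin W}\langle\phi^{(j)}_{x_j}\vert\phi^{(j)}_{x_j}\rangle$, and the second product equals one by normalization.

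Next I would evaluate each single-qubit factor using Eq.~\eqref{eq:basis-bloch-vector}. Writing $\lvert\phi_b\rangle\!\langle\phi_b\rvert=(I+(-1)^b\mathbf n\cdot\boldsymbol\sigma)/2$ and using $\operatorname{Tr}(\sigma_a)=0$ and $\operatorname{Tr}(\sigma_a\sigma_c)=2\delta_{ac}$, we get $\langle\phi_b^{(i)}\rvert\sigma_a\lvert\phi_b^{(i)}\rangle=(-1)^b[\mathbf n_i]_a$. Thus the factor for qubit $i\in W$ is $(-1)^{x_i}[\mathbf n_i]_{\alpha_i}$.

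Substituting back gives
\begin{equation*}
  [T_W^{\mathrm{diag}}]_\alpha=\Bigl(\sum_x p(x)\prod_{i\in W}(-1)^{x_i}\Bigr)\prod_{i\in W}[\mathbf n_i]_{\alpha_i}=\widehat p(W)\Bigl[\bigotimes_{i\in W}\mathbf n_i\Bigr]_\alpha ,
\end{equation*}
where the bracketed sum is exactly the unnormalized Fourier coefficient of Eq.~\eqref{eq:unnormalized-fourier}. This holds entrywise for every $\alpha$, so it establishes Eq.~\eqref{eq:diagonal-tensor-factorization}.

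The argument is routine. The only point needing care is bookkeeping: the tensor indices must be ordered by increasing qubit index, and the sign convention of $\mathbf n_i$ must be the one tied to the labels $\lvert\phi_0^{(i)}\rangle,\lvert\phi_1^{(i)}\rangle$ in $B_i$. Flipping $\mathbf n_i\to-\mathbf n_i$ would swap labels and change the sign of $\widehat p(W)$ consistently, so the identity is label-consistent.
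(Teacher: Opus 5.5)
Your proposal is correct and follows essentially the same route as the paper: expand $\rho_{\mathrm{diag}}$ by linearity, factorize each expectation over the product basis states, use $\langle\phi_b^{(i)}\rvert\boldsymbol\sigma\lvert\phi_b^{(i)}\rangle=(-1)^b\mathbf n_i$, and collect the signs into $\widehat p(W)$. The only differences are that you derive this single-qubit identity from Eq.~\eqref{eq:basis-bloch-vector} and write out the identity factors on qubits outside $W$, both of which the paper leaves implicit.
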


\begin{proof}
  Since $\mathbf n_i$ is the Bloch vector of $\lvert\phi_0^{(i)}\rangle$ in our convention, the orthogonal state $\lvert\phi_1^{(i)}\rangle$ has Bloch vector $-\mathbf n_i$. Thus, with $\boldsymbol\sigma=(X,Y,Z)$, we have
  \begin{equation}
    \langle\phi_{x_i}^{(i)}\rvert\boldsymbol\sigma
    \lvert\phi_{x_i}^{(i)}\rangle
    =(-1)^{x_i}\mathbf n_i.
  \end{equation}
  Therefore, we have
  \begin{equation}
    \begin{aligned}
      T_W^{\mathrm{diag}}
       & =\sum_xp(x) \bigotimes_{i\in W} \langle\phi_{x_i}^{(i)}\rvert\boldsymbol\sigma \lvert\phi_{x_i}^{(i)}\rangle\\
       & =\sum_xp(x)\prod_{i\in W}(-1)^{x_i} \bigotimes_{i\in W}\mathbf n_i\\
       & = \widehat p(W)\bigotimes_{i\in W}\mathbf n_i.
    \end{aligned}
  \end{equation}
\end{proof}

If $\widehat p(W)\ne0$, $T_W^{\mathrm{diag}}$ is a rank-one tensor. For any qubit $i\in W$, we can reshape $T_W^{\mathrm{diag}}$ as a matrix with row index $\alpha_i$ and column index $(\alpha_j)_{j\in W\setminus\{i\}}$ and perform singular value decomposition (SVD). Since the local Bloch vectors have unit norm, its leading singular value is $\lvert\widehat p(W)\rvert$ and its leading left singular vector is $\mathbf n_i$, up to sign.

The tensors recorded in Step~1, however, are formed from the full state $\rho$ and may retain off-diagonal entries. Now, we explain how the contributions of these off-diagonal entries are removed by tensor contraction as follows. For an off-diagonal entry $(x,y)$, let $F(x,y):=\{i\in[n]:x_i\ne y_i\}$. For $W,K\subseteq[n]$, we denote
\begin{equation}
  \Phi_{W,K}(T_W)
  :=T_W\bigtimes_{j\in K\cap W}\mathbf n_j^{T}.
\end{equation}

\begin{lemma}[Cancellation of off-diagonal contributions]
  \label{lem:off-diagonal-cancellation}
  There exists $J_\star\subseteq[n]$ with $\lvert J_\star\rvert\le s$ such that, for every nonempty $W\subseteq[n]$,
  \begin{equation}
    \Phi_{W,J_\star}(T_W)
    =\widehat p(W)
    \bigotimes_{i\in W\setminus J_\star}\mathbf n_i.
    \label{eq:off-diagonal-cancellation-filtering}
  \end{equation}
\end{lemma}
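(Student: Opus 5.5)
Construct $J_\star$ as a hitting set for the flip sets, then expand $T_W$ entry by entry in the basis $B$ and show that contracting against $\mathbf n_j$ for $j\in J_\star\cap W$ kills every off-diagonal term while leaving the diagonal part untouched.

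\emph{Construction of $J_\star$.} Let $\mathcal F:=\{F(x,y):[\rho]_B(x,y)\neq0,\ x\ne y\}$. Since $[\rho]_B$ has at most $s$ nonzero entries, $\lvert\mathcal F\rvert\le s$. For each $F\in\mathcal F$ pick a representative $j_F\in F$, and set $J_\star:=\{j_F:F\in\mathcal F\}$. Then $\lvert J_\star\rvert\le s$, and $J_\star\cap F(x,y)\neq\varnothing$ for every nonzero off-diagonal entry. (Using Hermiticity, $F(x,y)=F(y,x)$, one could even get $s/2$, but $s$ suffices.)

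\emph{Entrywise expansion.} Write $\rho=\rho_{\mathrm{diag}}+\sum_{x\ne y}[\rho]_B(x,y)\lvert\phi_x\rangle\!\langle\phi_y\rvert$. The map $T\mapsto\Phi_{W,J_\star}(T)$ is linear, so it suffices to treat each term separately. For a single term $\lvert\phi_x\rangle\!\langle\phi_y\rvert$, the tensor with entries $\operatorname{Tr}[(\bigotimes_{i\in W}\alpha_i)\lvert\phi_x\rangle\!\langle\phi_y\rvert]$ equals
\[
\prod_{k\notin W}\langle\phi_{y_k}^{(k)}\vert\phi_{x_k}^{(k)}\rangle\;\bigotimes_{i\in W}\langle\phi_{y_i}^{(i)}\rvert\boldsymbol\sigma\lvert\phi_{x_i}^{(i)}\rangle .
\]
Contracting slot $j$ with $\mathbf n_j$ replaces the factor $\langle\phi_{y_j}^{(j)}\rvert\boldsymbol\sigma\lvert\phi_{x_j}^{(j)}\rangle$ by $\langle\phi_{y_j}^{(j)}\rvert\mathbf n_j\cdot\boldsymbol\sigma\lvert\phi_{x_j}^{(j)}\rangle$. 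Because $\mathbf n_j\cdot\boldsymbol\sigma=\lvert\phi_0^{(j)}\rangle\!\langle\phi_0^{(j)}\rvert-\lvert\phi_1^{(j)}\rangle\!\langle\phi_1^{(j)}\rvert$ by Eq.~\eqref{eq:basis-bloch-vector}, this scalar equals $(-1)^{x_j}\delta_{x_jy_j}$.

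For an off-diagonal term with $x\ne y$, there are two cases:
\begin{itemize}
  \item If $F(x,y)\not\subseteq W$, some $k\notin W$ has $x_k\ne y_k$, and the prefactor $\langle\phi_{y_k}^{(k)}\vert\phi_{x_k}^{(k)}\rangle=0$. The term contributes nothing even before contraction.
  \item If $F(x,y)\subseteq W$, there is $j\in J_\star\cap F(x,y)\subseteq J_\star\cap W$. This slot is contracted and $x_j\ne y_j$, so the term is annihilated.
\end{itemize}
Hence $\Phi_{W,J_\star}(T_W)=\Phi_{W,J_\star}(T^{\mathrm{diag}}_W)$.

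\emph{The diagonal part.} By Lemma~\ref{lem:diagonal-tensor-factorization}, $T^{\mathrm{diag}}_W=\widehat p(W)\bigotimes_{i\in W}\mathbf n_i$. Contracting slot $j$ with $\mathbf n_j$ multiplies by $\mathbf n_j\cdot\mathbf n_j=1$. This gives $\widehat p(W)\bigotimes_{i\in W\setminus J_\star}\mathbf n_i$, which is Eq.~\eqref{eq:off-diagonal-cancellation-filtering}. If $W\subseteq J_\star$, the result is read as the scalar $\widehat p(W)$.

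The only delicate point is the bookkeeping in the expansion: keeping the prefactor from qubits outside $W$, which is what handles flip sets not contained in $W$, and applying the contraction identity $(-1)^{x_j}\delta_{x_jy_j}$ consistently. Beyond that, everything is linear algebra.
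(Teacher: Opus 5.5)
Your proposal is correct and follows the paper's proof essentially step for step. It uses the same hitting-set construction of $J_\star$ (one representative per distinct flip set), the same two-case split on whether $F(x,y)\subseteq W$, the same annihilation identity from $\mathbf n_j\cdot\boldsymbol\sigma$ being diagonal in $B_j$, and the same final appeal to Lemma~\ref{lem:diagonal-tensor-factorization} with $\mathbf n_j^T\mathbf n_j=1$. Your explicit treatment of the prefactor from qubits outside $W$ and of the case $W\subseteq J_\star$ is slightly more careful than the paper's write-up but changes nothing.
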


\begin{proof}
  Let $\mathcal F:=\{F(x,y):[\rho]_B(x,y)\ne0,\ x\ne y\}$ be the set of distinct nonempty flip sets arising from nonzero off-diagonal entries. For each $F\in\mathcal F$, choose one representative qubit $j_F\in F$ and define
  \begin{equation}
    J_\star:=\{j_F:F\in\mathcal F\}.
  \end{equation}
  Each distinct flip set arises from one of the at most $s$ nonzero matrix entries, so $\lvert\mathcal F\rvert\le s$. Different flip sets may have the same representative; hence $\lvert J_\star\rvert\le\lvert\mathcal F\rvert\le s$.

  Fix a nonempty $W\subseteq[n]$ and a nonzero off-diagonal entry $[\rho]_B(x,y)$. If $F(x,y)\not\subseteq W$, some qubit $k\notin W$ has $x_k\ne y_k$, so the contribution to $T_W$ vanishes through the factor $\langle\phi_{y_k}^{(k)}\rvert I\lvert\phi_{x_k}^{(k)}\rangle=0$. Otherwise, the chosen representative $j:=j_{F(x,y)}$ belongs to $J_\star\cap W$ and satisfies $x_j\ne y_j$. Contracting the tensor against $\mathbf n_j$ replaces its local Pauli matrix element by
  \begin{equation}
    \begin{aligned}
      &\sum_{\alpha_j\in\{X,Y,Z\}}
      [\mathbf n_j]_{\alpha_j}
      \langle\phi_{y_j}^{(j)}\rvert\alpha_j
      \lvert\phi_{x_j}^{(j)}\rangle\\
      &\qquad=\langle\phi_{y_j}^{(j)}\rvert
      \mathbf n_j\cdot\boldsymbol\sigma
      \lvert\phi_{x_j}^{(j)}\rangle=0,
    \end{aligned}
    \label{eq:off-diagonal-cancellation-annihilation}
  \end{equation}
  because $\mathbf n_j\cdot\boldsymbol\sigma$ is diagonal in $B_j$. Thus contraction on $J_\star\cap W$ removes every off-diagonal contribution.

  By linearity, only the diagonal tensor remains. Applying Lemma~\ref{lem:diagonal-tensor-factorization} and using $\mathbf n_j^T\mathbf n_j=1$ gives
  \begin{equation}
    \begin{aligned}
      \Phi_{W,J_\star}(T_W)
      &=\Phi_{W,J_\star}(T_W^{\mathrm{diag}})\\
      &=\widehat p(W)
      \left(\prod_{j\in J_\star\cap W}\mathbf n_j^T\mathbf n_j\right)
      \bigotimes_{i\in W\setminus J_\star}\mathbf n_i\\
      &=\widehat p(W)
      \bigotimes_{i\in W\setminus J_\star}\mathbf n_i.
    \end{aligned}
  \end{equation}
\end{proof}

Define the difficult set $K_\star:=J_\star\cup L$ and call its members difficult qubits. Lemmas~\ref{lem:low-contrast-qubit-bound} and~\ref{lem:off-diagonal-cancellation} give
\begin{equation}
  \lvert K_\star\rvert
  \le\lvert J_\star\rvert+\lvert L\rvert
  \le s+\lceil2\log_2s\rceil.
  \label{eq:difficult-set-size}
\end{equation}
For every nonempty $W\subseteq[n]$ with $\lvert W\rvert\le\ell_0$, further contracting Eq.~\eqref{eq:off-diagonal-cancellation-filtering} against $\mathbf n_j$ for $j\in(L\setminus J_\star)\cap W$ and using $\mathbf n_j^T\mathbf n_j=1$ yields
\begin{equation}
  \Phi_{W,K_\star}(T_W)
  =\widehat p(W)
  \bigotimes_{i\in W\setminus K_\star}\mathbf n_i.
  \label{eq:difficult-set-contracted-signal}
\end{equation}

For every $i\notin K_\star$, Lemma~\ref{lem:fourier-detection} ensures the existence of a set $W\subseteq[n]$ containing $i$ with $\lvert W\rvert\le\ell_0$ and $\lvert\widehat p(W)\rvert>1/(4s^{\log_2s})$. Again, reshaping the contracted tensor in Eq.~\eqref{eq:difficult-set-contracted-signal} as a matrix with row index $\alpha_i$ and column index $(\alpha_j)_{j\in W\setminus(K_\star\cup\{i\})}$ gives a rank-one matrix. Its leading singular value is $\lvert\widehat p(W)\rvert$ and its leading left singular vector is $\mathbf n_i$, up to sign. Note that every qubit not in $K_\star$ has a large singular value due to Lemma~\ref{lem:fourier-detection}. The next section shows how we can robustly recover its basis vector via SVD under estimatation errors of Pauli moment tensors and approximate erors of basis vectors on the difficult qubits.

\subsection{Robust basis recovery}
\label{sec:robust-basis-recovery}

The exact factorization in Eq.~\eqref{eq:difficult-set-contracted-signal} must remain useful when both the Pauli moment tensors and the basis vectors of difficult qubits are approximate. Specifically, the algorithm has access to estimated tensors $\widehat T_W$ and approximate Bloch vectors $\widetilde{\mathbf n}_j$ for $j\in K_\star\cap W$ drawn from an $\varepsilon_2$-net. Throughout the following analysis, $\varepsilon_1$, $\varepsilon_2$, $\xi$, and $\theta$ retain their fixed values in Eq.~\eqref{eq:pauli-moment-learning-final-tolerances}, with $0<\varepsilon<1$, $0<\delta<1$, and positive integers $s,n$.

We first bound the number of copies needed to estimate the Pauli moment tensors. Let $\mathcal E_1$ denote the event that every estimated Pauli moment of order at most $\ell_0$ has error at most $\varepsilon_1$, and let $N_{\mathrm{obs}}$ count the Pauli observables of order at most $\ell_0$. Measuring each nonidentity Pauli observable on fresh copies of $\rho$ gives outcomes in $\{-1,1\}$, so Hoeffding's inequality and a union bound give $\mathbb P(\mathcal E_1^c)\le\delta/3$ using
\begin{equation}
  N_1 :=N_{\mathrm{obs}} \left\lceil \frac{2}{\varepsilon_1^2} \ln\frac{6N_{\mathrm{obs}}}{\delta} \right\rceil.
\label{eq:pauli-moment-learning-low-order-copies}
\end{equation}
copies of $\rho$. Therefore, with probability at least $1-\delta/3$, we have
\begin{equation}
  \lvert[\widehat T_W]_\alpha-[T_W]_\alpha\rvert\le\varepsilon_1
  \label{eq:contraction-stability-tensor-error}
\end{equation}
for every $\alpha\in\{X,Y,Z\}^W$ and every nonempty $W\subseteq[n]$ with $\lvert W\rvert\le\ell_0$.

For nonempty $W\subseteq[n]$ and $K\subseteq[n]$, write the estimated contraction as
\begin{equation}
  \widehat\Phi_{W,K}\!\left(
  \widehat T_W;\{\widetilde{\mathbf n}_j\}_{j\in K\cap W}
  \right)
  :=\widehat T_W
  \bigtimes_{j\in K\cap W}\widetilde{\mathbf n}_j^{T}.
\end{equation}
Only proposed Bloch vectors on $K\cap W$ enter this contraction; recovered vectors on $[n]\setminus K$ do not enter later contractions. The basis comparison uses angular distance, which ignores label swaps, whereas the vector differences in the perturbation bound require sign alignment.

\begin{lemma}[Stability of contracted tensor]
  \label{lem:contraction-stability}
  Let $\varnothing\ne W\subseteq[n]$ satisfy $\lvert W\rvert\le\ell_0$, and suppose
  \begin{equation}
    \lvert[\widehat T_W]_\alpha-[T_W]_\alpha\rvert\le\varepsilon_1
  \end{equation}
  for every $\alpha\in\{X,Y,Z\}^W$, with $\varepsilon_1$ as in Eq.~\eqref{eq:pauli-moment-learning-final-tolerances}. Assume that the proposed unit Bloch vectors satisfy $\lVert\widetilde{\mathbf n}_j-\mathbf n_j\rVert_2\le\varepsilon_2$ for every $j\in K_\star\cap W$, with $\varepsilon_2$ as in Eq.~\eqref{eq:pauli-moment-learning-final-tolerances}. Then
  \begin{equation}
    \begin{aligned}
      \biggl\lVert
      \widehat\Phi_{W,K_\star}\!\left(
      \widehat T_W;\{\widetilde{\mathbf n}_j\}_{j\in K_\star\cap W}
      \right)
      -\Phi_{W,K_\star}(T_W)
      \biggr\rVert_F
      \le2\cdot3^{\ell_0/2}\varepsilon_1.
    \end{aligned}
    \label{eq:contraction-stability-bound}
  \end{equation}
\end{lemma}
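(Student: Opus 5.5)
The plan is to split the difference by a telescoping (hybrid) argument into a tensor-estimation term and a basis-perturbation term, and to bound each in Frobenius norm. Write $m:=\lvert K_\star\cap W\rvert\le\lvert W\rvert\le\ell_0$ and list $K_\star\cap W=\{j_1,\dots,j_m\}$. Then
\begin{equation*}
  \widehat T_W\bigtimes_{k}\widetilde{\mathbf n}_{j_k}^T-T_W\bigtimes_k\mathbf n_{j_k}^T
  =(\widehat T_W-T_W)\bigtimes_k\widetilde{\mathbf n}_{j_k}^T
  +\sum_{r=1}^{m}T_W\bigtimes_{k<r}\mathbf n_{j_k}^T\times(\widetilde{\mathbf n}_{j_r}-\mathbf n_{j_r})^T\bigtimes_{k>r}\widetilde{\mathbf n}_{j_k}^T .
\end{equation*}

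For the first term, I will use that contraction of a tensor against a vector in one mode is a linear map with operator norm at most the vector's $\ell_2$ norm, viewed as a map between Frobenius-normed spaces. This follows by Cauchy--Schwarz fibre by fibre. Since each $\widetilde{\mathbf n}_j$ is a unit vector, the first term is bounded by $\lVert\widehat T_W-T_W\rVert_F\le\sqrt{3^{\lvert W\rvert}}\,\varepsilon_1\le3^{\ell_0/2}\varepsilon_1$, using the entrywise bound over the $3^{\lvert W\rvert}$ entries.

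For the hybrid terms, the same contraction bound gives each a size of at most $\lVert T_W\rVert_F\,\lVert\widetilde{\mathbf n}_{j_r}-\mathbf n_{j_r}\rVert_2\le\lVert T_W\rVert_F\,\varepsilon_2$. The key step, and the main obstacle, is controlling $\lVert T_W\rVert_F$ without losing the factor $3^{\lvert W\rvert/2}$, which the naive entrywise bound would give. My first attempt is to note that $\sum_{\alpha}\operatorname{Tr}(P_\alpha\rho)^2$ over all Paulis on $W$, including the identity, equals $2^{\lvert W\rvert}\operatorname{Tr}\rho_W^2\le2^{\lvert W\rvert}$. This gives $\lVert T_W\rVert_F\le2^{\ell_0/2}\le3^{\ell_0/2}$.

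Summing the $m\le\ell_0$ hybrid terms then yields at most $\ell_0\varepsilon_2\,3^{\ell_0/2}=3^{\ell_0/2}\varepsilon_1$, because $\varepsilon_2=\varepsilon_1/\ell_0$. Adding the two contributions gives the claimed $2\cdot3^{\ell_0/2}\varepsilon_1$.

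One point to check is the sign convention. The hypothesis is stated as $\lVert\widetilde{\mathbf n}_j-\mathbf n_j\rVert_2\le\varepsilon_2$ with the signs already aligned, as the paper remarks just before the lemma, so no label-swap issue arises inside this lemma. A second point is that the identity-free tensor $T_W$ is a sub-collection of the full Pauli expansion, so dropping entries only decreases the Frobenius norm and the purity bound applies directly.
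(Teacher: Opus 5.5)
Your proof is correct and follows essentially the same route as the paper: the same split into an estimation term $(\widehat T_W-T_W)$ contracted with the proposed vectors, plus a telescoping sum of single-vector swaps, each bounded by the Cauchy--Schwarz contraction inequality $\lVert A\times_j v^T\rVert_F\le\lVert A\rVert_F\lVert v\rVert_2$. The only difference is that you bound $\lVert T_W\rVert_F$ via the purity identity, getting $2^{\lvert W\rvert/2}$, whereas the paper simply uses $\lvert[T_W]_\alpha\rvert\le1$ to get $\lVert T_W\rVert_F\le3^{\lvert W\rvert/2}\le3^{\ell_0/2}$; that cruder bound already yields the stated $2\cdot3^{\ell_0/2}\varepsilon_1$, so the step you flag as the main obstacle is not actually one.
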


\begin{proof}
  Write $K_\star\cap W=\{j_1,\ldots,j_m\}$, where $m\le\lvert W\rvert\le\ell_0$. We first show the inequality used for each contraction. For a tensor $A$, a vector $v\in\mathbb R^3$, and a mode $j$, let $\beta$ collect all tensor indices other than $\alpha_j$. Cauchy--Schwarz gives
  \begin{equation}
    \begin{aligned}
      \lVert A\bigtimes_j v^T\rVert_F^2
      &=\sum_\beta\left\lvert\sum_{\alpha_j}
        A_{\beta,\alpha_j}v_{\alpha_j}\right\rvert^2\\
      &\le\sum_\beta
        \left(\sum_{\alpha_j}\lvert A_{\beta,\alpha_j}\rvert^2\right)
        \left(\sum_{\alpha_j}\lvert v_{\alpha_j}\rvert^2\right)\\
      &=\lVert A\rVert_F^2\lVert v\rVert_2^2.
    \end{aligned}
    \label{eq:contraction-stability-contraction}
  \end{equation}
  The entrywise error assumption and $-1 \le \lvert[T_W]_\alpha \le 1$ for all Paulis $\alpha$ imply
  \begin{equation}
    \begin{aligned}
      \lVert\widehat T_W-T_W\rVert_F^2
      &=\sum_{\alpha\in\{X,Y,Z\}^W}
        \lvert[\widehat T_W]_\alpha-[T_W]_\alpha\rvert^2
        \le3^{\lvert W\rvert}\varepsilon_1^2,\\
      \lVert T_W\rVert_F^2
      &=\sum_{\alpha\in\{X,Y,Z\}^W}\lvert[T_W]_\alpha\rvert^2
        \le3^{\lvert W\rvert}.
    \end{aligned}
  \end{equation}
  Repeatedly applying Eq.~\eqref{eq:contraction-stability-contraction} gives the following bound:
  \begin{equation}
    \begin{aligned}
      \left\lVert(\widehat T_W-T_W)
        \bigtimes_{j\in K_\star\cap W}\widetilde{\mathbf n}_j^T\right\rVert_F
      &\le\lVert\widehat T_W-T_W\rVert_F
        \prod_{j\in K_\star\cap W}\lVert\widetilde{\mathbf n}_j\rVert_2\\
      &=\lVert\widehat T_W-T_W\rVert_F
        \le3^{\lvert W\rvert/2}\varepsilon_1.
    \end{aligned}
  \end{equation}

  To bound the effect of the proposed Bloch vectors, define the intermediate tensors
  \begin{equation}
    H_a:=T_W
      \bigtimes_{b=1}^{a}\widetilde{\mathbf n}_{j_b}^T
      \bigtimes_{b=a+1}^{m}\mathbf n_{j_b}^T,
    \qquad 0\le a\le m,
  \end{equation}
  where the factor indexed by $b$ contracts mode $j_b$. Thus $H_0=\Phi_{W,K_\star}(T_W)$, while $H_m$ contracts $T_W$ with $\{\widetilde{\mathbf n}_j\}_{j\in K_\star\cap W}$. By multilinearity, for $1\le a\le m$,
  \begin{equation}
    H_a-H_{a-1}
    =T_W\bigtimes_{b=1}^{a-1}\widetilde{\mathbf n}_{j_b}^T
      \bigtimes_{j_a}(\widetilde{\mathbf n}_{j_a}-\mathbf n_{j_a})^T
      \bigtimes_{b=a+1}^{m}\mathbf n_{j_b}^T.
  \end{equation}
  Applying Eq.~\eqref{eq:contraction-stability-contraction} to every factor yields
  \begin{equation}
    \begin{aligned}
      \lVert H_a-H_{a-1}\rVert_F
      &\le\lVert T_W\rVert_F
        \left(\prod_{b<a}\lVert\widetilde{\mathbf n}_{j_b}\rVert_2\right)
        \lVert\widetilde{\mathbf n}_{j_a}-\mathbf n_{j_a}\rVert_2
        \left(\prod_{b>a}\lVert\mathbf n_{j_b}\rVert_2\right)\\
      &=\lVert T_W\rVert_F
        \lVert\widetilde{\mathbf n}_{j_a}-\mathbf n_{j_a}\rVert_2
        \le3^{\lvert W\rvert/2}\varepsilon_2.
    \end{aligned}
  \end{equation}

  Note that the difference between the estimated contraction and the ideal contraction can be expressed as
  \begin{equation}
    \begin{aligned}
      &\widehat\Phi_{W,K_\star}\!\left(
        \widehat T_W;\{\widetilde{\mathbf n}_j\}_{j\in K_\star\cap W}\right)
        -\Phi_{W,K_\star}(T_W)\\
      &\qquad=(\widehat T_W-T_W)
        \bigtimes_{j\in K_\star\cap W}\widetilde{\mathbf n}_j^T
        +\sum_{a=1}^{m}(H_a-H_{a-1}).
    \end{aligned}
  \end{equation}
  The preceding bounds and the triangle inequality now give
  \begin{equation}
    \begin{aligned}
      &\left\lVert\widehat\Phi_{W,K_\star}\!\left(
        \widehat T_W;\{\widetilde{\mathbf n}_j\}_{j\in K_\star\cap W}\right)
        -\Phi_{W,K_\star}(T_W)\right\rVert_F\\
      &\qquad\le3^{\lvert W\rvert/2}\varepsilon_1
        +\sum_{a=1}^{m}3^{\lvert W\rvert/2}\varepsilon_2\\
      &\qquad=3^{\lvert W\rvert/2}(\varepsilon_1+m\varepsilon_2)\\
      &\qquad\le3^{\ell_0/2}(\varepsilon_1+\ell_0\varepsilon_2)
        =2\cdot3^{\ell_0/2}\varepsilon_1.
    \end{aligned}
  \end{equation}
  The last equality uses the fixed relation $\ell_0\varepsilon_2=\varepsilon_1$ from Eq.~\eqref{eq:pauli-moment-learning-final-tolerances}, concluding the proof.
\end{proof}

We now show that Step~1 of the algorithm outputs a list of bases containing a sufficiently accurate candidate product basis.

\begin{lemma}[The list of candidate bases contains an accurate basis]
  \label{lem:accurate-candidate-basis}
  Conditioned on the event $\mathcal E_1$, Step~1 of Algorithm~1 produces a list containing a candidate basis $B'=(B_1',\ldots,B_n')$, represented by $\{\widetilde{\mathbf n}_j\}_{j\in [n]}$, such that
  \begin{equation}
    d_{\mathrm{ang}}(B_j',B_j)\le \gamma
  \end{equation}
  for all $j\in [n]$, where
  \begin{equation}
    \gamma:=\frac{\varepsilon^2}{128s^2\sqrt n}.
    \label{eq:pauli-moment-learning-angular-target}
  \end{equation}
\end{lemma}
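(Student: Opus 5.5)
The plan is to exhibit one specific partial basis assignment that the enumeration in Step~1(b) is guaranteed to visit, and to show that it produces the desired candidate. Take $K=K_\star$, which is allowed because $\lvert K_\star\rvert\le s+\lceil 2\log_2 s\rceil=k_0$ by Eq.~\eqref{eq:difficult-set-size}. For each $j\in K_\star$, pick $\widetilde{\mathbf n}_j\in\mathcal N(\varepsilon_2)$ with $d_{\mathrm{ang}}(\widetilde{\mathbf n}_j,\mathbf n_j)\le\varepsilon_2$. The sign of $\mathbf n_j$ is only a labeling convention, so we may flip it to make $\widetilde{\mathbf n}_j\cdot\mathbf n_j\ge0$. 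Then $\lVert\widetilde{\mathbf n}_j-\mathbf n_j\rVert_2=2\sin(\phi/2)\le\varepsilon_2$, where $\phi$ is the angle between them. This flip leaves Eq.~\eqref{eq:difficult-set-contracted-signal} valid up to an overall sign, and that sign affects neither singular values nor $d_{\mathrm{ang}}$.

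For qubits in $K_\star$, the accuracy claim is immediate: $\varepsilon_2\le\varepsilon_1\le\gamma$ by comparing the explicit values in Eq.~\eqref{eq:pauli-moment-learning-final-tolerances} and Eq.~\eqref{eq:pauli-moment-learning-angular-target}. It remains to handle each $i\notin K_\star$. Fix any $W\ni i$ with $\lvert W\rvert\le\ell_0$. Let $M_W$ be the mode-$i$ flattening of $\Phi_{W,K_\star}(T_W)$, which equals $\widehat p(W)\,\mathbf n_i\, v^T$ with $\lVert v\rVert_2=1$ by Eq.~\eqref{eq:difficult-set-contracted-signal}. Let $\widehat M_W$ be the flattening of the estimated contraction. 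Under $\mathcal E_1$, Lemma~\ref{lem:contraction-stability} gives
\[
\lVert E\rVert_2\le\lVert E\rVert_F\le e:=2\cdot3^{\ell_0/2}\varepsilon_1=\frac{\varepsilon^2}{2^{11}s^{\log_2 s+2}\sqrt n},
\qquad E:=\widehat M_W-M_W .
\]
Weyl's inequality then bounds the leading singular value of $\widehat M_W$ by $\lvert\widehat p(W)\rvert\pm e$.

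Two facts then finish the argument. First, the scan does not fail. Lemma~\ref{lem:fourier-detection} supplies some $W\ni i$ with $\lvert\widehat p(W)\rvert>1/(4s^{\log_2 s})$, and $e\le 1/(16 s^{\log_2 s})$. Hence that $W$ has leading singular value at least $3/(16s^{\log_2 s})$ and passes the threshold. So the assignment is not discarded, whichever $W$ the scan happens to stop at. Second, whichever $W$ passes first gives an accurate vector. Passing forces $\lvert\widehat p(W)\rvert\ge 3/(16s^{\log_2 s})-e\ge 1/(8s^{\log_2 s})=:\sigma_{\min}$. Wedin's $\sin\Theta$ theorem for a rank-one matrix, with gap $\sigma_{\min}-e$ to the zero singular values, gives
\[
\sin\angle(\widetilde{\mathbf n}_i,\pm\mathbf n_i)\le\frac{e}{\sigma_{\min}-e}\le\frac{(1+o(1))\,\varepsilon^2}{256\,s^2\sqrt n}.
\]
Since $\phi\le(\pi/2)\sin\phi$, we get $d_{\mathrm{ang}}(\widetilde{\mathbf n}_i,\mathbf n_i)\le\gamma=\varepsilon^2/(128s^2\sqrt n)$, with room to spare. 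The leading left singular vector is real because the tensor is real, so the recovered Bloch vector is legitimate.

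The main obstacle is this perturbation step. It requires that a possibly different, weaker $W$ chosen by the scan still gives a good vector. This is why the argument must use the acceptance threshold $3/(16s^{\log_2 s})$, rather than the existence bound alone, to lower-bound the true signal. The constants in $\varepsilon_1$ have to be checked against $\gamma$ through the ratio $e/\sigma_{\min}$. I would also double-check that contraction uses only the vectors on $K\cap W=K_\star\cap W$, which is exactly the setting of Lemma~\ref{lem:contraction-stability}.
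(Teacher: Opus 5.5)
Your proof is correct and follows the paper's argument. You use the same assignment $K=K_\star$ with sign-aligned net vectors, then Lemma~\ref{lem:contraction-stability} plus Weyl to show the scan passes, then the acceptance threshold to control whichever $W$ is selected; you also state $\varepsilon_2\le\gamma$ for $j\in K_\star$, which the paper leaves implicit. The only difference is in the last step: the paper replaces Wedin's theorem with the identity $\widehat\lambda(I-\mathbf n_i\mathbf n_i^T)\widetilde{\mathbf n}_i=(I-\mathbf n_i\mathbf n_i^T)(\widehat M-M_0)\widehat v$, which gives $\sin d_{\mathrm{ang}}\le\lVert\widehat M-M_0\rVert_{\mathrm{op}}/\widehat\lambda\le\gamma/3$ directly from the threshold, without first lower-bounding $\lvert\widehat p(W)\rvert$.
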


\begin{proof}
  The qubit-set enumeration includes $K_\star$ by Eq.~\eqref{eq:difficult-set-size}. The product epsilon-net includes a partial basis assignment whose local bases on the difficult qubits are within angular distance $\varepsilon_2$ of the corresponding local bases of $B$. Aligning the signs of their Bloch vectors gives, for each $j\in K_\star$,
  \begin{equation}
    \lVert\widetilde{\mathbf n}_j-\mathbf n_j\rVert_2
    =2\sin\!\left(
    \frac{d_{\mathrm{ang}}(B_j',B_j)}2
    \right)
    \le\varepsilon_2.
    \label{eq:accurate-candidate-proposed-vector-error}
  \end{equation}
  
  For each $W\subseteq[n]$ with $\lvert W\rvert\le\ell_0$ and each $i\in W\setminus K_\star$, define $M_0$ and $\widehat M$ as the mode-$i$ flattenings of the exact and estimated contracted tensors for this assignment. Specifically, their row index is $\alpha_i\in\{X,Y,Z\}$ and their column index is $\beta\in\{X,Y,Z\}^{W\setminus(K_\star\cup\{i\})}$, with entries
  \begin{equation}
    \begin{aligned}
      [M_0]_{\alpha_i,\beta}
      &:=\bigl[\Phi_{W,K_\star}(T_W)\bigr]_{\alpha_i,\beta},\\
      [\widehat M]_{\alpha_i,\beta}
      &:=\biggl[\widehat\Phi_{W,K_\star}\!\left(
        \widehat T_W;\{\widetilde{\mathbf n}_j\}_{j\in K_\star\cap W}
        \right)\biggr]_{\alpha_i,\beta}.
    \end{aligned}
  \end{equation}
  Thus $M_0,\widehat M\in\mathbb R^{3\times3^{\lvert W\setminus K_\star\rvert-1}}$. Flattening preserves the Frobenius norm, so Lemma~\ref{lem:contraction-stability} yields
  \begin{equation}
    \lVert\widehat M-M_0\rVert_{\mathrm{op}}
    \le\lVert\widehat M-M_0\rVert_F
    \le 2\cdot3^{\ell_0/2}\varepsilon_1 = \frac{\gamma}{16s^{\log_2s}}
    <\frac{1}{2048s^{\log_2s}}.
    \label{eq:accurate-candidate-basis-total-perturbation}
  \end{equation}
  Recall that, for each $i\notin K_\star$, Step~1(c) scans the sets $W\subseteq[n]$ with $i\in W$ and $\lvert W\rvert\le\ell_0$ in the fixed order. It retains the leading left singular vector of the first estimated flattening whose leading singular value is at least $3/(16s^{\log_2s})$, and discards the assignment if any qubit has no passing flattening. Lemma~\ref{lem:fourier-detection} and Eq.~\eqref{eq:difficult-set-contracted-signal} guarantee that, for each such $i$, some $W$ gives an ideal flattening $M_0$ with singular value at least $1/(4s^{\log_2s})$. By Weyl's inequality and Eq.~\eqref{eq:accurate-candidate-basis-total-perturbation}, its estimated leading singular value is at least
  \begin{equation}
    \frac1{4s^{\log_2s}}-\frac1{2048s^{\log_2s}}
    =\frac{511}{2048s^{\log_2s}}
    >\frac3{16s^{\log_2s}},
  \end{equation}
  so every remaining qubit has a passing flattening and Step~1(c) produces a candidate basis from this assignment.

  For the flattening selected by the scan, let $\widehat\lambda\ge3/(16s^{\log_2s})$ be its leading singular value. The retained real unit left singular vector $\widetilde{\mathbf n}_i$ and a corresponding unit right singular vector $\widehat v$ satisfy $\widehat\lambda\widetilde{\mathbf n}_i=\widehat M\widehat v$. Equation~\eqref{eq:difficult-set-contracted-signal} gives $(I-\mathbf n_i\mathbf n_i^T)M_0=0$, so
  \begin{equation}
    \widehat\lambda(I-\mathbf n_i\mathbf n_i^T)\widetilde{\mathbf n}_i
    =(I-\mathbf n_i\mathbf n_i^T)(\widehat M-M_0)\widehat v.
  \end{equation}
  Taking norms and using $\lVert\widehat v\rVert_2=1$ yields
  \begin{equation}
    \begin{aligned}
      \sqrt{1-\lvert\widetilde{\mathbf n}_i^T\mathbf n_i\rvert^2}
      &=\lVert(I-\mathbf n_i\mathbf n_i^T)\widetilde{\mathbf n}_i\rVert_2\\
      &\le\frac{\lVert\widehat M-M_0\rVert_{\mathrm{op}}}
      {\widehat\lambda}.
    \end{aligned}
    \label{eq:accurate-candidate-basis-singular-vector}
  \end{equation}
  The recovered local basis $B_i'$ is the eigenbasis of $\widetilde{\mathbf n}_i\cdot\boldsymbol{\sigma}$. Equations
  \eqref{eq:accurate-candidate-basis-total-perturbation} and \eqref{eq:accurate-candidate-basis-singular-vector} imply
  \begin{equation}
    \sin d_{\mathrm{ang}}(B_i',B_i)
    \le
    \frac{\gamma/(16s^{\log_2s})}
    {3/(16s^{\log_2s})}
    =\frac{\gamma}{3}.
  \end{equation}
  Since this angle lies in $[0,\pi/2]$, $d_{\mathrm{ang}}(B_i',B_i)< \gamma$.
\end{proof}

We can show that the accurate candidate basis $B'$ from Lemma~\ref{lem:accurate-candidate-basis} is sufficiently close to the fixed product basis $B$ so that $\rho$ is approximately sparse in $B'$.

\begin{lemma}[Stability under product-basis errors]
  \label{lem:product-basis-stability}
  Let $B'=(B_1',\ldots,B_n')$ be the accurate candidate basis supplied by Lemma~\ref{lem:accurate-candidate-basis} where $\varphi_i:=d_{\mathrm{ang}}(B_i,B_i')\le\gamma$. Then there exists a state $\sigma$ that is $s$-sparse in $B'$ and satisfies
  \begin{equation}
    \lVert\sigma-\rho\rVert_1
    \le\sqrt{s\sum_{i=1}^n\varphi_i^2}
    \le\sqrt{sn}\,\gamma.
    \label{eq:product-basis-stability-bound}
  \end{equation}
\end{lemma}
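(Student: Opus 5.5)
Let $U_i$ be the single-qubit unitary that maps $B_i$ to $B_i'$, chosen so that $U_i\lvert\phi^{(i)}_b\rangle=\lvert\phi'^{(i)}_b\rangle$ with labels aligned (we may swap the labels of $B_i'$ freely, since that does not affect sparsity). Set $U:=\bigotimes_i U_i$ and $\sigma:=U\rho U^\dagger$. Then $[\sigma]_{B'}=[\rho]_B$ entrywise. Hence $\sigma$ is a state that is $s$-sparse in $B'$, and it remains to bound $\lVert U\rho U^\dagger-\rho\rVert_1$.

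We choose the phases of each $U_i$ to minimize its distance from the identity. Bloch axes at angle $\varphi_i$ are related by a rotation by $\varphi_i$ about the axis $\mathbf n_i\times\mathbf n_i'$. Taking $U_i=\exp(-i\varphi_i\,\mathbf m_i\cdot\boldsymbol\sigma/2)$ with $\mathbf m_i$ that unit axis, we have $U_i=e^{-i\varphi_i G_i}$, where $G_i:=\mathbf m_i\cdot\boldsymbol\sigma/2$ has eigenvalues $\pm1/2$. Then $U=e^{-iG}$ with $G=\sum_i\varphi_iG_i$, since the $G_i$ act on different qubits and commute.

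First idea: for a pure state $\rho=\lvert\psi\rangle\langle\psi\rvert$, the trace distance is $2\sqrt{1-\lvert\langle\psi\rvert U\lvert\psi\rangle\rvert^2}$. The quantity $1-\lvert\langle U\rangle\rvert^2$ is controlled by the variance of $G$, and $\operatorname{Var}(G)\le\sum_i\varphi_i^2\operatorname{Var}(G_i)$ plus covariances. Covariances are not obviously small for general states, so this is where the factor $s$ must enter.

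The route I would actually take exploits sparsity. In the basis $B$, write $\rho=\sum_{x,y\in S}\rho_{xy}\lvert\phi_x\rangle\langle\phi_y\rvert$ with $\lvert S\rvert\le s$ by positivity (Section~\ref{sec:selected-entry-tomography}). Then
\[
U\rho U^\dagger-\rho=\sum_{x,y\in S}\rho_{xy}\bigl(\lvert\phi'_x\rangle\langle\phi'_y\rvert-\lvert\phi_x\rangle\langle\phi_y\rvert\bigr).
\]
Let $V$ be the linear map $\lvert\phi_x\rangle\mapsto\lvert\phi'_x\rangle-\lvert\phi_x\rangle$ restricted to $\operatorname{span}\{\lvert\phi_x\rangle:x\in S\}$. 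Writing $P_S$ for the projector onto that span, we get
\[
U\rho U^\dagger-\rho=(U-I)P_S\rho P_S U^\dagger+\rho P_S(U^\dagger-I).
\]
The trace norm of each term is at most $\lVert(U-I)\sqrt\rho\rVert_2\,\lVert\sqrt\rho\rVert_2=\lVert(U-I)\sqrt\rho\rVert_2$, using Hölder with $\rho=\sqrt\rho\sqrt\rho$. Moreover,
\[
\lVert(U-I)\sqrt\rho\rVert_2^2=\operatorname{Tr}[(U-I)^\dagger(U-I)\rho]\le\sum_{x\in S}\lVert(U-I)\lvert\phi_x\rangle\rVert_2^2\,\rho_{xx}\cdot(\text{cross terms}).
\]
To avoid the cross terms, I would bound crudely:
\[
\operatorname{Tr}[(U-I)^\dagger(U-I)\rho]\le\lVert(U-I)P_S\rVert_{\mathrm{op}}^2\le\lVert(U-I)P_S\rVert_F^2=\sum_{x\in S}\lVert U\lvert\phi_x\rangle-\lvert\phi_x\rangle\rVert_2^2.
\]
Each product-vector difference is bounded via $\lVert U\lvert\phi_x\rangle-\lvert\phi_x\rangle\rVert^2=2-2\operatorname{Re}\prod_i\langle\phi^{(i)}_{x_i}\rvert U_i\lvert\phi^{(i)}_{x_i}\rangle$. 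With the rotation chosen above, each factor equals $\cos(\varphi_i/2)$, because the axis $\mathbf m_i$ is orthogonal to $\mathbf n_i$. Hence the difference is at most $2-2\prod_i\cos(\varphi_i/2)\le\sum_i\varphi_i^2/4$, which gives a Frobenius bound of $s\sum_i\varphi_i^2/4$.

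Combining the two terms yields $\lVert\sigma-\rho\rVert_1\le2\sqrt{s\sum_i\varphi_i^2/4}=\sqrt{s\sum_i\varphi_i^2}$. The final inequality follows from $\varphi_i\le\gamma$.

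The main obstacle is getting exactly the constant. That requires two things: the Hölder step, splitting the commutator-type difference with $\lVert\sqrt\rho\rVert_2=1$, and the elementary inequality $1-\prod_i\cos(\varphi_i/2)\le\sum_i\varphi_i^2/8$. The latter follows by induction from $1-ab\le(1-a)+(1-b)$ for $a,b\in[0,1]$ together with $1-\cos t\le t^2/2$.
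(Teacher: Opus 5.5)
Your proposal is correct and is essentially the paper's proof. It uses the same comparator $\sigma=V'[\rho]_BV'^{\dagger}$ and the same two-term split; with $U=V'V^\dagger$ your terms are exactly the paper's. It also uses the same bound $\lVert(U-I)P_S\rVert_F^2=2\lvert S\rvert\bigl(1-\prod_i\cos(\varphi_i/2)\bigr)\le\lvert S\rvert\sum_i\varphi_i^2/4$, with each local diagonal overlap equal to $\cos(\varphi_i/2)$; the trace-norm step differs only cosmetically (Hölder with $\sqrt\rho$ versus the paper's $\lVert(V'-V)\Pi\rVert_{\mathrm{op}}\lVert C\rVert_1$). In a write-up, drop the abandoned variance idea, the unused map $V$, and the ``cross terms'' inequality, which is not a valid statement as written.
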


\begin{proof}
  For each $i\in[n]$, write $\lvert\phi_b^{(i)}\rangle$ and $\lvert\phi_b^{\prime(i)}\rangle$, $b\in\{0,1\}$, for the basis vectors of $B_i$ and $B_i'$. Relabel the vectors in $B_i'$, if necessary, so that $\mathbf n_i\cdot\widetilde{\mathbf n}_i=\cos\varphi_i$. Then we have
  \begin{equation}
    \left\lvert\langle\phi_b^{(i)}\mid\phi_b^{\prime(i)}\rangle\right\rvert^2
    =\frac{1+\mathbf n_i\cdot\widetilde{\mathbf n}_i}{2}
    =\frac{1+\cos\varphi_i}{2}
    =\cos^2\frac{\varphi_i}{2}.
  \end{equation}
  We choose the phase of each $\lvert\phi_b^{\prime(i)}\rangle$ so that its inner product with $\lvert\phi_b^{(i)}\rangle$ is real and nonnegative. With this choice,
  \begin{equation}
    \langle\phi_b^{(i)}\mid\phi_b^{\prime(i)}\rangle
    =\cos\frac{\varphi_i}{2},
    \qquad b\in\{0,1\},\quad i\in[n].
  \end{equation}
  Define the single-qubit unitaries by $V_i\lvert b\rangle=\lvert\phi_b^{(i)}\rangle$ and $V_i'\lvert b\rangle=\lvert\phi_b^{\prime(i)}\rangle$, and set $V:=\bigotimes_{i=1}^nV_i$ and $V':=\bigotimes_{i=1}^nV_i'$. Write $C:=[\rho]_B$ and define
  \begin{equation}
    \sigma:=V'CV'^\dagger.
  \end{equation}
  Note that $\rho=VCV^\dagger$. Since $C\succeq0$, $\operatorname{Tr}(C)=1$, and $[\sigma]_{B'}=C$, the matrix $\sigma$ is a state that is $s$-sparse in $B'$.

  Let $R:=\{x\in\{0,1\}^n:C_{xx}>0\}$ and $\Pi:=\sum_{x\in R}\lvert x\rangle\!\langle x\rvert$. Then $\lvert R\rvert\le s$. The product structure of $V$ and $V'$ yields
  \begin{equation}
    \begin{aligned}
      \lVert(V'-V)\Pi\rVert_F^2
      &=\sum_{x\in R}\lVert(V'-V)\lvert x\rangle\rVert_2^2\\
      &=2\sum_{x\in R}\left(
        1-\operatorname{Re}\langle x\rvert V^\dagger V'\lvert x\rangle
        \right)\\
      &=2\lvert R\rvert\left(1-\prod_{i=1}^n\cos\frac{\varphi_i}{2}\right).
    \end{aligned}
  \end{equation}
  Consequently,
  \begin{equation}
    \lVert(V'-V)\Pi\rVert_{\mathrm{op}}
    \le\lVert(V'-V)\Pi\rVert_F
    \le
    \sqrt{2s\left(
      1-\prod_{i=1}^n\cos\frac{\varphi_i}{2}
      \right)}.
    \label{eq:product-basis-stability-restricted-unitary}
  \end{equation}
  Since $0\le\cos(\varphi_i/2)\le1$, we have
  \begin{equation}
    \begin{aligned}
      1-\prod_{i=1}^n\cos\frac{\varphi_i}{2}
      &= \sum_{i=1}^n \left(1-\cos\frac{\varphi_i}{2}\right)\prod_{j<i}\cos\frac{\varphi_j}{2}\\
      &\le\sum_{i=1}^n\left(1-\cos\frac{\varphi_i}{2}\right)\\
      &=2\sum_{i=1}^n\sin^2\frac{\varphi_i}{4}
      \le\frac18\sum_{i=1}^n\varphi_i^2.
    \end{aligned}
    \label{eq:product-basis-stability-cosine-product}
  \end{equation}
  Finally, $C=\Pi C\Pi$ and $\lVert C\rVert_1=1$ give
  \begin{equation}
    \begin{aligned}
      \lVert\sigma-\rho\rVert_1
      &\le\lVert(V'-V)CV'^\dagger\rVert_1
        +\lVert VC(V'^\dagger-V^\dagger)\rVert_1\\
      &\le2\lVert(V'-V)\Pi\rVert_{\mathrm{op}}\lVert C\rVert_1\\
      &\le\sqrt{s\sum_{i=1}^n\varphi_i^2}
        \le\sqrt{sn}\,\gamma.
    \end{aligned}
  \end{equation}
\end{proof}

\subsection{Completeness and soundness of reconstruction}
\label{sec:reconstruction-certification}

Conditioning on $\mathcal E_1$, Step~1 produces a list of candidate bases containing at least one accurate candidate basis. Then, the subsequent steps of Algorithm~1 test each candidate basis using fresh copies of $\rho$. Step~2 tests if there exists $S\subseteq\{0,1\}^n$ of at most $s$ bitstrings that captures nearly all of the probability of measuring $\rho$ in each candidate basis. The corresponding $S\times S$ submatrix can still contain $s^2$ entries, so Step~3 estimates this submatrix and retains only entries of magnitude at least $\theta$. The candidate basis is accepted only if at most $s$ matrix entries remain. Here, we prove completeness and soundness on the stated estimation events: the accurate candidate basis $B'$ from Section~\ref{sec:robust-basis-recovery} passes both tests, and every accepted matrix $M$ satisfies $\lVert\rho-M\rVert_1\le\varepsilon$. All lemmas in this section assume that the event $\mathcal E_1$ holds.

\begin{lemma}[Completeness for the accurate candidate basis]
  \label{lem:reconstruction-completeness}
  Let $B'$ be the accurate candidate basis supplied by Lemma~\ref{lem:accurate-candidate-basis}, and $q$ and $\widehat q$ be the true and empirical probability distributions of $\rho$ in the basis $B'$. With $\xi$ and $\theta$ as in Eq.~\eqref{eq:pauli-moment-learning-final-tolerances}, suppose Step~2 satisfies
  \begin{equation}
    \lvert\widehat q(T)-q(T)\rvert\le\xi/2
    \qquad\text{for all }T\subseteq\{0,1\}^n
    \text{ with }\lvert T\rvert\le s.
    \label{eq:reconstruction-completeness-uniform-event}
  \end{equation}
  Also assume that the subsequent entrywise estimation of the $S\times S$ submatrix of $\rho$ in Step~3 satisfies
  \begin{equation}
    \left\lvert
    [\widetilde M]_{xy}-[\rho]_{B'}(x,y)
    \right\rvert
    \le \theta/2 \qquad\text{for all }x,y\in S,
    \label{eq:reconstruction-completeness-entrywise-event}
  \end{equation}
  Then the accurate candidate basis $B'$ is accepted by Steps~2 and~3, and its accepted matrix has at most $s$ nonzero entries in the reported basis.
\end{lemma}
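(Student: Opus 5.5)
The plan is to compare $\rho$ with the auxiliary state $\sigma$ from Lemma~\ref{lem:product-basis-stability}, which is exactly $s$-sparse in $B'$ and close to $\rho$ in trace distance. We then check that both thresholds (in Step~2 and Step~3) have slack larger than this trace distance. Substituting $\gamma$ from Eq.~\eqref{eq:pauli-moment-learning-angular-target} into Eq.~\eqref{eq:product-basis-stability-bound} gives
\begin{equation*}
  \Delta:=\lVert\sigma-\rho\rVert_1\le\sqrt{sn}\,\gamma=\frac{\varepsilon^2}{128s^{3/2}}.
\end{equation*}
Recall that $\sigma=V'CV'^\dagger$ with $C=[\rho]_B$, so $[\sigma]_{B'}=C$. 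Let $R:=\{x:C_{xx}>0\}$, so that $\lvert R\rvert\le s$, and let $\Pi'$ be the projector onto $\operatorname{span}\{\lvert\phi'_x\rangle:x\in R\}$.

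\textbf{Step~2.} Since $\operatorname{Tr}[(I-\Pi')\sigma]=0$ and $0\preceq I-\Pi'\preceq I$, we get
\begin{equation*}
  q(R^c)=\operatorname{Tr}[(I-\Pi')\rho]\le\tfrac12\Delta<\xi/2=\varepsilon^2/64.
\end{equation*}
The selected set $S$ maximizes $\widehat q$ over sets of size at most $s$, and $R$ is such a set. Hence, by Eq.~\eqref{eq:reconstruction-completeness-uniform-event},
\begin{equation*}
  \widehat q(S)\ge\widehat q(R)\ge q(R)-\xi/2.
\end{equation*}
Because $\widehat q$ is a probability distribution, this gives
\begin{equation*}
  \widehat q(S^c)=1-\widehat q(S)\le q(R^c)+\xi/2<\xi,
\end{equation*}
so $B'$ passes Step~2.

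\textbf{Step~3.} Take any $(x,y)\in S\times S$ with $(x,y)\notin\operatorname{supp}(C)$. Then $[\sigma]_{B'}(x,y)=0$, so
\begin{equation*}
  \lvert[\rho]_{B'}(x,y)\rvert=\lvert\langle\phi'_x\rvert(\rho-\sigma)\lvert\phi'_y\rangle\rvert\le\lVert\rho-\sigma\rVert_{\mathrm{op}}\le\Delta.
\end{equation*}
Moreover, $\Delta=\varepsilon^2/(128s^{3/2})<\varepsilon/(6s^{3/2})=\theta/2$, using $\varepsilon<1$. Combining this with Eq.~\eqref{eq:reconstruction-completeness-entrywise-event} gives $\lvert[\widetilde M]_{xy}\rvert<\theta/2+\theta/2=\theta$, so such entries are not placed in $K_{\mathrm{ent}}$. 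Therefore $K_{\mathrm{ent}}\subseteq\operatorname{supp}(C)$, and $\lvert K_{\mathrm{ent}}\rvert\le s$. Consequently $B'$ is accepted, and the retained matrix $M$ has at most $s$ nonzero entries.

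\textbf{Main obstacle.} There is no deep obstacle; the proof reduces to checking constants. The two points needing care are bounding $q(R^c)$ through the trace distance to the exactly sparse $\sigma$, and using the identity $\widehat q(S^c)=1-\widehat q(S)$ for the empirical distribution.
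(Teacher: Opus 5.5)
Your proposal is correct and follows essentially the same route as the paper. Both compare $\rho$ with the $s$-sparse state $\sigma$ from Lemma~\ref{lem:product-basis-stability}, use the diagonal support $R$ of $\sigma$ in $B'$ together with the maximality of $S$ for Step~2, and bound the non-support entries by $\lVert\rho-\sigma\rVert_{\mathrm{op}}$ for Step~3. The only cosmetic difference is that you bound $q(R^c)$ by $\tfrac12\lVert\rho-\sigma\rVert_1$, using $0\preceq I-\Pi'\preceq I$ and $\operatorname{Tr}(\rho-\sigma)=0$, whereas the paper uses the slightly weaker bound $\sum_x\lvert[\rho-\sigma]_{B'}(x,x)\rvert\le\lVert\rho-\sigma\rVert_1$; either bound leaves ample slack below $\xi/2$.
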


\begin{proof}
  Lemma~\ref{lem:product-basis-stability} supplies a state $\sigma$ with at most $s$ nonzero matrix entries in $B'$ and
  \begin{equation}
    \lVert\rho-\sigma\rVert_1\le b,
    \qquad
    b:=\sqrt{sn}\,\gamma
    =\frac{\varepsilon^2}{128s^{3/2}}.
    \label{eq:reconstruction-completeness-accurate-comparator}
  \end{equation}
  Note that $b<\xi/2$ and $b<\theta/2$. Indeed,
  \begin{equation}
    \frac{b}{\xi/2}=\frac{1}{2s^{3/2}}\le\frac12<1,
    \qquad
    \frac{b}{\theta/2}=\frac{3\varepsilon}{64}<\frac3{64}<1.
    \label{eq:reconstruction-completeness-completeness-margins}
  \end{equation}
  
  To prove that $B'$ passes Step~2, consider the diagonal support of $\sigma$ in this basis:
  \begin{equation}
    R:=\{x:[\sigma]_{B'}(x,x)>0\}.
  \end{equation}
  Since $\sigma$ has at most $s$ nonzero matrix entries, $\lvert R\rvert\le s$. Its measurement distribution vanishes outside $R$, whereas the distribution $q$ of $\rho$ has only a small probability there. Therefore, we have
  \begin{equation}
    \begin{aligned}
      q(R^c)
      &=\sum_{x\notin R}\left\lvert[\rho]_{B'}(x,x)-[\sigma]_{B'}(x,x)\right\rvert\\
      &\le\sum_x\left\lvert[\rho]_{B'}(x,x)-[\sigma]_{B'}(x,x)\right\rvert\\
      &\le\lVert\rho-\sigma\rVert_1\le b.
    \end{aligned}
    \label{eq:reconstruction-completeness-comparator-tail}
  \end{equation}
  Eq.~\eqref{eq:reconstruction-completeness-uniform-event} applies to $R$ as $\lvert R\rvert\le s$. Since $q$ and $\widehat q$ are probability distributions, it also bounds the error on the complement:
  \begin{equation}
    \left\lvert\widehat q(R^c)-q(R^c)\right\rvert
    =\left\lvert\widehat q(R)-q(R)\right\rvert
    \le\xi/2.
  \end{equation}
  Step~2 chooses $S$ to maximize empirical probability over all sets of at most $s$ bitstrings. The set $R$ is one such choice, so $\widehat q(S)\ge\widehat q(R)$ and hence
  \begin{equation}
    \begin{aligned}
      \widehat q(S^c)
      &\le \widehat q(R^c)\\
      &\le q(R^c)+\xi/2\\
      &\le b+\xi/2<\xi.
    \end{aligned}
    \label{eq:reconstruction-completeness-capture-test}
  \end{equation}
  Thus the accurate candidate basis $B'$ passes the test in Step~2.

  For Step~3, we show that every retained matrix entry corresponds to a nonzero entry of $\sigma$ in $B'$. Let $x,y\in S$ satisfy $[\sigma]_{B'}(x,y)=0$. Then
  \begin{equation}
    \begin{aligned}
      \lvert[\rho]_{B'}(x,y)\rvert
      &=\lvert[\rho-\sigma]_{B'}(x,y)\rvert\\
      &\le\lVert\rho-\sigma\rVert_{\mathrm{op}}
      \le\lVert\rho-\sigma\rVert_1
      \le b.
    \end{aligned}
  \end{equation}
  The triangle inequality gives
  \begin{equation}
    \begin{aligned}
      \lvert[\widetilde M]_{xy}\rvert
      &\le\left\lvert[\widetilde M]_{xy}-[\rho]_{B'}(x,y)\right\rvert
        +\lvert[\rho]_{B'}(x,y)\rvert\\
      &\le\theta/2+b<\theta,
    \end{aligned}
  \end{equation}
  Step~3 therefore discards every entry whose position is outside the support of $[\sigma]_{B'}$. It follows that
  \begin{equation}
    \begin{aligned}
      K_{\mathrm{ent}}
      &\subseteq\{(x,y)\in S\times S:[\sigma]_{B'}(x,y)\ne0\},\\
      \lvert K_{\mathrm{ent}}\rvert
      &\le\lvert\operatorname{supp}([\sigma]_{B'})\rvert
      \le s.
    \end{aligned}
  \end{equation}
  Thus $B'$ also passes the entry-count test in Step~3, proving completeness.
\end{proof}

Completeness ensures that at least one candidate basis passes both tests on the stated estimation events. We now prove soundness for every accepted output matrix of Algorithm~1, showing that its trace-norm error relative to the true state $\rho$ is at most $\varepsilon$.

\begin{lemma}[Soundness of accepted reconstructions]
  \label{lem:reconstruction-soundness}
  Fix any candidate basis $\widetilde B$ that passes both tests of Algorithm~1, with $\xi$ and $\theta$ as in Eq.~\eqref{eq:pauli-moment-learning-final-tolerances}. Suppose
  \begin{equation}
    \lvert\widehat q(T)-q(T)\rvert\le\xi/2
    \qquad\text{for all }T\subseteq\{0,1\}^n
    \text{ with }\lvert T\rvert\le s
  \end{equation}
  and
  \begin{equation}
    \left\lvert
    [\widetilde M]_{xy}-[\rho]_{\widetilde B}(x,y)
    \right\rvert
    \le \theta/2 \qquad \text{for all $x,y\in S$.}
  \end{equation}
  Then its accepted matrix $M$ satisfies
  \begin{equation}
    \lVert\rho-M\rVert_1
    \le \varepsilon.
    \label{eq:reconstruction-soundness-certificate}
  \end{equation}
\end{lemma}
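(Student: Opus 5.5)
Work in the accepted basis $\widetilde B$ and identify $\rho$ with $[\rho]_{\widetilde B}$. Let $P$ be the projector onto $\operatorname{span}\{\lvert\phi_x\rangle:x\in S\}$. Regard $M$ as the operator that is zero outside $S\times S$. The plan is to split
\begin{equation*}
  \lVert\rho-M\rVert_1\le\lVert\rho-P\rho P\rVert_1+\lVert P\rho P-M\rVert_1,
\end{equation*}
and bound the two terms separately: the first by the leakage from the bitstring test, the second by the thresholding in Step~3.

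\textbf{Leakage term.} Acceptance in Step~2 gives $\widehat q(S^c)\le\xi$. The event on $\widehat q$ applies to $S$ because $\lvert S\rvert\le s$. Since $q$ and $\widehat q$ are both normalized, it also controls the complement:
\begin{equation*}
  q(S^c)=1-q(S)\le 1-\widehat q(S)+\xi/2=\widehat q(S^c)+\xi/2\le 3\xi/2 .
\end{equation*}
Here $q(S^c)=\operatorname{Tr}[(I-P)\rho]$. Lemma~\ref{lem:projection-error-trace} then gives
\begin{equation*}
  \lVert\rho-P\rho P\rVert_1\le 2\sqrt{3\xi/2}=2\sqrt{3\varepsilon^2/64}=\tfrac{\sqrt3}{4}\varepsilon\approx0.433\,\varepsilon .
\end{equation*}

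\textbf{Reconstruction term.} The matrix $P\rho P-M$ is supported on $S\times S$, with $\lvert S\rvert\le s$. I bound its Frobenius norm entrywise, using two cases.
\begin{itemize}
  \item On $K_{\mathrm{ent}}$ (at most $s$ entries, by acceptance in Step~3), the error is at most $\theta/2$.
  \item Off $K_{\mathrm{ent}}$, $\lvert[\widetilde M]_{xy}\rvert<\theta$, so the true entry satisfies $\lvert[\rho]_{\widetilde B}(x,y)\rvert<3\theta/2$.
\end{itemize}
A naive count allows up to $s^2$ entries in the second case, which gives Frobenius norm $\lesssim s\cdot3\theta/2$. Converting to trace norm costs a factor $\sqrt{s}$, for a total of order $s^{3/2}\theta\cdot\tfrac32=\varepsilon/2$. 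Combining the on-$K_{\mathrm{ent}}$ and off-$K_{\mathrm{ent}}$ errors more carefully:
\begin{equation*}
  \lVert P\rho P-M\rVert_F^2\le s(\theta/2)^2+s^2(3\theta/2)^2\le s^2\bigl(\tfrac14+\tfrac94\bigr)\theta^2 .
\end{equation*}
Hence
\begin{equation*}
  \lVert P\rho P-M\rVert_1\le\sqrt{s}\,\lVert P\rho P-M\rVert_F\le\sqrt{5/2}\,s^{3/2}\theta=\tfrac{\sqrt{5/2}}{3}\,\varepsilon\approx0.527\,\varepsilon .
\end{equation*}
Adding the two terms gives about $0.96\,\varepsilon\le\varepsilon$. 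The choice $\theta=\varepsilon/(3s^{3/2})$ seems tuned exactly for this bound.

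\textbf{Main obstacle.} The tight step is the constant bookkeeping in the reconstruction term. If the naive sum comes out slightly above $\varepsilon$, I would tighten the leakage bound using $\widehat q(S^c)\le\xi$ directly, or use the Frobenius version, Lemma~\ref{lem:projection-error}. A more important alternative is to bound the discarded entries without paying for all $s^2$ of them. Positivity gives $\lvert[\rho]_{\widetilde B}(x,y)\rvert^2\le q(x)q(y)$, so the total squared mass of the discarded entries is at most $\bigl(\sum_{x\in S}q(x)\bigr)^2\le1$. Combined with the per-entry bound $3\theta/2$, this caps the squared Frobenius contribution of the discarded entries by $\min\bigl(s^2(3\theta/2)^2,\,1\bigr)$, which should leave enough slack to close the argument.
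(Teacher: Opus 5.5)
Your proof is correct and follows essentially the same route as the paper: you bound the leakage by $q(S^c)\le 3\xi/2$ and apply Lemma~\ref{lem:projection-error-trace}, then convert entrywise errors on $S\times S$ to trace norm through the Frobenius norm, using that the difference is supported on the image of $P$ (rank at most $s$). The only difference is bookkeeping: you merge the paper's two terms $\lVert\widetilde M-\Pi_S\rho\Pi_S\rVert_1$ and $\lVert\widetilde M-M\rVert_1$ into a single entrywise bound on $P\rho P-M$. This gives the slightly looser constant $\sqrt{5/2}/3$ in place of the paper's $1/2$, but it still closes at about $0.96\,\varepsilon$, so the fallback positivity argument in your last paragraph is not needed.
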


\begin{proof}
  Write $t:=\lvert S\rvert\le s$, and let $\Pi:=\Pi_S$ be the orthogonal projector onto the product-basis vectors of $\widetilde B$ indexed by $S$. Acceptance of the test in Step~2 gives
  \begin{equation}
    \begin{aligned}
      \eta
       & :=\operatorname{Tr}[(I-\Pi)\rho]
      =q(S^c)                             \\
       & \le\widehat q(S^c)+\xi/2
      \le\frac32\xi.
    \end{aligned}
    \label{eq:reconstruction-soundness-population-tail}
  \end{equation}
  Applying Lemma~\ref{lem:projection-error-trace} gives
  \begin{equation}
    \lVert\rho-\Pi_S\rho\Pi_S\rVert_1
    \le2\sqrt\eta
    \le2\sqrt{3\xi/2},
    \label{eq:reconstruction-soundness-projection-bound}
  \end{equation}
  which bounds the error from projecting onto the selected subspace.

  Next, we set
  \begin{equation}
    D:=\widetilde M-\Pi_S\rho\Pi_S.
  \end{equation}
  The matrix $D$ acts on a $t$-dimensional subspace and has $t^2$ entries of
  magnitude at most $\theta/2$. Thus
  \begin{equation}
    \lVert D\rVert_F\le \frac{t\theta}{2},
    \qquad
    \operatorname{rank}(D)\le t.
  \end{equation}
  Cauchy--Schwarz then gives
  \begin{equation}
    \lVert\widetilde M-\Pi_S\rho\Pi_S\rVert_1
    \le\sqrt t\,\lVert D\rVert_F
    \le \frac{t^{3/2}\theta}{2}
    \le \frac{s^{3/2}\theta}{2}.
    \label{eq:reconstruction-soundness-tomography-bound}
  \end{equation}

  Recall that $M$ retains only the entries of $\widetilde M$ whose magnitudes are at least $\theta$. Therefore,
  \begin{equation}
    \lVert \widetilde M-M\rVert_F\le t\theta,
    \qquad
    \operatorname{rank}(\widetilde M-M)\le t,
  \end{equation}
  and it gives
  \begin{equation}
    \lVert\widetilde M-M\rVert_1
    \le\sqrt t\,\lVert \widetilde M-M\rVert_F
    \le t^{3/2}\theta
    \le s^{3/2}\theta.
    \label{eq:reconstruction-soundness-thresholding-bound}
  \end{equation}
  
  Combining Eqs.~\eqref{eq:reconstruction-soundness-projection-bound},
  \eqref{eq:reconstruction-soundness-tomography-bound}, and~\eqref{eq:reconstruction-soundness-thresholding-bound}
  by the triangle inequality gives
  \begin{equation}
    \begin{aligned}
      \lVert\rho-M\rVert_1
      &\le\lVert\rho-\Pi_S\rho\Pi_S\rVert_1
        +\lVert\Pi_S\rho\Pi_S-\widetilde M\rVert_1
        +\lVert\widetilde M-M\rVert_1\\
      &\le2\sqrt{3\xi/2}
        +\frac12s^{3/2}\theta+s^{3/2}\theta\\
      &=\left(\frac{\sqrt3}{4}+\frac12\right)\varepsilon
        <\varepsilon.
    \end{aligned}
  \end{equation}
\end{proof}

\paragraph{Resource bounds.}
The preceding lemmas establish completeness and soundness whenever the required estimation bounds hold:
\begin{equation}
  \lvert\widehat q(T)-q(T)\rvert
  \le\xi/2 \qquad \text{for all }T\subseteq\{0,1\}^n
      \text{ with }\lvert T\rvert\le s
  \label{eq:reconstruction-sampling-capture-event}
\end{equation}
and
\begin{equation}
  \left\lvert
    [\widetilde M]_{xy}-[\rho]_{\widetilde B}(x,y)
    \right\rvert
  \le\theta/2
  \qquad \text{for all }x,y\in S.
  \label{eq:reconstruction-sampling-estimation-error}
\end{equation}
We now bound the number of copies of $\rho$ needed to ensure these bounds simultaneously across the candidate bases. Throughout, $\xi$ and $\theta$ retain their fixed values in Eq.~\eqref{eq:pauli-moment-learning-final-tolerances}. Write $G$ for the number of partial basis assignments $(K,(\widetilde{\mathbf n}_j)_{j\in K})$, which bounds the size of the candidate basis list because each assignment produces at most one candidate basis.

Let $\mathcal E_2$ denote the event in Eq.~\eqref{eq:reconstruction-sampling-capture-event} simultaneously for every candidate basis. For each candidate basis, let $x^{(1)},\ldots,x^{(N_2)}\in\{0,1\}^n$ be the outcomes from measuring $N_2$ fresh copies of $\rho$ in that basis, with ideal distribution $q$. Its empirical distribution is given by
\begin{equation}
  \widehat q(x)
  :=\frac{\bigl\lvert\{k\in\{1,\ldots,N_2\}:x^{(k)}=x\}\bigr\rvert}{N_2},
  \qquad x\in\{0,1\}^n.
\end{equation}
For a fixed candidate basis and each bitstring $x$, the empirical frequency averages $N_2$ independent Bernoulli variables with mean $q(x)$. Hoeffding's bound holds uniformly over the Step~1 outcomes, so averaging conditional on $\mathcal E_1$ gives
\begin{equation}
  \mathbb P\!\left[
    \lvert\widehat q(x)-q(x)\rvert>\frac{\xi}{2s}
    \,\middle|\,\mathcal E_1\right]
  \le2\exp\!\left(-\frac{N_2\xi^2}{2s^2}\right).
\end{equation}
Here, we assumed that Step~1 succeeded so the conditioning on $\mathcal E_1$ appears here.

The event $\mathcal E_2$ is guaranteed if $\lvert\widehat q(x)-q(x)\rvert\le\xi/(2s)$ for every bitstring $x$ in every candidate basis. Therefore, a union bound over $2^n$ bitstrings and at most $G$ candidate bases gives
\begin{equation}
  \mathbb P\!\left[
    \mathcal E_2^c \,\middle|\,\mathcal E_1\right]
  \le 2^{n+1}G\exp\!\left(-\frac{N_2\xi^2}{2s^2}\right).
  \label{eq:reconstruction-sampling-capture-failure}
\end{equation}
We set
\begin{equation}
  N_2
  :=\left\lceil\frac{2s^2}{\xi^2}\left[
    n\log2+\log\frac{6G}{\delta}
    \right]\right\rceil
  =O\!\left(
    s^2\varepsilon^{-4}\left[n+\log\frac G\delta\right]
    \right),
  \label{eq:pauli-moment-learning-capture-copies}
\end{equation}
where we used the fixed value $\xi=\varepsilon^2/32$. Substituting this into Eq.~\eqref{eq:reconstruction-sampling-capture-failure} gives
\begin{equation}
  \mathbb P\!\left[
    \mathcal E_2^c \,\middle|\,\mathcal E_1\right]
  \le 2^{n+1}G\exp\!\left(-n\log2-\log\frac{6G}{\delta}\right)
  =\frac\delta3.
\end{equation}

For each candidate basis passing Step~2, fix $S$ before estimating its submatrix on separate fresh copies of $\rho$. Let $\mathcal E_3$ be the event that Eq.~\eqref{eq:reconstruction-sampling-estimation-error} holds for all candidate bases that pass Step~2. For failure probability $\delta/(3Gs^2)$ per matrix entry, Theorem~\ref{thm:single-entry-tomography} requires at most
\begin{equation}
  \left\lceil4(\theta/2)^{-2}
    \log\frac{12Gs^2}{\delta}\right\rceil
\end{equation}
copies of $\rho$, and there are at most $s^2$ entries per candidate basis. Consequently, a sufficient number of copies of $\rho$ per candidate basis is
\begin{align}
  N_3
  &:=s^2\left\lceil4(\theta/2)^{-2}
    \log\frac{12Gs^2}{\delta}\right\rceil\nonumber\\
  &=O\!\left(
    s^5\varepsilon^{-2}\log\frac{sG}{\delta}
    \right).
  \label{eq:pauli-moment-learning-entry-copies}
\end{align}
The number of candidate bases is at most $G$, so a union bound over at most $Gs^2$ entries gives
\begin{equation}
  \mathbb P\!\left[
    \mathcal E_3^c
    \,\middle|\,\mathcal E_2
    \right]
  \le Gs^2\frac{\delta}{3Gs^2}
  =\frac\delta3.
\end{equation}

\subsection{Proof of Theorem~\ref{thm:pauli-moment-learning}}
\label{sec:pauli-moment-learning-proof}
\label{sec:pauli-moment-learning-resource-assembly}

On $\mathcal E_1\cap\mathcal E_2\cap\mathcal E_3$, Lemma~\ref{lem:accurate-candidate-basis} supplies an accurate candidate basis $B'$, and Lemma~\ref{lem:reconstruction-completeness} guarantees that it passes both tests if reached. If the algorithm stops earlier, it has already returned an accepted matrix. In either case, the test in Step~3 gives at most $s$ nonzero matrix entries in the reported product basis, and Lemma~\ref{lem:reconstruction-soundness} bounds the trace-norm error by $\varepsilon$.

The Pauli moment estimation guarantees in Section~\ref{sec:robust-basis-recovery} and the reconstruction sampling guarantees in Section~\ref{sec:reconstruction-certification} give
\begin{equation}
  \mathbb P(\mathcal E_1^c)\le\frac\delta3,
  \qquad
  \mathbb P(\mathcal E_2^c\mid\mathcal E_1)\le\frac\delta3,
  \qquad
  \mathbb P(\mathcal E_3^c\mid\mathcal E_2)\le\frac\delta3.
  \label{eq:pauli-moment-learning-step-failures}
\end{equation}
A union bound over $\mathcal E_1^c$, $\mathcal E_1\cap\mathcal E_2^c$, and $\mathcal E_1 \cap \mathcal E_2\cap\mathcal E_3^c$ therefore bounds total failure probability by $\delta/3+\delta/3+\delta/3=\delta$.

It remains to bound the resources. Each partial basis assignment chooses a set of at most $k_0$ qubits and one epsilon-net point for each chosen qubit. The fixed resolution in Eq.~\eqref{eq:pauli-moment-learning-final-tolerances} satisfies
\begin{equation}
  \varepsilon_2^{-2}
  =2^{24}\ell_0^2s^{2\log_2s+4}3^{\ell_0}n\varepsilon^{-4}.
\end{equation}
Using $k_0=O(s)$ and $\ell_0=O(1+\log s)$, the number $G$ of partial basis assignments therefore satisfies
\begin{equation}
  G
  \le\sum_{j=0}^{\min\{n,k_0\}}
    \binom nj\left(\frac C{\varepsilon_2}\right)^{2j}
  \le2^{O(s(\log s)^2)}
    \left(\frac n\varepsilon\right)^{O(s)}.
  \label{eq:pauli-moment-learning-assignment-count}
\end{equation}
Each assignment produces at most one candidate product basis, so the list of candidate product bases also has size at most $G$.

For each nonempty $W\subseteq[n]$ with $\lvert W\rvert\le\ell_0$, Step~1 estimates all $3^{\lvert W\rvert}$ Pauli observables. Then it gives
\begin{equation}
  N_{\mathrm{obs}}
  :=\sum_{j=0}^{\ell_0}\binom nj3^j
  \le(\ell_0+1)(3n)^{\ell_0}.
  \label{eq:pauli-moment-learning-tensor-count}
\end{equation}
This bounds $N_1$ in Eq.~\eqref{eq:pauli-moment-learning-low-order-copies}.

We now combine the sample complexity counts from Eqs.~\eqref{eq:pauli-moment-learning-low-order-copies}, \eqref{eq:pauli-moment-learning-capture-copies}, and~\eqref{eq:pauli-moment-learning-entry-copies}. While Step~1 uses $N_1$ copies of $\rho$ to estimate the low-order Pauli moments, Step~2 and Step~3 use $N_2$ and $N_3$ copies per candidate basis. A sufficient total number of copies is therefore
\begin{equation}
  \begin{aligned}
    N&:=N_1+G(N_2+N_3)\\
     &\le2^{O(s(\log s)^2)}
       \left(\frac n\varepsilon\right)^{O(s)}
       \polylog\!\left(\frac1\delta\right).
  \end{aligned}
  \label{eq:pauli-moment-learning-total-copies}
\end{equation}

We then count classical runtime. Step~1 processes the Pauli moment tensors in $O(nN_1)$ time and enumerates the $G$ partial basis assignments in $O(Gn)$ time. For each assignment, it scans at most $nN_{\mathrm{obs}}$ flattenings of the Pauli moment tensors, each with at most $3^{\ell_0}$ entries. Each contraction and SVD costs $\poly(3^{\ell_0})$ time, so the basis candidate construction cost is $O(GnN_{\mathrm{obs}}\poly(3^{\ell_0}))$. In Step~2, counting the observed bitstrings takes $O(nN_2)$ time per candidate basis. Selecting up to $s$ most frequent bitstrings and applying the capture test take $O(N_2)$ time. Thus Step~2 costs at most $O(GnN_2)$ time. In Step~3, each selected-entry tomography outcome takes $O(n)$ time, giving $O(nN_3)$ per candidate basis that passes Step~2. Processing the output matrix takes $O(s^2)$ time per candidate basis. Thus Step~3 costs at most $O(G(nN_3+s^2))$ time. Summing the three steps gives
\begin{equation}
  \begin{aligned}
    T&\le O\!\left(
      nN_1+GnN_{\mathrm{obs}}\poly(3^{\ell_0})
      +GnN_2+G(nN_3+s^2)
      \right)\\
     &\le2^{O(s(\log s)^2)}
       \left(\frac n\varepsilon\right)^{O(s)}
       \polylog\!\left(\frac1\delta\right).
  \end{aligned}
  \label{eq:pauli-moment-learning-classical-runtime}
\end{equation}
These bounds complete the proof of Theorem~\ref{thm:pauli-moment-learning}.

%% file: 06_tree_merging.tex
\section{Learning with tree-structured merging}
\label{sec:algorithm1-results}

In this section, we present Algorithm~2, briefly outlined in Section~\ref{sec:algorithm1-overview}, which uses tree-structured merging to learn $s$-sparse states. We first give an explicit description of Algorithm~2 the results. We then analyze Algorithm~2 for the case where the unknown state is sparse in an arbitrary product basis using $\epsilon$-nets. We then analyze Algorithm~2 for the case where the unknown state is sparse in a product of a known finite set of product bases.

\subsection{Explicit algorithm and results}
\label{sec:tree-hierarchy}

\begin{figure}
  \centering
  \includegraphics[width=0.7\textwidth]{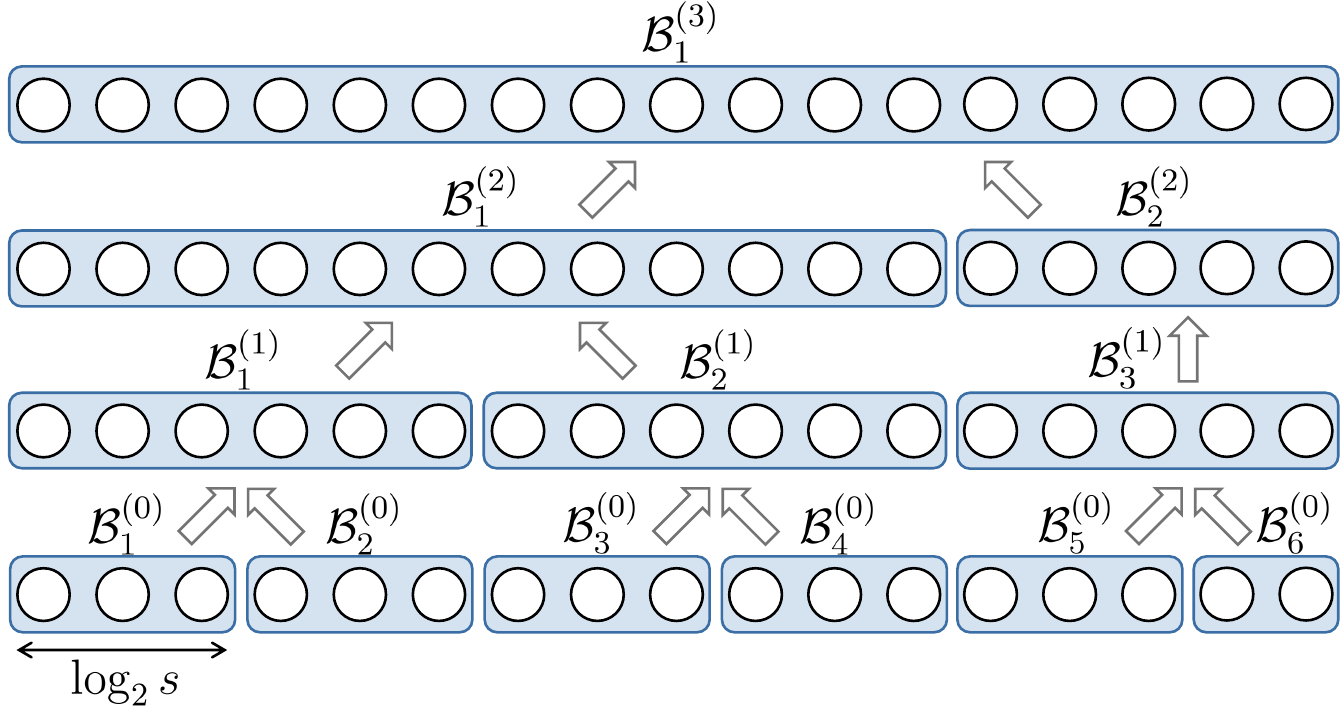}
  \caption{Example of the tree-merging procedure for $n=17$ and $m=3$. At level $0$, all blocks have size $m=3$, except for the last block. At level $2$, there is an odd number of blocks, and so the last block is directly passed to level $3$.}
  \label{fig:Tree-in-detail}
\end{figure}

As Algorithm~1 does, Algorithm~2 takes independently prepared copies of the unknown $s$-sparse state $\rho$, accuracy of the output $\varepsilon$, and failure probability $\delta$. In Section~\ref{sec:algorithm1-overview}, we presented Algorithm~2 implicitly assuming $n$ is proportional to $2^D$ for simplicity. Here, we present Algorithm~2 in full generality and specify the required tomography precision at each step for the tree-merging procedure.

Since $t$-sparse states are also $s$-sparse for $s \ge t$, we assume $s \ge 2$ without loss of generality. We define
\begin{equation}
  m:=\lfloor\log_2s\rfloor,
  \qquad
  K_0:=\left\lceil\frac{n}{m}\right\rceil,
  \qquad
  D:=\lceil\log_2K_0\rceil,
  \label{eq:tree-internal-parameters}
\end{equation}
where $m$ is the size of the leaf blocks at level $0$, $K_0$ is the number of blocks at level $0$, and $D$ is the depth of the tree. Specifically, we partition $[n]$ into $K_0$ consecutive blocks such that the first $K_0-1$ blocks each contain $m$ qubits and the final block contains $n-(K_0-1)m$ qubits, so that
\begin{equation}
  [n]=\bigsqcup_{j=1}^{K_0}\mathcal B_j^{(0)}.
\end{equation}
At level $d$, let $K_d$ be the number of blocks and denote the $j$-th block by $\mathcal B_j^{(d)}$. For $0\leq d<D$, set $K_{d+1}:=\lceil K_d/2\rceil$ and define
\begin{equation}
  \mathcal B_j^{(d+1)}
  :=
  \begin{cases}
    \mathcal B_{2j-1}^{(d)}\sqcup\mathcal B_{2j}^{(d)},
     & 1\leq j\leq\lfloor K_d/2\rfloor,  \\
    \mathcal B_{K_d}^{(d)},
     & j=K_{d+1}\text{ if $K_d$ is odd}.
  \end{cases}
  \label{eq:tree-blocks}
\end{equation}
Thus a merge combines two adjacent child blocks, while an unmatched final block is transferred to the next level unchanged. Proceeding until level $D$, we obtain the root block $\mathcal B_1^{(D)} = [n]$; see Fig.~\ref{fig:Tree-in-detail}.

With this tree structure, we set the parameters $\varepsilon_1$, $\varepsilon_2$, and $\delta_1$ as follows:
\begin{equation}
  \varepsilon_1:=\frac{\varepsilon^2}{64s^2n^3},
  \qquad
  \varepsilon_2:=\frac{\varepsilon^2}{64sn^3},
  \qquad
  \delta_1:=\frac{\delta}{n}.
  \label{eq:algorithm1-merge-parameters}
\end{equation}
With this setup, Algorithm~2 proceeds as follows:

\begin{itemize}
  \item \textbf{Step 1: Initialize the leaves.}
        For every leaf block $\mathcal B_j^{(0)}$, set
        \begin{equation}
          P_j^{(0)}:=I_{\mathcal B_j^{(0)}}.
          \label{eq:tree-leaf-projector}
        \end{equation}
        Choose an arbitrary product basis on the leaf and retain all of its basis vectors. Since $m=\lfloor\log_2s\rfloor$, the number of basis vectors is at most $s$. No copies of $\rho$ are measured at this initialization.

  \item \textbf{Step 2: Recursive merging.}
        From $d=0$ to $D-1$, we repeat the following procedure:
        \begin{enumerate}[label=(\alph*)]
          \item \textbf{Tomography on temporary projectors.}
                For $j=1, \dots, \lfloor K_d/2 \rfloor$, define the temporary projector $A_j^{(d+1)}$ by
                \begin{equation}
                  A_j^{(d+1)}
                  :=P_{2j-1}^{(d)}\otimes P_{2j}^{(d)}.
                  \label{eq:compressed-parent}
                \end{equation}
                Let $B_L$ and $B_R$ denote the bases for $\mathcal B_{2j-1}^{d}$ and $\mathcal B_{2j}^{d}$, respectively. The temporary projector spans at most $s^2$ basis vectors in $B_L \times B_R$ and has rank at most $s^2$. Run the selected-entry tomography in Corollary~\ref{cor:selected-submatrix-tomography} with an approximation error $\varepsilon_1$ and failure probability $\delta_1$ to estimate the submatrix of $\rho_{\mathcal B_j^{(d+1)}}$ supported on $A_j^{(d+1)}$. The size of each submatrix is at most $s^2\times s^2$. This tomography can be done in parallel for all $j$ because the blocks $\mathcal B_j^{(d+1)}$ are disjoint. This returns a Hermitian matrix $\widehat\sigma_j^{(d+1)}$ that approximates
                \begin{equation}
                  \sigma_j^{(d+1)}:=A_j^{(d+1)}\rho_{\mathcal B_j^{(d+1)}}A_j^{(d+1)}.
                  \label{eq:compressed-matrix-estimate}
                \end{equation}
          \item \textbf{Compress sparsity.}
                If $d<D-1$, then for each $j$ we find the projector $P_j^{(d+1)}$ consisting of at most $s$ product basis vectors that minimizes
                \begin{equation}
                  \widehat\tau(P):=\operatorname{Tr}[(I-P)\widehat\sigma_j^{(d+1)}]
                \end{equation}
                over product bases. It returns $P_j^{(d+1)}$, the corresponding product basis, and its retained basis vectors. We search for the best product basis over the product of a single-qubit $\varepsilon_2$-net when the input state $\rho$ is $s$-sparse in an arbitrary product basis. If the input state $\rho$ is $s$-sparse in $\mathcal{A}^n$ for some known set of single-qubit bases $\mathcal{A}$, the search space is specified in Section~\ref{sec:finite-set-result}. This step is done by classical computation and no copies of $\rho$ are measured. If $K_d$ is odd, carry the last block $\mathcal{B}_{K_{d}}^{(d)}$, its projector, and its product basis unchanged to the next level.
        \end{enumerate}

  \item \textbf{Step 3: Reconstruct and return the output.}
        The final merge in Step~2 produces the estimate $\widehat\sigma_1^{(D)}$, supported on the temporary root projector $A_1^{(D)}$. If we compact the sparsity as in Step~2(b), the output can be up to $s^2$ sparse. We proceed as follows instead to obtain an $s$-sparse matrix. For each basis $B$, form $\widehat\rho_B$ by retaining the at most $s$ entries of $[\widehat\sigma_1^{(D)}]_B$ having largest magnitude and truncate all other entries to zero. Find the product basis $\widehat B$ satisfying
        \begin{equation}
          \left\|\widehat\rho_B-\widehat\sigma_1^{(D)}\right\|_F
          \leq\frac{\varepsilon}{4\sqrt{s}}.
        \end{equation}
        As in Step~2(b), we search for such a basis among the product single-qubit $\varepsilon_2$-net if the input state $\rho$ is $s$-sparse in an arbitrary product basis. If the input state $\rho$ is $s$-sparse in $\mathcal{A}^n$ for some known set of single-qubit bases $\mathcal{A}$, the search space is specified in Sec.\ref{sec:finite-set-result}. Return the basis as $\widehat B$ and the truncated matrix $\widehat\rho_B$ as the final output $\widehat\rho$. This step is done by classical computation and consumes no copies of $\rho$.
\end{itemize}

With this procedure, Algorithm~2 first achieves polynomial sample complexity on an arbitrary unknown product basis while leaving the classical computational cost of the product-basis search exhaustive.

\begin{theorem}[Algorithm~2; arbitrary continuous product basis]
  \label{thm:tree-continuous-basis}
  Suppose that $\rho$ is $s$-sparse in an arbitrary unknown product basis, and let $0<\varepsilon, \delta <1$. With probability at least $1-\delta$, Algorithm~2 only uses single-qubit measurements and outputs an $s$-sparse operator $\widehat\rho$ such that $\lVert\widehat\rho-\rho\rVert_1\leq\varepsilon$, with
    \begin{equation}
    \begin{aligned}
      N&=O\!\left(n^6s^{12}\varepsilon^{-4}\log^2\!\left(\frac{ns}{\delta}\right)\right)
      &&\text{copies of $\rho$, and}\\
      T&=\left(\frac {ns}\varepsilon\right)^{O(n)}{\rm polylog}\left(\frac{1}{\delta}\right)
      &&\text{classical runtime}.
    \end{aligned}
    \label{eq:tree-continuous-basis-resources}
  \end{equation}
\end{theorem}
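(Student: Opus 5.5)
The proof will be an induction over tree levels. The controlled quantity is the leakage of each retained projector: for every block $\mathcal B_j^{(d)}$ we show $\eta_j^{(d)}:=\operatorname{Tr}[(I-P_j^{(d)})\rho_{\mathcal B_j^{(d)}}]$ is small. We work on the event that every call of Corollary~\ref{cor:selected-submatrix-tomography} succeeds. There are at most $\sum_d K_d\le 2K_0\le 2n$ merges, so a union bound with $\delta_1=\delta/n$ costs $O(\delta)$; rescaling constants in $\delta_1$ gives $\delta$ exactly. At leaves $\eta_j^{(0)}=0$.

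\textbf{Leakage recursion.} For a merge of children $L,R$ into $Q$, write $A=P_L\otimes P_R$. The parent leakage is controlled in two stages.

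First, $\operatorname{Tr}[(I-A)\rho_Q]\le \eta_L+\eta_R$, because $I-P_L\otimes P_R\preceq (I-P_L)\otimes I+I\otimes(I-P_R)$.

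Second, we need $\operatorname{Tr}[(I-P_Q)A\rho_QA]$ small for the projector $P_Q$ chosen in Step~2(b).
\begin{itemize}
\item By inherited sparsity, $\rho_Q$ is supported on at most $s$ vectors of some product basis $B_Q$. Rounding $B_Q$ to the $\varepsilon_2$-net and applying the argument of Lemma~\ref{lem:product-basis-stability} to the block gives a net basis whose $s$-vector projector $P^\circ$ has $\operatorname{Tr}[(I-P^\circ)\rho_Q]\le O(\sqrt{s n}\,\varepsilon_2)$. Here we use trace-norm closeness of $\rho_Q$ to a state supported on those vectors.
\item Since $\|A\rho_QA-\rho_Q\|_1\le 2\sqrt{\eta_L+\eta_R}$ by Lemma~\ref{lem:projection-error-trace}, and $\|\widehat\sigma-\sigma\|_1\le s\|\widehat\sigma-\sigma\|_F\le s\varepsilon_1$ (rank $\le s^2$), the minimality of $\widehat\tau(P_Q)$ gives
\begin{equation}
\operatorname{Tr}[(I-P_Q)\sigma_Q]\le \operatorname{Tr}[(I-P^\circ)\sigma_Q]+2s\varepsilon_1\le O(\sqrt{sn}\,\varepsilon_2)+2\sqrt{\eta_L+\eta_R}+2s\varepsilon_1 .
\end{equation}
\end{itemize}

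Lemma~\ref{lem:projection-leakage-triangle} then combines the two stages:
\begin{equation}
\sqrt{\eta_Q}\le\sqrt{\eta_L+\eta_R}+\sqrt{\operatorname{Tr}[(I-P_Q)A\rho_QA]} .
\end{equation}

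\textbf{Main obstacle.} The term $2\sqrt{\eta_L+\eta_R}$ coming from trace-norm comparison inside the square root makes the recursion $\sqrt{\eta_Q}\lesssim(\eta_L+\eta_R)^{1/4}$. Iterated over $D=O(\log n)$ levels, this would destroy polynomial precision.

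I will first try to avoid it by bounding $\operatorname{Tr}[(I-P)(A\rho_QA-\rho_Q)]$ directly, without passing through trace norm. One has $\operatorname{Tr}[(I-P)\rho_Q]-\operatorname{Tr}[(I-P)A\rho_QA]\ge -\operatorname{Tr}[(I-P)(\text{cross terms})]$. Handling the cross terms via the block decomposition in Lemma~\ref{lem:projection-error} gives a bound of $O(\sqrt{\eta_L+\eta_R})$ on $\operatorname{Tr}[(I-P_Q)A\rho_QA]-\operatorname{Tr}[(I-P_Q)\rho_Q]$ after taking square roots. Plugging this into Lemma~\ref{lem:projection-leakage-triangle} is intended to give an additive-in-square-root recursion,
\begin{equation}
\sqrt{\eta_Q}\le c\left(\sqrt{\eta_L}+\sqrt{\eta_R}\right)+\sqrt{O(\sqrt{sn}\varepsilon_2+s\varepsilon_1)} .
\end{equation}
Here the constant $c$ must be such that the sum over the binary tree of at most $2n$ nodes stays polynomial. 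If $c>1$ per level, the blowup is still $c^{D}=\mathrm{poly}(n)$. Given $\varepsilon_1,\varepsilon_2\sim\varepsilon^2/(\mathrm{poly}(s)n^3)$, the root leakage will be $\eta^{(D)}\le \varepsilon^2/\mathrm{poly}$.

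\textbf{Final step.} At the root, Lemma~\ref{lem:projection-error-trace} gives $\|\rho-A_1^{(D)}\rho A_1^{(D)}\|_1$ small, and tomography gives $\|\widehat\sigma_1^{(D)}-\sigma_1^{(D)}\|_1\le s\varepsilon_1$.

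Existence of a net basis $\widehat B$ meeting the Step~3 threshold follows from the same near-sparsity argument. The $s$-sparse representation of $\rho$ in the net basis, pushed through the above Frobenius bounds, has distance at most $\varepsilon/(8\sqrt s)$ from $\widehat\sigma_1^{(D)}$. Top-$s$ truncation is at least as good as any $s$-support, so it stays within factor $2$ of this.

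Then $\|\widehat\rho-\widehat\sigma\|_1\le\sqrt{2s}\,\|\cdot\|_F\le\varepsilon/2$ (rank $\le 2s$). The remaining terms contribute $\le\varepsilon/2$ by the choice of constants.

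\textbf{Resources.} Each merge uses $O(s^8\varepsilon_1^{-2}\log(s/\delta_1))$ copies. Merges at the same level run in parallel on shared copies, so only $D$ rounds are paid, each requiring the union over blocks. This gives $\tilde O(s^{12}n^6\varepsilon^{-4})$; the stated $\log^2$ accounts for $D\log$.

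Runtime is dominated by enumerating the product net, $(C/\varepsilon_2)^{2|Q|}\binom{2^{|Q|}}{s}$-type searches. The count of $s$-subsets is avoided by picking the top-$s$ diagonal entries of $\widehat\sigma$ in each basis. This yields $(ns/\varepsilon)^{O(n)}$ per merge, times $\mathrm{polylog}(1/\delta)$.
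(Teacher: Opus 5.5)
Your architecture is the paper's: leakage induction with $\operatorname{Tr}[(I-A)\rho_Q]\le\eta_L+\eta_R$, the uniform $s\varepsilon_1$ deviation of the empirical leakage, a net comparator, Lemma~\ref{lem:projection-leakage-triangle}, level-parallel tomography, and top-$s$ truncation at the root. However, one step in your final accounting fails.

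\textbf{The final trace-norm conversion.} You write $\lVert\widehat\rho-\widehat\sigma_1^{(D)}\rVert_1\le\sqrt{2s}\,\lVert\widehat\rho-\widehat\sigma_1^{(D)}\rVert_F$ ``(rank $\le 2s$)''. But $\widehat\sigma_1^{(D)}$ estimates the whole $s^2\times s^2$ block on $A_1^{(D)}$ and can have rank up to $s^2$, so the correct factor is $\sqrt{s^2+s}$. Step~3 may return any basis passing the threshold $\varepsilon/(4\sqrt s)$. Hence this route only yields $\lVert\widehat\rho-\widehat\sigma_1^{(D)}\rVert_1\lesssim\sqrt{s}\,\varepsilon/4$, not $\varepsilon/2$.

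The repair, which is what the paper does, is to stay in Frobenius norm through the root estimate, $\lVert\widehat\rho-\rho\rVert_F\le\lVert\widehat\rho-\widehat\sigma_1^{(D)}\rVert_F+\lVert\widehat\sigma_1^{(D)}-\rho\rVert_F$. You then convert to trace norm only for $\widehat\rho-\rho$, which has rank at most $2s$ because both $\rho$ and $\widehat\rho$ are $s$-sparse. This requires $\lVert\widehat\sigma_1^{(D)}-\rho\rVert_F=O(\varepsilon/\sqrt s)$. That in turn requires root leakage $\eta_{D-1}=O(\varepsilon^2/s)$ with an explicit constant (the paper gets $19\varepsilon^2/(1344s)$), not merely ``$\varepsilon^2/\poly$''. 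Your trace-norm bound at the root needs only $\eta\lesssim\varepsilon^2$, which hides this requirement.

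\textbf{The leakage recursion.} Your repair of the ``main obstacle'' is viable. However, it must be applied to the comparator, not to $P_Q$, whose leakage on $\rho_Q$ is the unknown:
$$\sqrt{\tau(P^\circ)}=\lVert(I-P^\circ)A\rho_Q^{1/2}\rVert_F\le\lVert(I-P^\circ)\rho_Q^{1/2}\rVert_F+\lVert(I-A)\rho_Q^{1/2}\rVert_F.$$
The paper's version is cleaner in two respects:
\begin{itemize}
\item It compares with the exact support projector $P_\star$ of $\rho_Q$. There $(I-P_\star)\rho_Q^{1/2}=0$ gives $\tau(P_\star)\le\operatorname{Tr}[(I-A)\rho_Q]$ directly.
\item It charges the net separately and linearly via $\tau(\overline P)-\tau(P_\star)\le\lVert\overline P-P_\star\rVert_{\mathrm{op}}\le\lvert Q\rvert\sin(\varepsilon_2/2)$. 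This is better than your $\sqrt{s\lvert Q\rvert}\,\varepsilon_2$ on the small low-level blocks that dominate the sum, and it is the bound the fixed $\varepsilon_2$ is tuned to.
\end{itemize}

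\textbf{The per-level constant.} This is not cosmetic for the claimed $n^6$. The argument gives $\sqrt{\eta_Q}\le2\sqrt{\eta_L+\eta_R}+\dots$, i.e.\ $\eta_{d+1}\le8\eta_d+4s\varepsilon_1+2\xi_d$, with blowup $8^{D-1}<K_0^3\le n^3$. That is exactly what $\varepsilon_1=\varepsilon^2/(64s^2n^3)$ absorbs. Your form $c(\sqrt{\eta_L}+\sqrt{\eta_R})$ with the natural $c=2$ gives blowup $16^{D-1}<K_0^4$. That would force a smaller $\varepsilon_1$ and a worse power of $n$ in the sample complexity.

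A minor point: there are exactly $K_0-1\le n$ merges, so $\delta_1=\delta/n$ needs no rescaling.
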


We next assume that every local basis is drawn from a known finite set of single-qubit bases $\mathcal A$, i.e., $\rho$ is $s$-sparse in some unknown product basis $B=(B_1,\dots,B_n)$ with $B_i\in\mathcal A$. Although this gives $\lvert\mathcal A\rvert^n$ product-basis assignments, our analysis using the Donoho--Stark uncertainty principle reduces the exponential search space to $n^{O(\log s)}$ candidates. To be specific, we write $B=\{\lvert 0_B\rangle,\lvert 1_B\rangle\}$ for each $B\in\mathcal A$, and define
\begin{equation}
  \mu:=
  \max_{\substack{B,B'\in\mathcal A\\B\ne B'}}
  \max_{u,v\in\{0,1\}}
  \lvert\langle u_B\mid v_{B'}\rangle\rvert,
  \qquad
  \kappa:=-\log\mu.
  \label{eq:finite-set-separation}
\end{equation}
Then $\mu<1$, so $\kappa>0$ quantifies the minimum pairwise separation of distinct bases in $\mathcal A$. With that, we have the following result.

\begin{theorem}[Algorithm~2; finite set of single-qubit bases]
  \label{thm:tree-finite-bases}
  Let $\mathcal A$ be a known finite set of single-qubit bases with $\lvert\mathcal A\rvert>1$ and minimum pairwise separation $\kappa>0$. Suppose that $\rho$ is $s$-sparse in an unknown product basis $B=(B_1,\ldots,B_n)\in\mathcal A^n$, and let $0<\varepsilon, \delta <1$. With probability at least $1-\delta$, Algorithm~2 only uses single-qubit measurements and outputs a product basis $\widehat B\in\mathcal A^n$ and an $s$-sparse operator $\widehat\rho$ in that basis such that $\lVert\widehat\rho-\rho\rVert_1\leq\varepsilon$, with
    \begin{equation}
    \begin{aligned}
      N&=O\!\left(n^6s^{12}\varepsilon^{-4}\log^2\!\left(\frac{ns}{\delta}\right)\right)
      &&\text{copies of $\rho$, and}\\
      T&=\left(\lvert\mathcal A\rvert n\right)^{O(\log(s)/\kappa)}\varepsilon^{-4}\operatorname{polylog}\!\left(\delta^{-1}\right)
      &&\text{classical runtime}.
    \end{aligned}
    \label{eq:tree-finite-bases-resources}
  \end{equation}
\end{theorem}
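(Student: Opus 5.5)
The proof splits into three parts: the sample complexity and correctness argument, which I expect to inherit from the analysis of Theorem~\ref{thm:tree-continuous-basis}; a Donoho--Stark lemma that shrinks the basis search at each merge; and the runtime count.

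\textbf{Correctness and sample complexity.} The only change from the continuous case is the search space. Instead of the $\varepsilon_2$-net, we search over bases in $\mathcal A^{Q}$ that differ from the current basis $B_L\times B_R$ on at most $h_0=O(\log(s)/\kappa)$ qubits. The continuous analysis has two ingredients:
\begin{itemize}
\item[(i)] a leakage recursion that uses Lemma~\ref{lem:projection-leakage-triangle} across the $O(\log n)$ levels;
\item[(ii)] at each merge, the existence of a feasible product basis whose $s$ retained vectors leave small leakage for $\widehat\sigma$.
\end{itemize}
Ingredient (i) is untouched. For (ii), inherited sparsity gives a basis $B_Q\in\mathcal A^{Q}$, namely the restriction of the true $B$, in which $\rho_Q$ is $s$-sparse. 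This basis is exact, so the net discretization error disappears. The sample complexity, measured by Corollary~\ref{cor:selected-submatrix-tomography} on at most $s^2\times s^2$ blocks at error $\varepsilon_1$ over $O(n)$ merges with a union bound at $\delta_1$, is therefore the same $O(n^6s^{12}\varepsilon^{-4}\log^2(ns/\delta))$. What remains is to show that the restricted search still contains a good basis.

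\textbf{Search reduction (main obstacle).} Consider a merge. By induction, the child bases can be taken in $\mathcal A$. Let $B'=B_L\times B_R$ and let $B_Q$ be the true restricted basis. Let $V=\bigotimes_i V_i$ be the change of basis, so $\lVert V\rVert_{\max}\le\mu^{h}$, where $h$ is the number of qubits on which the two bases differ. I would prove the matrix Donoho--Stark bound \eqref{eq:overview-matrix-donoho-stark} directly. Write $N=VMV^\dagger$. Then $\lVert N\rVert_{\max}\le\lVert V\rVert_{\max}^2\lVert M\rVert_{\ell_1}\le\lVert V\rVert_{\max}^2\sqrt{s}\lVert M\rVert_F$. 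Also $\lVert N\rVert_F\le\sqrt{t}\lVert N\rVert_{\max}$ and $\lVert N\rVert_F=\lVert M\rVert_F$. Combining these gives $\sqrt{st}\ge\lVert V\rVert_{\max}^{-2}$, hence $\mu^{-2h}\le\sqrt{s\cdot s^4}$ and $h\le\frac{5\log s}{4\kappa}$.

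The difficulty is that $N=[\sigma]_{B'}$ is only approximately $s^4$-sparse. It is supported on $A\rho_QA$, not on $\rho_Q$, and the true state in $B'$ need not be sparse at all. I would try the following fix. Take $N$ to be the compressed matrix $A\rho_Q A$ (exactly $s^4$-supported in $B'$), and take $M=[\rho_Q]_{B_Q}$ ($s$-sparse). Then apply a robust version of the inequality with $\lVert N-VMV^\dagger\rVert_F\le\sqrt{2\eta}$ from Lemma~\ref{lem:projection-error}. That version reads $\lVert M\rVert_F(1-\sqrt{st}\mu^{2h})\le\sqrt{t}\sqrt{2\eta}$ or similar. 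Because $\lVert M\rVert_F\ge 1/\sqrt{s}$ (trace one with support on $s$ diagonal entries), small leakage forces $\sqrt{st}\mu^{2h}\ge 1/2$. This gives $h=O(\log s/\kappa)$, up to an additive constant.

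If the induction instead lets children use bases far from the truth, I would show that the subspace retained by each child is close to the truth's own top subspace. Failing that, I would change the inductive invariant to ``the child basis differs from $B$ restricted to the child on $O(\log s/\kappa)$ qubits'', which then compounds only additively per merge because the search is re-anchored at each level. Step~3 is handled by the same argument with $t=s^4$.

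\textbf{Runtime.} Each merge enumerates $\sum_{h\le h_0}\binom{|Q|}{h}|\mathcal A|^h\le(|\mathcal A|n)^{O(\log s/\kappa)}$ bases. For each basis it computes the entries of $\widehat\sigma$ on the relevant rows in $\poly(s)\cdot(|\mathcal A|)^{O(h)}$ time, since only the $h$ changed qubits mix entries. It then picks the top $s$ diagonal entries. There are $O(n)$ merges and the tomography processing costs $nN$. Multiplying these gives $T=(|\mathcal A|n)^{O(\log(s)/\kappa)}\varepsilon^{-4}\operatorname{polylog}(1/\delta)$.
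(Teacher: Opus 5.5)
Your proposal follows the paper's proof. You restrict Step~2(b) and Step~3 to bases in $\mathcal A^{\lvert Q\rvert}$ within Hamming distance $O(\log(s)/\kappa)$ of $B_L\times B_R$, and you apply a robust matrix Donoho--Stark bound to the exactly $s^4$-sparse $[A\rho_QA]_{B_L\times B_R}$ and the exactly $s$-sparse $[\rho_Q]_{B_\star}$, with $\lVert\rho_Q\rVert_F\ge1/\sqrt s$, to place the true restricted basis in that set; you then run the leakage recursion with $\xi=0$. Your max-norm/$\ell_1$ proof of Donoho--Stark is a fine substitute for the paper's projector argument. Your fallback about child bases is unnecessary: this argument uses only the leakage of $A$ and re-derives closeness to the truth at every merge.

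One step needs tightening. Your robust inequality, as written, carries an extra factor $\sqrt t=s^2$, which would demand leakage $O(s^{-5})$ and break the stated $\varepsilon_1$ and copy count. Instead, split $\lVert[\rho_Q]_{B'}\rVert_F^2$ over the support $T$ of $[A\rho_QA]_{B'}$ and its complement. This gives $\lVert M\rVert_F\bigl(1-\sqrt{st}\,\mu^{2h}\bigr)\le\lVert\rho_Q-A\rho_QA\rVert_F\le2\sqrt{\eta_d}$, which the paper's tolerance $\eta_d<\varepsilon^2/(112s)$ handles.
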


The rest of this section is devoted to the proofs of Theorem~\ref{thm:tree-continuous-basis} and Theorem~\ref{thm:tree-finite-bases}. Note that Algorithm~2 is highly adaptive: the measurement bases at each level are chosen based on the previous tomography results. Therefore, the main challenge to prove the results is that the tomography errors in the previous steps will propagate to the next level, which can be amplified. Section~\ref{sec:ideal-merge} addresses this challenge by carefully bounding the errors at each merge step, and this will be adapted to the proofs of Theorem~\ref{thm:tree-continuous-basis} in Section~\ref{sec:polynomial-copy-baseline} and Theorem~\ref{thm:tree-finite-bases} in Section~\ref{sec:finite-set-result}.

\subsection{Leakage propagation through merges}
\label{sec:ideal-merge}

To establish the bound on the error accumulation through the adaptive tomography procedure, we analyze how leakage errors evolve in each merge in Step~2. Specifically, for each level $d$, let $\eta_d$ denote a common upper bound on the leakage at that level: for every $j$,
\begin{equation}
  \operatorname{Tr}[(I-P_j^{(d)})\rho_{\mathcal B_j^{(d)}}] \leq \eta_d.
\end{equation}
By construction, $\eta_0=0$.

To see how the leakage in the previous step propagates to the next level, consider a merge of two adjacent child blocks $L=\mathcal B_{2j-1}^{(d)}$ and $R=\mathcal B_{2j}^{(d)}$ at level $d$, which form $Q=\mathcal B_j^{(d+1)}$. For notational simplicity, let us denote
\begin{equation}
  P_L:=P_{2j-1}^{(d)},
  \quad
  P_R:=P_{2j}^{(d)},
  \quad
  A:=A_j^{(d+1)},
\end{equation}
so that $A = P_L \otimes P_R$ is the temporary projector on $Q$. We then have the projected state $\sigma_j^{(d+1)} = A \rho_Q A$ and its tomography result $\widehat\sigma_j^{(d+1)}$ from Eq.~\eqref{eq:compressed-parent}. For any candidate product-basis projector $P$ on $Q$, we define the exact leakage $\tau_j^{(d+1)}$ and the empirical leakage $\widehat\tau_j^{(d+1)}$ as
\begin{equation}
  \tau_j^{(d+1)}(P)
  :=\operatorname{Tr}[(I-P)\sigma_j^{(d+1)}],
  \qquad
  \widehat\tau_j^{(d+1)}(P)
  :=\operatorname{Tr}[(I-P)\widehat\sigma_j^{(d+1)}].
  \label{eq:tree-losses}
\end{equation}

Let $P_{j,\star}^{(d+1)}$ be the projector onto the diagonal support of $\rho_Q$ in the hidden sparse basis. By positivity and inherited $s$-sparsity, it has rank at most $s$ and satisfies $P_{j,\star}^{(d+1)}\rho_QP_{j,\star}^{(d+1)}=\rho_Q$, so it is the ideal target of Step~2(b). Three effects can increase the leakage of the projector returned:
\begin{itemize}
  \item First, leakage accumulated at the child blocks means that the compressed operator $\sigma_j^{(d+1)}=A_j^{(d+1)}\rho_QA_j^{(d+1)}$ need not equal $\rho_Q$, so even $P_{j,\star}^{(d+1)}$ can have nonzero leakage on $\sigma_j^{(d+1)}$.
  \item Second, the compaction minimizes the empirical leakage $\widehat\tau_j^{(d+1)}(P)$ computed from $\widehat\sigma_j^{(d+1)}$, which differs from the true leakage $\tau_j^{(d+1)}(P)$ because of tomography error.
  \item Third, for the arbitrary basis case in Theorem~\ref{thm:tree-continuous-basis}, the limited search space of product bases introduces an additional error. Specifically, the optimal product basis is replaced by a nearby one from the product of single-qubit $\varepsilon_2$-nets, which is controlled by $\xi$ in Lemma~\ref{lem:generic-one-merge-stability} below. For the fixed basis case in Theorem~\ref{thm:tree-finite-bases}, we do not have this error because the search space contains the true product basis.
\end{itemize}

\begin{lemma}[Leakage propagation]
  \label{lem:generic-one-merge-stability}
  Suppose both child projectors $P_L$ and $P_R$ have leakage at most $\eta_d$ and the tomography result obeys $\lVert\widehat\sigma_j^{(d+1)}-\sigma_j^{(d+1)}\rVert_F\leq\varepsilon_1$. Also, assume the search space of the product bases considered in Step~2(b) contains $\overline P$ satisfying
  \begin{equation}
    \tau_j^{(d+1)}(\overline P)
    \leq\tau_j^{(d+1)}(P_{j,\star}^{(d+1)})+\xi,
    \label{eq:generic-comparator-gap}
  \end{equation}
  for some $\xi\geq0$. If $\widehat P$ is returned by the empirical leakage minimization in Step~2(b), then
  \begin{equation}
    \operatorname{Tr}[(I-\widehat P)\rho_Q]
    \leq8\eta_d+4s\varepsilon_1+2\xi.
    \label{eq:generic-merge-recurrence}
  \end{equation}
\end{lemma}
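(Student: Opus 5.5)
The plan is to combine the triangle inequality for leakage (Lemma~\ref{lem:projection-leakage-triangle}) with a bound on the exact leakage $\tau_j^{(d+1)}(\widehat P)$ of the returned projector. Applying Lemma~\ref{lem:projection-leakage-triangle} to $\rho_Q$ with projectors $A$ and $\widehat P$, and using $(a+b)^2\le 2a^2+2b^2$, gives
\begin{equation*}
  \operatorname{Tr}[(I-\widehat P)\rho_Q]\le 2\operatorname{Tr}[(I-A)\rho_Q]+2\tau_j^{(d+1)}(\widehat P).
\end{equation*}
It then suffices to prove two bounds: $\operatorname{Tr}[(I-A)\rho_Q]\le 2\eta_d$ and $\tau_j^{(d+1)}(\widehat P)\le 2\eta_d+2s\varepsilon_1+\xi$. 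Together they give $4\eta_d+4\eta_d+4s\varepsilon_1+2\xi$, which is the claimed bound.

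For the first bound, use the operator inequality $I-P_L\otimes P_R\preceq (I-P_L)\otimes I+I\otimes(I-P_R)$. This holds because $I-P_L\otimes P_R=(I-P_L)\otimes I+P_L\otimes(I-P_R)$ and $P_L\preceq I$. Taking the trace against $\rho_Q$ and using that the marginals of $\rho_Q$ on $L$ and $R$ are $\rho_L,\rho_R$ gives $\operatorname{Tr}[(I-A)\rho_Q]\le 2\eta_d$.

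The bound on $\tau(\widehat P)$ (superscripts dropped) splits into an ideal-comparator part and a statistical part. For the ideal part, $\rho_Q=P_\star\rho_QP_\star$ implies $(I-P_\star)\otimes I\lvert\psi\rangle=0$ for a purification $\lvert\psi\rangle$ of $\rho_Q$. Hence $(I-P_\star)A\lvert\psi\rangle=(I-P_\star)(A-I)\lvert\psi\rangle$, and therefore $\tau(P_\star)=\lVert(I-P_\star)A\psi\rVert^2\le\lVert(I-A)\psi\rVert^2\le 2\eta_d$. For the statistical part, write $\tau(\widehat P)-\tau(\overline P)=\operatorname{Tr}[(\overline P-\widehat P)\sigma]$, and similarly for $\widehat\tau$ with $\widehat\sigma$. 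Empirical optimality gives $\widehat\tau(\widehat P)\le\widehat\tau(\overline P)$, and so
\begin{equation*}
  \tau(\widehat P)-\tau(\overline P)\le \operatorname{Tr}[(\overline P-\widehat P)(\sigma-\widehat\sigma)]\le\lVert\overline P-\widehat P\rVert_F\,\varepsilon_1\le\sqrt{2s}\,\varepsilon_1\le 2s\varepsilon_1,
\end{equation*}
using that both projectors have rank at most $s$. Combining this with the hypothesis \eqref{eq:generic-comparator-gap} yields $\tau(\widehat P)\le\tau(P_\star)+\xi+2s\varepsilon_1\le2\eta_d+2s\varepsilon_1+\xi$.

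The main obstacle is this statistical step. A naive approach bounds $\lvert\tau(P)-\widehat\tau(P)\rvert$ separately for $\widehat P$ and for $\overline P$. Each such bound costs up to $(s+\sqrt s)\varepsilon_1$, because $\operatorname{Tr}(\sigma-\widehat\sigma)$ on a rank-$s^2$ support is only controlled by $s\varepsilon_1$. Paying this twice would overshoot the constant $4s$. The fix is to compare the two projectors through their difference, as above, so that the trace of $\sigma-\widehat\sigma$ cancels. This relies on $\widehat\sigma$ being Hermitian, as ensured by Corollary~\ref{cor:selected-submatrix-tomography}. The other steps are routine.
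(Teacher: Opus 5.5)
Your proof is correct and matches the paper's structure: the same leakage triangle inequality with $(a+b)^2\le 2a^2+2b^2$, the same operator inequality giving $\operatorname{Tr}[(I-A)\rho_Q]\le 2\eta_d$, and the same purification argument for $\tau(P_\star)\le 2\eta_d$.

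The one real difference is the statistical step. The paper proves a uniform deviation bound, valid for every candidate $P$:
$\lvert\widehat\tau(P)-\tau(P)\rvert=\lvert\operatorname{Tr}[A(I-P)A(\widehat\sigma-\sigma)]\rvert\le s\varepsilon_1$.
This uses that $\sigma$ and $\widehat\sigma$ are supported on $A$, and that $0\preceq A(I-P)A\preceq A$ has rank at most $s^2$, hence Frobenius norm at most $s$. The paper then pays this bound twice in the chain $\tau(\widehat P)\le\widehat\tau(\widehat P)+s\varepsilon_1\le\widehat\tau(\overline P)+s\varepsilon_1\le\tau(\overline P)+2s\varepsilon_1$.

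You instead take the difference of the two projectors. That needs only Hermiticity of $\widehat\sigma$ and $\operatorname{rank}\widehat P,\operatorname{rank}\overline P\le s$; it does not need $\widehat\sigma$ to be supported on $A$. It gives the sharper $\sqrt{2s}\,\varepsilon_1$, so you actually obtain $8\eta_d+2\sqrt{2s}\,\varepsilon_1+2\xi$. The paper's version has the advantage of holding for arbitrary $P$, with no rank assumption on the candidate.

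Your ``main obstacle'' paragraph is mistaken, though. The per-projector route does not overshoot. Your $(s+\sqrt s)\varepsilon_1$ comes from splitting $I-P$ into separate $I$ and $P$ pieces, which is loose. Bounding $\lVert A(I-P)A\rVert_F\le s$ directly costs exactly $s\varepsilon_1$ per comparison. That is precisely how the paper arrives at the constant $4s$.
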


\begin{proof}
  For readability, we write the projected state and its tomography result as
  \begin{equation}
    \sigma:=\sigma_j^{(d+1)},
    \quad
    \widehat\sigma:=\widehat\sigma_j^{(d+1)}.
    \label{eq:merge-leakage-proof-notation}
  \end{equation}
  Also write $\tau:=\tau_j^{(d+1)}$, $\widehat\tau:=\widehat\tau_j^{(d+1)}$, and $P_\star:=P_{j,\star}^{(d+1)}$.

  First, since $A=P_L\otimes P_R$, we have
  \begin{equation}
    \begin{aligned}
      I-A
       & =(I-P_L)\otimes I+P_L\otimes(I-P_R)      \\
       & \preceq(I-P_L)\otimes I+I\otimes(I-P_R).
    \end{aligned}
    \label{eq:merge-temporary-projector}
  \end{equation}
  Taking the expectation in $\rho_Q$ gives
  \begin{equation}
    \operatorname{Tr}[(I-A)\rho_Q]\le \operatorname{Tr}[(I-P_L)\rho_L]+\operatorname{Tr}[(I-P_R)\rho_R] \le 2\eta_d.
    \label{eq:merge-temporary-leakage}
  \end{equation}
  Because $\sigma$ and $\widehat\sigma$ are supported on $A$, every candidate $P$ satisfies
  \begin{equation}
    \widehat\tau(P)-\tau(P)
    =\operatorname{Tr}[A(I-P)A(\widehat\sigma-\sigma)].
    \label{eq:merge-loss-difference}
  \end{equation}
  Therefore, by the Cauchy-Schwarz inequality,
  \begin{equation}
    \lvert\widehat\tau(P)-\tau(P)\rvert
    \leq\lVert A(I-P)A\rVert_F \lVert\widehat\sigma-\sigma\rVert_F
    \leq s\varepsilon_1.
    \label{eq:merge-uniform-loss-error}
  \end{equation}
  Here, we used that $A(I-P)A$ has eigenvalues at most one and rank at most $s^2$. Meanwhile, the ideal projector $P_\star$ has $\operatorname{Tr}[(I-P_\star)\rho_Q]=0$, so
  \begin{equation}
    \begin{aligned}
      \tau(P_\star)
       & =\lVert(I-P_\star)A\rho_Q^{1/2}\rVert_F^2     \\
       & =\lVert(I-P_\star)(I-A)\rho_Q^{1/2}\rVert_F^2 \\
       & \leq\operatorname{Tr}[(I-A)\rho_Q]
      \leq2\eta_d.
    \end{aligned}
    \label{eq:merge-exact-comparison}
  \end{equation}
  Empirical optimality, together with Eqs.~\eqref{eq:generic-comparator-gap},~\eqref{eq:merge-uniform-loss-error}, and~\eqref{eq:merge-exact-comparison}, gives
  \begin{equation}
    \begin{aligned}
      \tau(\widehat P)
       & \leq\widehat\tau(\widehat P)+s\varepsilon_1  \\
       & \leq\widehat\tau(\overline P)+s\varepsilon_1 \\
       & \leq\tau(\overline P)+2s\varepsilon_1        \\
       & \leq2\eta_d+\xi+2s\varepsilon_1.
    \end{aligned}
    \label{eq:generic-selected-comparison}
  \end{equation}
  Finally, the triangle inequality for leakage error in Lemma~\ref{lem:projection-leakage-triangle} gives
  \begin{equation}
    \begin{aligned}
      \sqrt{\operatorname{Tr}[(I-\widehat P)\rho]}
       & \leq\sqrt{\operatorname{Tr}[(I-A)\rho]}
      +\sqrt{\operatorname{Tr}[(I-\widehat P)A\rho A]}                \\
       & =\sqrt{\operatorname{Tr}[(I-A)\rho]}+\sqrt{\tau(\widehat P)} \\
       & \leq\sqrt{2\eta_d}+\sqrt{2\eta_d+\xi+2s\varepsilon_1}.
    \end{aligned}
  \end{equation}
  Squaring this inequality and using $(\sqrt a+\sqrt b)^2\leq2a+2b$ gives
  \begin{equation}
    \begin{aligned}
      \operatorname{Tr}[(I-\widehat P)\rho]
       & \leq\left(\sqrt{2\eta_d}+\sqrt{2\eta_d+\xi+2s\varepsilon_1}\right)^2 \\
       & \leq8\eta_d+4s\varepsilon_1+2\xi.
    \end{aligned}
    \label{eq:generic-projection-transfer}
  \end{equation}
\end{proof}

Applying Lemma~\ref{lem:generic-one-merge-stability} at level $d$ with $\xi=\xi_d$, and using the fact that carried blocks keep their previous leakage, gives
\begin{equation}
  \eta_{d+1}
  \leq8\eta_d+4s\varepsilon_1+2\xi_d.
  \label{eq:generic-level-recurrence}
\end{equation}
Starting from $\eta_0=0$ and iterating Eq.~\eqref{eq:generic-level-recurrence} yields
\begin{equation}
  \eta_{D-1}
  \leq\frac{4s(8^{D-1}-1)}7\varepsilon_1
  +2\sum_{d=0}^{D-2}8^{D-2-d}\xi_d.
  \label{eq:generic-root-iteration}
\end{equation}
Eq.~\eqref{eq:generic-root-iteration} controls the projectors entering the final merge, whose estimate is used in Step~3. It remains to bound $\xi_d$ in the two settings: Section~\ref{sec:polynomial-copy-baseline} uses the single-qubit $\varepsilon_2$-net for Theorem~\ref{thm:tree-continuous-basis}, whereas Section~\ref{sec:finite-set-result} shows $\xi_d=0$ for the fixed basis set in Theorem~\ref{thm:tree-finite-bases}.

\subsection{Arbitrary continuous product bases (Theorem~\ref{thm:tree-continuous-basis})}
\label{sec:polynomial-copy-baseline}

Now we prove Theorem~\ref{thm:tree-continuous-basis}, where the input state $\rho$ is $s$-sparse in an arbitrary product basis. In the arbitrary-basis setting, Step~2(b) searches the product of single-qubit $\varepsilon_2$-nets, which leads to error $\xi_d$ in the leakage recurrence from Section~\ref{sec:ideal-merge}. The recurrence then controls the leakage of the projector entering Step~3. The remaining procedure in Step~3 is handled similarly. We conclude by counting the number of copies of $\rho$, failure probability, and classical runtime.

\paragraph{Leakage with epsilon-net.}

Again, consider a merge from level $d$ to level $d+1$ on $Q:=\mathcal B_j^{(d+1)}$, using the notation of Section~\ref{sec:ideal-merge}. Then, following Lemma~\ref{lem:generic-one-merge-stability}, we have the following recurrence formula:

\begin{corollary}
  \label{cor:net-merge-leakage}
  Suppose both child projectors $P_L$ and $P_R$ have leakage at most $\eta_d$ and the tomography result obeys $\lVert\widehat\sigma_j^{(d+1)}-\sigma_j^{(d+1)}\rVert_F\leq\varepsilon_1$. If $\widehat P$ is returned by the empirical leakage minimization in Step~2(b) searching over the product of single-qubit $\varepsilon_2$-nets, then
  \begin{equation}
    \operatorname{Tr}[(I-\widehat P_j^{(d+1)})\rho_Q]
    \leq8\eta_d+4s\varepsilon_1
    +2\lvert Q\rvert\sin\frac{\varepsilon_2}{2}.
    \label{eq:net-merge-leakage-recurrence}
  \end{equation}
\end{corollary}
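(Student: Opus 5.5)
The plan is to reduce the statement to Lemma~\ref{lem:generic-one-merge-stability} with $\xi:=\lvert Q\rvert\sin(\varepsilon_2/2)$, so that its bound $8\eta_d+4s\varepsilon_1+2\xi$ is exactly Eq.~\eqref{eq:net-merge-leakage-recurrence}. The only thing to supply is a comparator $\overline P$ in the product $\varepsilon_2$-net search space satisfying Eq.~\eqref{eq:generic-comparator-gap} with this $\xi$.

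\textbf{Construction of the comparator.} Let $B$ be the hidden product basis on $Q$ in which $\rho_Q$ is $s$-sparse (inherited sparsity). Let $R$ be its diagonal support, so that $P_\star=\sum_{x\in R}\lvert\phi_x\rangle\!\langle\phi_x\rvert$ with $\lvert R\rvert\le s$. By Definition~\ref{def:single-qubit-epsilon-net}, for each $i\in Q$ pick a net axis $\widetilde{\mathbf n}_i$ with $\varphi_i:=d_{\mathrm{ang}}(\mathbf n_i,\widetilde{\mathbf n}_i)\le\varepsilon_2$. As in the proof of Lemma~\ref{lem:product-basis-stability}, relabel and rephase the net basis vectors $\lvert\phi'^{(i)}_b\rangle$ so that the labels align with those of $B_i$. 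Then define
\[
\overline P:=\sum_{x\in R}\lvert\phi'_x\rangle\!\langle\phi'_x\rvert .
\]
This is a projector onto at most $s$ product-basis vectors of a basis in the search space. Moreover $\overline P=UP_\star U^\dagger$, where $U=\bigotimes_{i\in Q}U_i$ and $U_i=V_i'V_i^\dagger$ is the single-qubit rotation taking $B_i$ to $B_i'$. Each $U_i$ is a Bloch rotation by angle $\varphi_i$, so up to a phase $U_i=\cos(\varphi_i/2)I+i\sin(\varphi_i/2)\,\mathbf m_i\cdot\boldsymbol\sigma$ for some unit vector $\mathbf m_i$.

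\textbf{Bounding the gap.} Since $\sigma_j^{(d+1)}\succeq0$ and $\operatorname{Tr}\sigma_j^{(d+1)}\le1$,
\[
\tau(\overline P)-\tau(P_\star)=\operatorname{Tr}\bigl[(P_\star-\overline P)\,\sigma_j^{(d+1)}\bigr]\le\lVert P_\star-\overline P\rVert_{\mathrm{op}} .
\]
To bound the operator norm, telescope by applying the $U_i$ one qubit at a time. Unitary invariance of the operator norm gives
\[
\lVert P_\star-UP_\star U^\dagger\rVert_{\mathrm{op}}\le\sum_{i\in Q}\lVert P^{(i)}-U_iP^{(i)}U_i^\dagger\rVert_{\mathrm{op}},
\]
where each $P^{(i)}$ is a projector obtained by conjugating $P_\star$ by the already-applied rotations. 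Using the form of $U_i$, for any projector $P$,
\[
\lVert U_iPU_i^\dagger-P\rVert_{\mathrm{op}}=\lVert[U_i,P]\rVert_{\mathrm{op}}=\sin\tfrac{\varphi_i}{2}\,\lVert[\mathbf m_i\cdot\boldsymbol\sigma,P]\rVert_{\mathrm{op}} .
\]
For a Hermitian $H$ with $\lVert H\rVert_{\mathrm{op}}\le1$, the commutator $[H,P]=PH(I-P)-(I-P)HP$ is block off-diagonal, so its norm is $\lVert PH(I-P)\rVert_{\mathrm{op}}\le1$. Hence each term is at most $\sin(\varphi_i/2)\le\sin(\varepsilon_2/2)$, and summing gives $\xi=\lvert Q\rvert\sin(\varepsilon_2/2)$.

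\textbf{Main obstacle.} The delicate step is getting this operator-norm bound without a $\sqrt s$ factor. The Frobenius route of Lemma~\ref{lem:product-basis-stability} would lose $\sqrt s$ and also a factor $2$ relative to $\sin(\varepsilon_2/2)$. The block off-diagonal commutator argument avoids both losses, and I would check the phase and label alignment carefully there. With $\xi$ in hand, invoking Lemma~\ref{lem:generic-one-merge-stability} finishes the proof.
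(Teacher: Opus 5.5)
Your proof is correct and follows the paper's argument: the same comparator $\overline P$ (the nearest net basis with the same retained bitstrings), the same qubit-by-qubit telescoping of $\lVert\overline P-P_\star\rVert_{\mathrm{op}}$, the same bound $\tau(\overline P)-\tau(P_\star)\le\lVert\overline P-P_\star\rVert_{\mathrm{op}}$, and the same application of Lemma~\ref{lem:generic-one-merge-stability} with $\xi=\lvert Q\rvert\sin(\varepsilon_2/2)$. The only difference is the per-qubit step. The paper block-diagonalizes $P^{(r)}-P^{(r-1)}$ over the other qubits' basis states and bounds each nonzero block as a difference of two rank-one qubit projectors, $\tfrac12\lVert(\overline{\mathbf n}_r-\mathbf n_r)\cdot\boldsymbol\sigma\rVert_{\mathrm{op}}=\sin(\theta_r/2)$. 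Your commutator bound instead holds for any projector, but it relies on the phase choice you made. That choice is valid: it makes both diagonal entries of $U_i$ equal to $\cos(\varphi_i/2)$, hence $U_i=\cos(\varphi_i/2)I+i\sin(\varphi_i/2)\,\mathbf m_i\cdot\boldsymbol\sigma$.
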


\begin{proof}
  Let $P_\star:=P_{j,\star}^{(d+1)}$. Order the qubits in $Q$, write the hidden basis corresponding to $P_\star$ as $B=(B_1,\ldots,B_{\lvert Q\rvert})$, and choose $\overline B = (\overline B_1,\ldots,\overline B_{\lvert Q\rvert})\in\mathcal N(\varepsilon_2)^{\lvert Q\rvert}$ where each $\overline B_i$ is within angular distance $\varepsilon_2$ of $B_i$. For each qubit $i$, let $V_i$ and $V_i'$ be the single-qubit unitaries that map the computational basis to $\overline B_i$ and $B_i$ respectively, i.e.,
  \begin{equation}
    V_i\lvert b\rangle=\lvert\phi_b^{(i)}\rangle,
    \qquad
    V_i'\lvert b\rangle=\lvert\overline\phi_b^{(i)}\rangle,
    \qquad b\in\{0,1\},
  \end{equation}
  where $B_i = \{\lvert\phi_0^{(i)}\rangle,\lvert\phi_1^{(i)}\rangle\}$ and $\overline B_i = \{\lvert\overline\phi_0^{(i)}\rangle,\lvert\overline\phi_1^{(i)}\rangle\}$. Let $S\subseteq\{0,1\}^{\lvert Q\rvert}$ be the set of bitstrings whose basis vectors are retained by $P_\star$. Then, we have
  \begin{equation}
    P_\star=V\Pi_SV^\dagger,
    \qquad
    \overline P:=V'\Pi_SV'^\dagger,
  \end{equation}
  where we denote $\Pi_S:=\sum_{x\in S}\lvert x\rangle\!\langle x\rvert$, $V:=\bigotimes_{i=1}^{\lvert Q\rvert}V_i$, and $V':=\bigotimes_{i=1}^{\lvert Q\rvert}V_i'$.

  For $0\leq r\leq\lvert Q\rvert$, define
  \begin{equation}
    V^{(r)}
    :=\left(\bigotimes_{i=1}^{r}V_i'\right)
    \otimes\left(\bigotimes_{i=r+1}^{\lvert Q\rvert}V_i\right),
    \qquad
    P^{(r)}:=V^{(r)}\Pi_SV^{(r)\dagger}.
  \end{equation}
  Thus $V^{(0)}=V$, $V^{(\lvert Q\rvert)}=V'$, $P^{(0)}=P_\star$, and $P^{(\lvert Q\rvert)}=\overline P$. Let
  \begin{equation}
    \theta_r:=d_{\mathrm{ang}}(\mathbf n_r,\overline{\mathbf n}_r)
    \leq\varepsilon_2.
  \end{equation}
  Only the $r$-th local basis changes between $P^{(r-1)}$ and $P^{(r)}$. Let $S^{(-r)}$ be the set of bitstrings in $S$ obtained by omitting the $r$-th qubit,
  \begin{equation}
    S^{(-r)}:=\{z \in \{0,1\}^{|Q|-1}: (z,0)\in S \text{ or } (z,1)\in S\}.
  \end{equation}
  Here, $(z,b)$ denotes the bitstring obtained by appending $b$ to $z$ on the $r$-th qubit for $b=0,1$.
  \begin{equation}
    P^{(r)}-P^{(r-1)}=\bigoplus_{z\in S^{(-r)}}\left(P_z^{(r)}-P_z^{(r-1)}\right).
  \end{equation}
  The operator norm of an orthogonal block-diagonal operator is the largest
  norm of its blocks, and thus we have:
  \begin{equation}
    \left\lVert P^{(r)}-P^{(r-1)}\right\rVert_{\mathrm{op}}
    =\max_z
    \left\lVert P_z^{(r)}-P_z^{(r-1)}\right\rVert_{\mathrm{op}}.
  \end{equation}
  For each $z\in S^{(-r)}$, if both $(z,0)$ and $(z,1)$ are in $S$, then both $P_z^{(r)}$ and $P_z^{(r-1)}$ act as the identity on qubit $r$, and thus $P_z^{(r)}-P_z^{(r-1)}=0$. If only one of $(z,0)$ and $(z,1)$ is in $S$, we have
  \begin{equation}
    \begin{aligned}
      \left\lVert
      P_z^{(r)}-P_z^{(r-1)}
      \right\rVert_{\mathrm{op}}
       & =\frac12\left\lVert
      (\overline{\mathbf n}_r-\mathbf n_r)
      \cdot\boldsymbol{\sigma}
      \right\rVert_{\mathrm{op}} \\
       & =\frac12
      \left\lVert\overline{\mathbf n}_r-\mathbf n_r\right\rVert_2
      =\sin\frac{\theta_r}{2}.
    \end{aligned}
  \end{equation}
  Here, $\mathbf n_r$ and $\overline{\mathbf n}_r$ are the Bloch vectors of $\lvert\phi_0^{(r)}\rangle$ and $\lvert\overline\phi_0^{(r)}\rangle$, respectively. Therefore, we have
  \begin{equation}
    \left\lVert P^{(r)}-P^{(r-1)}\right\rVert_{\mathrm{op}}
    \leq\sin\frac{\theta_r}{2}
    \leq\sin\frac{\varepsilon_2}{2}.
  \end{equation}
  Telescoping over the local basis changes gives
  \begin{equation}
    \left\lVert\overline P-P_\star\right\rVert_{\mathrm{op}}
    \leq
    \sum_{r=1}^{\lvert Q\rvert}
    \left\lVert P^{(r)}-P^{(r-1)}\right\rVert_{\mathrm{op}}
    \leq
    \lvert Q\rvert\sin\frac{\varepsilon_2}{2}.
  \end{equation}
  Since $\sigma_j^{(d+1)}\succeq0$ and
  $\operatorname{Tr}\sigma_j^{(d+1)}\leq1$,
  \begin{equation}
    \begin{aligned}
      \tau(\overline P)-\tau(P_\star)
       & =
      \operatorname{Tr}\!\left[
                           (P_\star-\overline P)\sigma_j^{(d+1)}
                           \right] \\
       & \leq
      \left\lVert\overline P-P_\star\right\rVert_{\mathrm{op}}
      \operatorname{Tr}\sigma_j^{(d+1)}                         \\
       & \leq
      \lvert Q\rvert\sin\frac{\varepsilon_2}{2}.
    \end{aligned}
    \label{eq:net-merge-leakage-net-comparison}
  \end{equation}
  Thus the assumption of Lemma~\ref{lem:generic-one-merge-stability} holds with $\xi=\lvert Q\rvert\sin(\varepsilon_2/2)$, and applying that lemma concludes the proof.
\end{proof}

The blocks $\mathcal B_j^{(d+1)}$ contain at most $2^{d+1}m$ qubits. Hence
\begin{equation}
  \xi_d\leq2^{d+1}m\sin\frac{\varepsilon_2}{2}.
\end{equation}
Substituting this into Eq.~\eqref{eq:generic-root-iteration} gives
\begin{equation}
  \eta_{D-1}
  \leq
  \frac{4s(8^{D-1}-1)}7\varepsilon_1
  +\frac{2m}{3}(8^{D-1}-2^{D-1})\sin\frac{\varepsilon_2}{2}.
  \label{eq:net-merge-leakage-root-iteration}
\end{equation}
Recall $D=\lceil\log_2K_0\rceil$ with $K_0=\lceil n/m\rceil$. Using the parameterizations in Eq.~\eqref{eq:algorithm1-merge-parameters}, we have
\begin{equation}
  \eta_{D-1}
  <\frac{\varepsilon^2}{112s}
  +\frac{\varepsilon^2}{192s}
  =\frac{19\varepsilon^2}{1344s}.
  \label{eq:tree-continuous-basis-tree-tolerances}
\end{equation}
The projector entering Step~3 has leakage controlled in this way.

\paragraph{Approximating input state with epsilon-net.}

Given the projector $A^{(D)}_1$ from the end of Step~2 with small leakage $\eta_{D-1}$ in Eq.~\eqref{eq:tree-continuous-basis-tree-tolerances}, we reconstruct the $s$-sparse output state $\widehat\rho$ along with a basis $B' \in \mathcal{N}(\varepsilon_2)^n$ in Step~3. To this end, we show that such an output basis exists in $\mathcal{N}(\varepsilon_2)^n$.

\begin{lemma}[Approximation of input state with epsilon-net]
  \label{lem:net-state-approximation}
  Let $B = (B_1,\ldots,B_n)$ be a product basis, and suppose that $\rho$ is $s$-sparse in $B$. Let $B' = (B_1',\ldots,B_n')\in \mathcal{N}(\varepsilon_2)^n$ be a product basis such that $d_{\mathrm{ang}}(B_i,B_i')\leq\varepsilon_2$ for all $i \in \{1,\ldots,n\}$. For $x \in \{0,1\}^n$, let $\lvert\phi_x\rangle$ and $\lvert\phi'_x\rangle$ be the basis vectors in $B$ and $B'$ respectively. Define
  \begin{equation}
    \rho'
    :=\sum_{x,y\in\{0,1\}^n}[\rho]_B(x,y)
    \lvert \phi'_x\rangle\!\langle \phi'_y\rvert.
    \label{eq:net-state-approximation-comparator}
  \end{equation}
  Then, $\rho'$ is an $s$-sparse state in $B'$ such that
  \begin{equation}
    \lVert\rho'-\rho\rVert_F
    \leq4n\sin\frac{\varepsilon_2}{4}.
    \label{eq:net-state-approximation-error}
  \end{equation}
\end{lemma}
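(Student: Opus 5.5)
The plan is to write both states as conjugations of the same sparse matrix and bound the Frobenius distance by how far the two change-of-basis unitaries are from each other. Let $C:=[\rho]_B$, which is $s$-sparse, positive semidefinite, and has unit trace. Define $V:=\bigotimes_iV_i$ with $V_i\lvert b\rangle=\lvert\phi_b^{(i)}\rangle$, and $V':=\bigotimes_iV_i'$ with $V_i'\lvert b\rangle=\lvert\phi_b^{\prime(i)}\rangle$. Then $\rho=VCV^\dagger$ and $\rho'=V'CV'^\dagger$. Since $V'$ is unitary, $\rho'$ is a state and $[\rho']_{B'}=C$, so $\rho'$ is $s$-sparse in $B'$. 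This disposes of the structural claim, and what remains is the norm bound.

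\textbf{Step 1: choice of labels and phases.} Definition~\eqref{eq:net-state-approximation-comparator} depends on the labels and phases of the vectors $\lvert\phi_b^{\prime(i)}\rangle$, and we are free to fix them. As in Lemma~\ref{lem:product-basis-stability}, I first relabel so that $\mathbf n_i\cdot\mathbf n_i'=\cos\theta_i$, where $\theta_i:=d_{\mathrm{ang}}(B_i,B_i')\le\varepsilon_2$. I then set $\lvert\phi_b^{\prime(i)}\rangle:=R_i\lvert\phi_b^{(i)}\rangle$. Here $R_i\in SU(2)$ is the minimal rotation by angle $\theta_i$ about the axis $\mathbf n_i\times\mathbf n_i'$, namely $R_i=\exp(-i\tfrac{\theta_i}{2}\,\mathbf m_i\cdot\boldsymbol\sigma)$, which carries $\mathbf n_i$ to $\mathbf n_i'$. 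This gives a legitimate phase choice for the basis $B_i'$, and $V_i'=R_iV_i$.

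\textbf{Step 2: the local bound.} The eigenvalues of $R_i$ are $e^{\pm i\theta_i/2}$, so
\begin{equation*}
\lVert V_i'-V_i\rVert_{\mathrm{op}}=\lVert R_i-I\rVert_{\mathrm{op}}=\lvert e^{i\theta_i/2}-1\rvert=2\sin\frac{\theta_i}{4}\le2\sin\frac{\varepsilon_2}{4}.
\end{equation*}

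\textbf{Step 3: telescoping.} I replace one tensor factor at a time, as in the proof of Corollary~\ref{cor:net-merge-leakage}. Every other factor is unitary, so each replacement costs at most $\lVert V_i'-V_i\rVert_{\mathrm{op}}$, and therefore
\begin{equation*}
\lVert V'-V\rVert_{\mathrm{op}}\le\sum_{i=1}^n\lVert V_i'-V_i\rVert_{\mathrm{op}}\le 2n\sin\frac{\varepsilon_2}{4}.
\end{equation*}

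\textbf{Step 4: the Frobenius estimate.} I split the difference as
\begin{equation*}
\rho'-\rho=(V'-V)CV'^\dagger+VC(V'-V)^\dagger.
\end{equation*}
Each term has Frobenius norm at most $\lVert V'-V\rVert_{\mathrm{op}}\lVert C\rVert_F$. Since $C\succeq0$ and $\operatorname{Tr}C=1$, we have $\lVert C\rVert_F\le\operatorname{Tr}C=1$. Combining the two terms gives $\lVert\rho'-\rho\rVert_F\le 2\lVert V'-V\rVert_{\mathrm{op}}\le4n\sin(\varepsilon_2/4)$, which is the claimed bound.

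The main obstacle is Step~1. A careless phase convention, for instance making $\langle\phi_b\mid\phi_b'\rangle$ real and positive for each $b$ separately, can leave $V_i^\dagger V_i'$ with determinant $\neq1$. Then its eigenvalues need not be $e^{\pm i\theta_i/2}$, and the $\sin(\theta_i/4)$ rate fails. Defining $B'$'s vectors through the $SU(2)$ rotation $R_i$ avoids this. It also matches the paper's use of $\rho'$, which only requires that $\rho'$ be some $s$-sparse state in $B'$.
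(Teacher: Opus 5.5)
Your proof is correct and follows the paper's argument: you write $\rho=VCV^\dagger$ and $\rho'=V'CV'^\dagger$, bound each factor by $\lVert V_i'-V_i\rVert_{\mathrm{op}}\le 2\sin(\theta_i/4)$, telescope, and use $\lVert C\rVert_F\le1$; your explicit $SU(2)$ phase choice, with eigenvalues $e^{\pm i\theta_i/2}$, also fills in a step the paper leaves implicit (its text writes the eigenvalues as $\pm e^{\pm i\theta_i}$, which is inconsistent with the value $2\sin(\theta_i/4)$ it then states). One correction to your closing remark: making each $\langle\phi_b\mid\phi_b'\rangle$ real and positive is not a careless convention, because a $2\times2$ unitary with both diagonal entries equal to the same $c>0$ has the form $\bigl(\begin{smallmatrix} c & -\bar w\\ w & c\end{smallmatrix}\bigr)$ with determinant $c^2+\lvert w\rvert^2=1$, and your $R_i$ produces exactly these overlaps $\cos(\theta_i/2)$ since $\mathbf m_i\perp\mathbf n_i$.
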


\begin{proof}
  Let $V=\bigotimes_{i=1}^n V_i$ and
  $V'=\bigotimes_{i=1}^n V_i'$ map the computational basis to $B$ and $B'$.
  For $\theta_i:=d_{\mathrm{ang}}(B_i,B_i')\le \varepsilon_2$, the eigenvalues of $V_i'V_i^\dagger$ are $\pm e^{\pm i\theta_i}$ up to a global phase. Then, we have
  \begin{equation}
    \lVert V_i'-V_i\rVert_{\mathrm{op}}
    =\lVert V_i'V_i^\dagger-I\rVert_{\mathrm{op}}
    =\max\{\lvert e^{i\theta_i}-1\rvert, \lvert e^{-i\theta_i}-1\rvert\}
    =2\sin\frac{\theta_i}{4}
    \leq2\sin\frac{\varepsilon_2}{4}.
    \label{eq:net-state-approximation-local-unitary-error}
  \end{equation}
  This gives
  \begin{equation}
    \begin{aligned}
      \lVert V'-V\rVert_{\mathrm{op}}
       & \leq\sum_{j=1}^n
      \left\lVert
      \left(\bigotimes_{i<j}V_i'\right)
      (V_j'-V_j)
      \left(\bigotimes_{i>j}V_i\right)
      \right\rVert_{\mathrm{op}}            \\
       & \leq2n\sin\frac{\varepsilon_2}{4}.
    \end{aligned}
    \label{eq:net-state-approximation-product-unitary-error}
  \end{equation}
  Write $C:=[\rho]_B=[\rho']_{B'}$. By construction, we have
  \begin{equation}
    \rho=VCV^\dagger,
    \qquad
    \rho'=V'CV'^\dagger.
    \label{eq:net-state-approximation-same-coefficients}
  \end{equation}
  Since $\lVert C\rVert_F\leq1$, we have
  \begin{equation}
    \begin{aligned}
      \lVert\rho'-\rho\rVert_F
       & \leq\lVert(V'-V)CV'^\dagger\rVert_F
      +\lVert VC(V'^\dagger-V^\dagger)\rVert_F                \\
       & \leq2\lVert V'-V\rVert_{\mathrm{op}}\lVert C\rVert_F \\
       & \leq2\lVert V'-V\rVert_{\mathrm{op}}
    \end{aligned}
    \label{eq:net-state-approximation-conjugation-bound}
  \end{equation}
  Combining this with Eq.~\eqref{eq:net-state-approximation-product-unitary-error} concludes the proof.
\end{proof}

\paragraph{Returning output with resource bounds.}

We are now ready to complete the proof of Theorem~\ref{thm:tree-continuous-basis}. Lemma~\ref{lem:projection-error} and Eq.~\eqref{eq:merge-temporary-leakage} bound the distance between the estimated projected state $\widehat \sigma^{(D)}_1$ and the input state $\rho$ with tomography and leakage errors.
\begin{equation}
  \begin{aligned}
    \left\|\widehat\sigma_1^{(D)}-\rho\right\|_F
     & \leq
    \left\|\widehat\sigma_1^{(D)}-A_1^{(D)}\rho A_1^{(D)}\right\|_F
    +\left\|A_1^{(D)}\rho A_1^{(D)}-\rho\right\|_F \\
     & \leq\varepsilon_1
    +\sqrt{2\operatorname{Tr}[(I-A_1^{(D)})\rho]}  \\
     & \leq\varepsilon_1+2\sqrt{\eta_{D-1}}.
  \end{aligned}
  \label{eq:tree-continuous-basis-root-tomography}
\end{equation}
Due to Eq.~\eqref{eq:tree-continuous-basis-tree-tolerances} and $\varepsilon_1\leq\varepsilon/(512\sqrt{s})$, we have
\begin{equation}
  \left\|\widehat\sigma_1^{(D)}-\rho\right\|_F
  <
  \left(
  \frac1{512}+2\sqrt{\frac{19}{1344}}
  \right)\frac{\varepsilon}{\sqrt{s}}.
  \label{eq:tree-continuous-basis-root-tomography-error}
\end{equation}

We next account for the error due to compressing the sparsity of $\widehat\sigma_1^{(D)}$ by choosing an epsilon-net of product bases and truncating the coefficients. Lemma~\ref{lem:net-state-approximation} ensures that there exists a state $\rho'$ that is $s$-sparse in some basis $B'\in\mathcal{N}(\varepsilon_2)^n$ such that
\begin{equation}
  \left\lVert\rho'-\rho\right\rVert_F
  \le
  4n\sin\frac{\varepsilon_2}{4}
  \le n\varepsilon_2 = \frac{\varepsilon^2}{64sn^2}.
  \label{eq:tree-continuous-basis-reconstruction-targets}
\end{equation}
Recall that in each basis in $\mathcal{N}(\varepsilon_2)^n$, we retain the at most $s$ entries of the final merge estimate having largest magnitude and set all other entries to zero. We accept the basis $\widehat B$ and the resulting truncated state $\widehat\rho$ if
\begin{equation}
  \|\widehat\rho-\widehat\sigma_1^{(D)}\|_F
  \le \frac{\varepsilon}{4\sqrt{s}}.
  \label{eq:tree-continuous-basis-reconstruction-threshold}
\end{equation}
We can see that there exists such a basis. Specifically, let $\widehat\rho_{B'}$ be the truncated state in the basis $B'$ that we obtain from $\widehat\sigma_1^{(D)}$ by retaining the at most $s$ entries with largest magnitude and setting all other entries to zero. Then,
\begin{equation}
  \begin{aligned}
    \left\|\widehat\rho_{B'}-\widehat\sigma_1^{(D)}\right\|_F
     & \leq
    \left\|\rho'-\widehat\sigma_1^{(D)}\right\|_F                                    \\
     & \leq \left\|\rho'-\rho\right\|_F+\left\|\widehat\sigma_1^{(D)}-\rho\right\|_F \\
     & <\frac{\varepsilon^2}{64sn^2}
    +\left(\frac1{512}+2\sqrt{\frac{19}{1344}}\right)
    \frac{\varepsilon}{\sqrt{s}}                                                     \\
     & <\frac{\varepsilon}{4\sqrt{s}},
  \end{aligned}
\end{equation}
where the last inequality uses $0<\varepsilon\leq1$, $s\geq2$, and $n\geq2$ without loss of generality.

Therefore, Algorithm~2 outputs $\widehat{\rho}$ that is $s$-sparse in some basis $\widehat B$ satisfying Eq.~\eqref{eq:tree-continuous-basis-reconstruction-threshold}. Combining this and Eq.~\eqref{eq:tree-continuous-basis-root-tomography-error} gives
\begin{equation}
  \begin{aligned}
    \lVert\widehat\rho-\rho\rVert_F
     & \leq
    \left\|\widehat\rho-\widehat\sigma_1^{(D)}\right\|_F
    +\left\|\widehat\sigma_1^{(D)}-\rho\right\|_F \\
     & \leq \left(
    \frac14+\frac1{512}
    +2\sqrt{\frac{19}{1344}}
    \right)\frac{\varepsilon}{\sqrt{s}}           \\
     & <\frac{\varepsilon}{\sqrt{2s}}.
  \end{aligned}
  \label{eq:tree-continuous-basis-frobenius-closure}
\end{equation}
Finally, since $\rho$ and $\widehat\rho$ have rank at most $s$, the difference $\rho-\widehat\rho$ has rank at most $2s$, so the Cauchy-Schwarz inequality gives
\begin{equation}
  \lVert\widehat\rho-\rho\rVert_1
  \leq\sqrt{2s}\lVert\widehat\rho-\rho\rVert_F
  \leq\varepsilon.
  \label{eq:tree-continuous-basis-trace-closure}
\end{equation}

We now count the resources. At a merge, $\operatorname{rank}(A_j^{(d+1)})\leq s^2$.
Corollary~\ref{cor:selected-submatrix-tomography} therefore returns a Hermitian
estimate of the selected basis submatrix with Frobenius error $\varepsilon_1$ using
\begin{equation}
  O\!\left(
  s^8\varepsilon_1^{-2}\log\frac{ns}{\delta}
  \right)
  \label{eq:tree-continuous-basis-per-level-copies}
\end{equation}
copies of $\rho$ for one node. Since the active parents at one level occupy disjoint blocks, the same fresh copies serve every node at that level. Substituting the fixed value
$\varepsilon_1=\varepsilon^2/(64s^2n^3)$ from Eq.~\eqref{eq:algorithm1-merge-parameters} into
Eq.~\eqref{eq:tree-continuous-basis-per-level-copies} gives
$O(n^6s^{12}\varepsilon^{-4}\log(ns/\delta))$ copies of $\rho$ per level. There are
$O(\log n)$ levels, so we consume
\begin{equation}
  O\left(
  n^6s^{12}\varepsilon^{-4}\log n\log(ns/\delta)\right)
\end{equation}
copies of $\rho$ in total. Note that the tree has $K_0-1 \le n$ tomography steps to merge all $K_0$ leaves in total. Therefore, by a union bound, all tomography steps succeed with probability at least $1 - \delta_1 (K_0 -1) \ge 1-\delta$.

For the classical runtime, the basis search space is the product of $\mathcal N(\varepsilon_2)$. $\mathcal N(\varepsilon_2)$ has size at most $(C/\varepsilon_2)^2$ with a universal constant $C$, which gives
\begin{equation}
  \left\lvert\mathcal N(\varepsilon_2)^n\right\rvert
  \leq\left(\frac{C}{\varepsilon_2}\right)^{2n}
  =\left(\frac{64Csn^3}{\varepsilon^2}\right)^{2n}
  =\left(\frac{sn}{\varepsilon}\right)^{O(n)}.
\end{equation}
All other computations, including processing the tomography result and compressing the sparsity of the matrix, are absorbed by $(sn/\varepsilon)^{O(n)}$, except the factor $\operatorname{polylog}(\delta^{-1})$ which comes from processing the tomography data. Thus, the total classical runtime is
\begin{equation}
  \left(\frac{sn}{\varepsilon}\right)^{O(n)} \operatorname{polylog}(\delta^{-1}).
\end{equation}
This completes the proof of Theorem~\ref{thm:tree-continuous-basis}.

\subsection{Known finite set of single-qubit bases (Theorem~\ref{thm:tree-finite-bases})}
\label{sec:finite-set-result}

We now prove Theorem~\ref{thm:tree-finite-bases}, where the input state $\rho$ is $s$-sparse in a product basis drawn from $\mathcal A^n$ for the known finite set $\mathcal A$. While we search over the epsilon-net when performing the sparsity compression step (Step~2(b) and Step~3) for the arbitrary basis case, we instead search over a much smaller set of product bases in $\mathcal A^n$. We use the Donoho--Stark uncertainty principle to show that the search space is only $n^{O(\log s)}$. Apart from this basis searching, the rest of the algorithm is the same as the arbitrary basis case.

\paragraph{Search space for finite set of bases}
\label{sec:finite-set-search-space}

Consider a merge in Step~2 from $L:=\mathcal B_{2j-1}^{(d)}$ and $R:=\mathcal B_{2j}^{(d)}$ to $Q:=\mathcal B_j^{(d+1)}$, using the notation of Section~\ref{sec:ideal-merge}. Let $B=B_L \times B_R \in \mathcal A^{|Q|}$ be the temporary product basis from the children of $Q$, and denote $B=(B_1,\dots,B_{|Q|})$. We set
\begin{equation}
  H:=\left\lceil
  \frac{5\log(s)+2\log(4/3)}{4\kappa}
  \right\rceil,
  \label{eq:tree-finite-bases-radius}
\end{equation}
and define the basis search space for this merge as
\begin{equation}
  \mathcal L_j^{(d+1)}
  :=\left\{
  B'=(B_1',\dots,B_{|Q|}')\in\mathcal A^{|Q|}:
  \left\lvert\{i\in Q:B_i'\ne B_i\}\right\rvert\leq H
  \right\}.
  \label{eq:tree-finite-bases-search-space}
\end{equation}
Note that the number of product bases in this search space is at most
\begin{equation}
  G:=\sum_{h=0}^{\min\{n,H\}}
  \binom nh(\lvert\mathcal A\rvert-1)^h
  \leq(H+1)(n\lvert\mathcal A\rvert)^H.
  \label{eq:tree-finite-bases-radius-count}
\end{equation}
In the rest of this section, we show that this small search space is enough to find a good basis in Step~2(b).

\paragraph{Donoho--Stark uncertainty principle}
\label{sec:donoho-stark-uncertainty}

The Donoho--Stark uncertainty principle~\cite{donohoUncertaintyPrinciplesSignal1989,eladGeneralizedUncertaintyPrinciple2002,boggiattoTwoAspectsDonoho2016} is a fundamental result in signal processing that bounds the size of the support of a signal in two orthonormal bases. Adapting Theorem~2 of Ref.~\cite{boggiattoTwoAspectsDonoho2016}, we provide a statement of the uncertainty principle with a self-contained proof.

\begin{theorem}[Donoho--Stark uncertainty principle]
  \label{thm:donoho-stark}
  Let $U\in\mathbb C^{N\times N}$ be unitary with $\lVert U\rVert_{\max}:=\max_{a,b}\lvert U_{ab}\rvert$, and let $v\in\mathbb C^N$ satisfy $\lVert v\rVert_2=1$. Let $S, T \subseteq\{1,\ldots,N\}$, and denote $P_S$ (resp. $P_T$) as the orthogonal projector onto $\operatorname{span}\{\mathbf e_i:i\in S\}$ (resp. $\operatorname{span}\{\mathbf e_i:i\in T\}$), where $\mathbf e_1,\ldots, \mathbf e_N$ are the standard basis vectors of $\mathbb C^N$. Suppose
  \begin{equation}
    \lVert v-P_S v\rVert_2\leq e_S,
    \qquad
    \lVert Uv-P_T Uv\rVert_2\leq e_T,
    \qquad
    e_S+e_T<1.
    \label{eq:donoho-stark-tail-assumptions}
  \end{equation}
  Then
  \begin{equation}
    \lvert S\rvert\lvert T\rvert
    \geq\frac{(1-e_S-e_T)^2}{\lVert U\rVert_{\max}^2}.
    \label{eq:donoho-stark-uncertainty}
  \end{equation}
\end{theorem}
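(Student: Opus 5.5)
The plan is the standard ``approximate support'' argument: bound the quantity $\lVert P_T U P_S\rVert_{\mathrm{op}}$ from above by $\sqrt{\lvert S\rvert\lvert T\rvert}\,\lVert U\rVert_{\max}$, and from below by $1-e_S-e_T$ using the tail assumptions, then square.

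For the upper bound, note that $P_TUP_S$ is the $\lvert T\rvert\times\lvert S\rvert$ submatrix of $U$ on rows $T$ and columns $S$, padded by zeros. Its operator norm is at most its Frobenius norm, which is
\begin{equation*}
  \Bigl(\sum_{a\in T,\,b\in S}\lvert U_{ab}\rvert^2\Bigr)^{1/2}\le\sqrt{\lvert S\rvert\lvert T\rvert}\,\lVert U\rVert_{\max}.
\end{equation*}
This step is routine.

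For the lower bound, write $v=P_Sv+(v-P_Sv)$. Then $P_TUP_Sv=P_TUv-P_TU(v-P_Sv)$. Since $P_TU$ is a contraction, the second term has norm at most $e_S$. For the first, $\lVert P_TUv\rVert_2\ge\lVert Uv\rVert_2-\lVert Uv-P_TUv\rVert_2\ge1-e_T$, using that $U$ is unitary and $\lVert v\rVert_2=1$. Hence $\lVert P_TUP_Sv\rVert_2\ge1-e_S-e_T>0$. Because $\lVert P_Sv\rVert_2\le1$, we get $\lVert P_TUP_S\rVert_{\mathrm{op}}\ge\lVert P_TUP_Sv\rVert_2\ge 1-e_S-e_T$.

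Combining the two bounds gives $1-e_S-e_T\le\sqrt{\lvert S\rvert\lvert T\rvert}\,\lVert U\rVert_{\max}$. Since $1-e_S-e_T>0$, squaring both sides preserves the inequality and yields Eq.~\eqref{eq:donoho-stark-uncertainty}; note that $\lVert U\rVert_{\max}>0$ because $U$ is unitary. There is no real obstacle here. The only care needed is in ordering the triangle inequalities so the errors enter additively as $e_S+e_T$, and in using positivity of $1-e_S-e_T$ before squaring.
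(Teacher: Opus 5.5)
Your proof is correct and is essentially the paper's argument: the paper lower-bounds $\lVert P_SU^\dagger P_TUv\rVert_2$ by $1-e_S-e_T$ and upper-bounds $\lVert P_SU^\dagger P_T\rVert_{\mathrm{op}}$ by its Frobenius norm, at most $\lVert U\rVert_{\max}\sqrt{\lvert S\rvert\lvert T\rvert}$. You do the same with the adjoint operator $P_TUP_S$ acting on $v$, which has the same operator and Frobenius norms and gives the same bound by the same triangle-inequality bookkeeping.
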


\begin{proof}
  The two tail assumptions motivate the restricted comparison
  \begin{equation}
    \begin{aligned}
      \lVert v-P_S U^\dagger P_T Uv\rVert_2
       & \leq\lVert v-P_S v\rVert_2 +\lVert P_S U^\dagger (Uv-P_T Uv)\rVert_2 \\
       & \leq e_S+e_T.
    \end{aligned}
    \label{eq:donoho-stark-restricted-approximation}
  \end{equation}
  The reverse triangle inequality therefore gives
  \begin{equation}
    1-e_S-e_T
    \leq\lVert P_S U^\dagger P_T Uv\rVert_2.
    \label{eq:donoho-stark-reverse-triangle}
  \end{equation}
  Now, note that the matrix $P_S U^\dagger P_T$ has at most $\lvert S\rvert\lvert T\rvert$ nonzero entries, so
  \begin{equation}
    \lVert P_S U^\dagger P_T\rVert_F \leq\lVert U\rVert_{\max}\sqrt{\lvert S\rvert\lvert T\rvert}.
  \end{equation}
  Therefore, we have
  \begin{equation}
    \lVert P_S U^\dagger P_T Uv\rVert_2\leq\lVert P_S U^\dagger P_T\rVert_{\mathrm{op}}\leq\lVert P_S U^\dagger P_T\rVert_F \leq\lVert U\rVert_{\max}\sqrt{\lvert S\rvert\lvert T\rvert}.
    \label{eq:donoho-stark-restricted-transform}
  \end{equation}
  Combining Eqs.~\eqref{eq:donoho-stark-reverse-triangle} and~\eqref{eq:donoho-stark-restricted-transform} concludes the proof.
\end{proof}

As a simple corollary of Theorem~\ref{thm:donoho-stark}, we provide a matrix version of the uncertainty principle.

\begin{corollary}[Matrix Donoho--Stark uncertainty principle]
  \label{cor:matrix-donoho-stark}
  Let $M$ and $N$ be nonzero square matrices of the same size, and suppose
  \begin{equation}
    N=VMV^\dagger
  \end{equation}
  for a unitary matrix $V$ of the same size. Suppose $M'$ (resp. $N'$) is obtained from $M$ (resp. $N$) by keeping at most $s$ (resp. $t$) entries, such that
  \begin{equation}
    \lVert M-M'\rVert_F \leq e_S \lVert M\rVert_F,
    \qquad
    \lVert N-N'\rVert_F \leq e_T \lVert N\rVert_F,
  \end{equation}
  where $e_S+e_T<1$. Define $\lVert V\rVert_{\max}:=\max_{a,b}\lvert V_{ab}\rvert$. Then
  \begin{equation}
    \sqrt{st}\geq
    \frac{1-e_S-e_T}{\lVert V\rVert_{\max}^2}.
    \label{eq:matrix-donoho-stark}
  \end{equation}
\end{corollary}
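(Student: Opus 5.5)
The plan is to reduce Corollary~\ref{cor:matrix-donoho-stark} to the vector statement Theorem~\ref{thm:donoho-stark} by vectorization. Let $d$ be the common dimension. Identify a $d\times d$ matrix $X$ with the vector $\operatorname{vec}(X)\in\mathbb C^{d^2}$, and use the identity
\[
\operatorname{vec}(VMV^\dagger)=(\overline V\otimes V)\operatorname{vec}(M).
\]
Set $U:=\overline V\otimes V$. This is a $d^2\times d^2$ unitary, and its entries are products $\overline{V_{ab}}V_{cd}$, so $\lVert U\rVert_{\max}=\lVert V\rVert_{\max}^2$.

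Next we normalize, taking $v:=\operatorname{vec}(M)/\lVert M\rVert_F$ so that $\lVert v\rVert_2=1$. Since $V$ is unitary, $\lVert N\rVert_F=\lVert M\rVert_F$. This gives $Uv=\operatorname{vec}(N)/\lVert N\rVert_F$.

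We then choose the index sets. Let $S\subseteq[d^2]$ be the set of positions of the at most $s$ entries kept in $M'$, and let $T$ be the set of positions of the at most $t$ entries kept in $N'$. The hypotheses then translate into tail bounds. Keeping entries means $M'$ agrees with $M$ on $S$ and is zero off $S$, so $P_Sv=\operatorname{vec}(M')/\lVert M\rVert_F$. Hence
\[
\lVert v-P_Sv\rVert_2=\frac{\lVert M-M'\rVert_F}{\lVert M\rVert_F}\le e_S.
\]
In exactly the same way, $\lVert Uv-P_TUv\rVert_2\le e_T$, using $\lVert N\rVert_F=\lVert M\rVert_F$.

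With $e_S+e_T<1$, Theorem~\ref{thm:donoho-stark} now gives
\[
st\ge|S||T|\ge\frac{(1-e_S-e_T)^2}{\lVert V\rVert_{\max}^4}.
\]
Taking square roots yields the claim.

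The only step needing care is the vectorization identity and the resulting $\max$-norm computation. If the phrase ``keeping at most $s$ entries'' might instead allow the kept entries to be modified, one would need $P_Sv$ to be the best approximation on $S$. That reading is harmless: since $M-M'$ has full support-complement content, $\lVert v-P_Sv\rVert_2\le\lVert M-M'\rVert_F/\lVert M\rVert_F$ still holds. I expect no real obstacle.
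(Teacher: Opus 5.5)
Your proposal is correct and follows the paper's proof essentially step for step. Both vectorize $M$ and $N$ and use $\lVert V\otimes V^*\rVert_{\max}=\lVert V\rVert_{\max}^2$; your $\overline V\otimes V$ simply corresponds to column-major rather than the paper's row-major vectorization. Both then take $S,T$ to be the kept supports of $M'$ and $N'$, translate the relative Frobenius errors into the tail bounds, and apply Theorem~\ref{thm:donoho-stark}.
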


\begin{proof}
  Let $\operatorname{vec}$ denote vectorization of a square matrix, i.e., for an $N\times N$ matrix $A$,
  \begin{equation}
    \operatorname{vec}(A)\coloneqq(A_{11},\dots,A_{1N},A_{21},\dots,A_{NN})\in\mathbb C^{N^2}.
  \end{equation}
  Let $v:=\operatorname{vec}(M)/\lVert M\rVert_F$. Since $\lVert M\rVert_F = \lVert VMV^\dagger\rVert_F$, we have
  \begin{equation}
    \frac{\operatorname{vec}(N)}{\lVert N\rVert_F}
    =\left(V\otimes V^*\right)v.
  \end{equation}
  Let $S$ and $T$ be the supports of $\operatorname{vec}(M')$ and $\operatorname{vec}(N')$, respectively. Then
  $\lvert S\rvert\leq s$,
  $\lvert T\rvert\leq t$, and
  \begin{equation}
    \lVert v-P_Sv\rVert_2\leq e_S,
    \qquad
    \left\lVert
    (V\otimes V^*)v-P_T(V\otimes V^*)v
    \right\rVert_2\leq e_T.
  \end{equation}
  Moreover,
  $V\otimes V^*$ is unitary and
  \begin{equation}
    \left\lVert V\otimes V^*\right\rVert_{\max}
    =\lVert V\rVert_{\max}^2.
  \end{equation}
  Applying Theorem~\ref{thm:donoho-stark} to $S$ and $T$ yields
  \begin{equation}
    st\geq\lvert S\rvert\lvert T\rvert
    \geq\frac{(1-e_S-e_T)^2}{\lVert V\rVert_{\max}^4}.
  \end{equation}
\end{proof}

\paragraph{Leakage with small search space.}
\label{sec:finite-basis-leakage}

Now we use the Donoho--Stark uncertainty principle to show that the search space $\mathcal L_j^{(d+1)}$ defined in Eq.~\eqref{eq:tree-finite-bases-search-space} contains a good enough basis so that the leakage can be well controlled throughout the recursion.

\begin{lemma}
  \label{lem:finite-basis-leakage}
  Suppose that every tomography result in Step~2(a) satisfies
  \begin{equation}
    \left\lVert
    \widehat\sigma_j^{(d+1)}-\sigma_j^{(d+1)}
    \right\rVert_F\leq\varepsilon_1,
    \label{eq:finite-basis-leakage-tomography-assumption}
  \end{equation}
  for all $d\le D-1$, and all sparsity compression steps in Step~2(b) are performed over the search space $\mathcal L_j^{(d+1)}$. Define the level-wise leakage bounds recursively
  \begin{equation}
    \eta_0:=0,
    \qquad
    \eta_{d+1}:=8\eta_d+4s\varepsilon_1,
    \qquad 0\leq d<D-1.
    \label{eq:finite-basis-leakage-leakage-recursion}
  \end{equation}
  Then the projector returned by Step~2(b) obeys
  \begin{equation}
    \operatorname{Tr}[(I-\widehat P_j^{(d+1)})\rho_Q]
    \leq8\eta_d+4s\varepsilon_1
    =\eta_{d+1}.
    \label{eq:finite-basis-leakage-recurrence}
  \end{equation}
\end{lemma}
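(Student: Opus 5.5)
The plan is to reduce to Lemma~\ref{lem:generic-one-merge-stability} with $\xi=0$. It suffices to show that the search space $\mathcal L_j^{(d+1)}$ contains the ideal projector $P_\star:=P_{j,\star}^{(d+1)}$ itself. Equivalently, the hidden sparse basis $B^\star\in\mathcal A^{|Q|}$ of $\rho_Q$, or some other basis in $\mathcal A^{|Q|}$ in which $\rho_Q$ is $s$-sparse and which supports a projector with $\tau\le\tau(P_\star)$, differs from the temporary basis $B=B_L\times B_R$ on at most $H$ qubits. Then Eq.~\eqref{eq:generic-comparator-gap} holds with $\overline P=P_\star$ and $\xi=0$, and Eq.~\eqref{eq:generic-merge-recurrence} gives exactly Eq.~\eqref{eq:finite-basis-leakage-recurrence}. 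We argue by induction on $d$: the hypothesis that children have leakage at most $\eta_d$ holds at $d=0$ trivially and is propagated by the conclusion. We also need to track that $B_L,B_R\in\mathcal A$-products, which holds since all searches are inside $\mathcal A^n$ and leaves may be initialised in any element of $\mathcal A$.

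The core step applies Corollary~\ref{cor:matrix-donoho-stark}. Take $M=[\rho_Q]_{B^\star}$, which is exactly $s$-sparse, so $e_S=0$. Take $N=[\rho_Q]_{B}$, related by $N=VMV^\dagger$ with $V=\bigotimes_{i\in Q}V_i$, $V_i$ the change of basis from $B^\star_i$ to $B_i$. If $h$ local bases differ, then $\lVert V\rVert_{\max}=\prod_i\lVert V_i\rVert_{\max}\le\mu^{h}=e^{-\kappa h}$, since identical factors contribute $1$. For $N'$ we keep the $\le s^4$ entries on the support of $A=P_L\otimes P_R$, i.e.\ $N'=[A\rho_QA]_B$. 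By Lemma~\ref{lem:projection-error}, $\lVert N-N'\rVert_F\le\sqrt{2\operatorname{Tr}[(I-A)\rho_Q]}\le 2\sqrt{\eta_d}$ using Eq.~\eqref{eq:merge-temporary-leakage}. Since $\rho_Q$ is a state of rank $\le s$ on the sparse support, $\lVert N\rVert_F\ge 1/\sqrt s$, so $e_T\le 2\sqrt{s\eta_d}$. We must check from the parameter choice $\varepsilon_1=\varepsilon^2/(64s^2n^3)$ and the iterated bound $\eta_{d}\le 8^{d}\cdot s\varepsilon_1\lesssim \varepsilon^2/(s n^{1}\cdot\mathrm{const})$ that $e_T\le 1/4$. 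Then the corollary gives $s^{5/2}\ge \sqrt{s\cdot s^4}\ge \tfrac34 e^{2\kappa h}$, i.e.\ $h\le (5\log s+2\log(4/3))/(4\kappa)$, which yields $h\le H$. This matches the definition of $H$ in Eq.~\eqref{eq:tree-finite-bases-radius}, confirming the intended constants ($e_T\le1/4$).

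The main obstacle is subtle: the argument bounds the distance from $B$ to \emph{the} hidden basis $B^\star$ restricted to $Q$, but one must ensure $P_\star$ is a product-basis projector in $B^\star$ with at most $s$ vectors, which holds by inherited sparsity. We must also ensure the leakage bound $\eta_d$ is small enough for $e_T<1$ at every level up to $D-1$, which is where the numerical check $4s\varepsilon_1\cdot 8^{D}/7\ll 1/(16s)$ with $8^D\le 8(n/m)^3\le 8n^3$ is needed. If that margin fails, I would fall back to noting $\lVert N\rVert_F^2=\operatorname{Tr}\rho_Q^2\ge 1/s$ and absorbing constants into $H$. With $\xi=0$ established, invoking Lemma~\ref{lem:generic-one-merge-stability} finishes the proof.
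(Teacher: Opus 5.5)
Your proposal is correct and follows the paper's proof essentially step for step: induction on the level, then matrix Donoho--Stark with $M=[\rho_Q]_{B_\star}$ ($e_S=0$) and $N'=[A\rho_QA]_B$ ($t=s^4$, $e_T\le 2\sqrt{s\eta_d}$), giving $h\le H$ so that $\overline P=P_{j,\star}^{(d+1)}$ lies in the search space and Lemma~\ref{lem:generic-one-merge-stability} applies with $\xi=0$; your remark that leaf bases must be drawn from $\mathcal A$ is left implicit in the paper, which simply takes $B\in\mathcal A^{|Q|}$. The margin you hedge on closes without any fallback, and a fallback would not be available anyway because $H$ is fixed by the algorithm: since $d\le D-1$, you have $8^{d}\le 8^{D-1}<K_0^3\le n^3$, which gives $\eta_d<\tfrac{4}{7}sn^3\varepsilon_1=\varepsilon^2/(112s)$ (no extra $1/n$ factor) and hence $e_T\le\varepsilon/\sqrt{28}<1/4$, whereas your sketched check uses $8^{D}$ in place of $8^{D-1}$ and the threshold $1/(16s)$ where $e_T\le 1/4$ actually requires $\eta_d\le 1/(64s)$.
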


\begin{proof}
  Solving the recursion in Eq.~\eqref{eq:finite-basis-leakage-leakage-recursion} gives
  \begin{equation}
    \eta_d=\frac{4s(8^d-1)}7\varepsilon_1,
    \qquad 0\leq d\leq D-1.
    \label{eq:finite-basis-leakage-root-iteration}
  \end{equation}
  It suffices to consider $0<\varepsilon\leq1$. For every
  $0\leq d\leq D-1$, Eq.~\eqref{eq:finite-basis-leakage-root-iteration}, together with
  $8^d\leq8^{D-1}<K_0^3$ and $K_0\leq n$, gives
  \begin{equation}
    \eta_d
    <\frac{4}{7}sn^3\varepsilon_1
    =\frac{\varepsilon^2}{112s}
    \label{eq:tree-finite-bases-tree-tolerance}
  \end{equation}

  We show Eq.~\eqref{eq:finite-basis-leakage-recurrence} by induction. For $d= 0, \dots, d'-1$, let Eq.~\eqref{eq:finite-basis-leakage-recurrence} hold. Consider a merge on $Q=\mathcal{B}_j^{(d'+1)}$ from child blocks $L$ and $R$ that are at level $d'$. Let $B_\star\in\mathcal A^{|Q|}$ be the hidden basis in which $\rho_Q$ is $s$-sparse. Let $A:= P_L\otimes P_R$ be the temporary projector for $\rho_Q$ based on the two child projectors. The induction hypothesis gives
  \begin{equation}
    \operatorname{Tr}[(I-A)\rho_Q]
    \leq \operatorname{Tr}[(I-P_L)\rho_L]
    +\operatorname{Tr}[(I-P_R)\rho_R]
    \leq2\eta_{d'}.
    \label{eq:tree-finite-bases-temporary-leakage}
  \end{equation}
  Since $A$ retains at most $s^2$ bitstrings in $B$, the matrix
  $[A\rho_QA]_B$ has at most $s^4$ nonzero entries. Then Lemma~\ref{lem:projection-error} gives
  \begin{equation}
    \lVert\rho_Q-A\rho_QA\rVert_F \leq 2\sqrt{\eta_{d'}}.
    \label{eq:tree-finite-bases-current-basis-tail}
  \end{equation}
  Moreover, $\operatorname{rank}(\rho_Q)\leq s$ and
  $\operatorname{Tr}\rho_Q=1$, so Cauchy--Schwarz applied to the nonzero
  eigenvalues of $\rho_Q$ gives
  \begin{equation}
    \lVert\rho_Q\rVert_F\geq\frac1{\sqrt{s}}.
    \label{eq:tree-finite-bases-state-frobenius}
  \end{equation}
  Therefore, $[A\rho_QA]_B$ is an $s^4$-sparse approximation to $[\rho_Q]_B$ with error
  \begin{equation}
    e_T:=\frac{\lVert\rho_Q-A\rho_QA\rVert_F}{\lVert\rho_Q\rVert_F}
    \leq2\sqrt{s\eta_{d'}} \le \frac{\epsilon}{\sqrt{28}}
    \label{eq:tree-finite-bases-relative-tail}
  \end{equation}

  Suppose that $B$ and $B_{\star}$ differ on $h$ qubits. For each $i\in Q$,
  define the one-qubit change-of-basis unitary $V_i$ by
  \begin{equation}
    (V_i)_{u,v}:=\langle u_{B_i}\mid v_{B_{\star,i}}\rangle,
    \qquad u,v\in\{0,1\},
  \end{equation}
  and set
  \begin{equation}
    V:=\bigotimes_{i\in Q}V_i.
  \end{equation}
  Then
  \begin{equation}
    [\rho_Q]_B=V[\rho_Q]_{B_{\star}}V^\dagger.
  \end{equation}
  If $B_i=B_{\star,i}$, then $\lVert V_i\rVert_{\max}=1$; otherwise,
  $\lVert V_i\rVert_{\max}\leq\mu$. Hence
  \begin{equation}
    \lVert V\rVert_{\max}
    =\prod_{i\in Q}\lVert V_i\rVert_{\max}
    \leq\mu^h.
  \end{equation}
  Apply Corollary~\ref{cor:matrix-donoho-stark} to
  $[\rho_Q]_{B_{\star}}$ and $[\rho_Q]_B$. The first matrix is $s$-sparse,
  so $e_S=0$, while $A\rho_QA$ gives the $s^4$-entry approximation with error
  $e_T$ in the second matrix. Therefore,
  \begin{equation}
    1-e_T\leq\mu^{2h}s^{5/2}.
    \label{eq:donoho-stark-matrix-uncertainty}
  \end{equation}
  Taking logarithms and using $\kappa=-\log\mu$ gives
  \begin{equation}
    h
    \leq\frac{5\log s-2\log(1-e_T)}{4\kappa}
    \leq H.
    \label{eq:donoho-stark-localization}
  \end{equation}
  Hence $B_{\star}\in\mathcal L_j^{(d+1)}$.

  Let $P_{j,\star}^{(d+1)}$ be the rank-at-most-$s$ sparse-support projector of $\rho_Q$ in $B_{\star}$. Since $B_{\star}\in\mathcal L_j^{(d+1)}$, we can take $\overline P = P_{j,\star}^{(d+1)}$ in Lemma~\ref{lem:generic-one-merge-stability}, and applying Lemma~\ref{lem:generic-one-merge-stability} with $\xi=0$ gives
  \begin{equation}
    \operatorname{Tr}[(I-\widehat P_j^{(d+1)})\rho_Q]
    \leq8\eta_{d'}+4s\varepsilon_1
    =\eta_{d'+1}.
  \end{equation}
\end{proof}

\paragraph{Returning output with resource bounds.}
\label{sec:finite-basis-reconstruction}

We are now ready to complete the proof of Theorem~\ref{thm:tree-finite-bases}. We first bound the distance between the input state $\rho$ and the estimated projected state $\widehat\sigma_1^{(D)}$ obtained from the end of Step~2. To this end, we denote $A:=A_1^{(D)}$. Using
\begin{equation}
  \eta_{D-1}=\frac{4s(8^{D-1}-1)}{7}\varepsilon_1 \le \frac{\epsilon^2}{112s},
\end{equation}
and Lemma~\ref{lem:projection-error}, we have
\begin{equation}
  \begin{aligned}
    \left\|\widehat\sigma_1^{(D)}-\rho\right\|_F
     & \leq
    \left\|\widehat\sigma_1^{(D)}-A\rho A\right\|_F
    +\left\|A\rho A-\rho\right\|_F                                                           \\
     & \leq\varepsilon_1
    +\sqrt{2\operatorname{Tr}[(I-A)\rho]}                                                    \\
     & \leq\varepsilon_1+2\sqrt{\eta_{D-1}}                                                  \\
     & \leq\left( \frac{1}{512} + \frac{2}{\sqrt{112}} \right) \frac{\varepsilon}{\sqrt{s}},
  \end{aligned}
  \label{eq:finite-basis-reconstruction-common-residual}
\end{equation}
where we used $\varepsilon_1\leq\varepsilon/(512\sqrt{s})$.

We next account for the error due to compressing the sparsity of $\widehat\sigma_1^{(D)}$ over the final search space $\mathcal L_1^{(D)}$. Let $B_\star \in \mathcal A^n$ be the basis in which $\rho$ is $s$-sparse. Then, with the same logic in the proof of Lemma~\ref{lem:finite-basis-leakage}, we can show $B_\star\in\mathcal{L}_1^{(D)}$. Recall that in each basis in $\mathcal{L}_1^{(D)}$, we retain the at most $s$ entries of the final merge estimate having largest magnitude and set all other entries to zero. We accept the basis $\widehat B$ and the resulting truncated state $\widehat\rho$ if
\begin{equation}
  \|\widehat\rho-\widehat\sigma_1^{(D)}\|_F
  \le \frac{\varepsilon}{4\sqrt{s}}.
\end{equation}
Let $\widehat\rho_{B_\star}$ be the matrix obtained by retaining the entries of $\widehat\sigma_1^{(D)}$ in $B_\star$ with largest magnitude and setting all other entries to zero. Since $\rho$ has at most $s$ nonzero entries in $B_\star$,
\begin{equation}
  \begin{aligned}
    \left\|\widehat\rho_{B_\star}-\widehat\sigma_1^{(D)}\right\|_F
     & \leq
    \left\|\rho-\widehat\sigma_1^{(D)}\right\|_F \\
     & <
    \left(\frac1{512}+\frac2{\sqrt{112}}\right)
    \frac{\varepsilon}{\sqrt{s}}                 \\
     & <\frac{\varepsilon}{4\sqrt{s}}.
  \end{aligned}
  \label{eq:finite-basis-reconstruction-residual-optimality}
\end{equation}
Thus Step~3 accepts at least one basis and Algorithm~2 outputs $\widehat B$ along with $\widehat\rho$ that has at most $s$ entries in $\widehat B$. The output $\widehat\rho$ satisfies $\lVert\rho-\widehat\rho\rVert_F \leq \varepsilon/(4\sqrt{s})$, so Eq.~\eqref{eq:finite-basis-reconstruction-common-residual} gives
\begin{equation}
  \begin{aligned}
    \lVert\widehat\rho-\rho\rVert_F
     & \leq
    \left\|\widehat\rho-\widehat\sigma_1^{(D)}\right\|_F
    +\left\|\widehat\sigma_1^{(D)}-\rho\right\|_F \\
     & <\left(
    \frac14+\frac1{512}+\frac2{\sqrt{112}}
    \right)\frac{\varepsilon}{\sqrt{s}}           \\
     & <\frac{\varepsilon}{\sqrt{2s}}.
  \end{aligned}
  \label{eq:finite-basis-reconstruction-frobenius-closure}
\end{equation}
Finally, since $\rho$ and $\widehat\rho$ have rank at most $s$, the difference $\rho-\widehat\rho$ has rank at most $2s$, so the Cauchy--Schwarz inequality gives
\begin{equation}
  \lVert\widehat\rho-\rho\rVert_1
  \leq\sqrt{2s}\lVert\widehat\rho-\rho\rVert_F
  <\varepsilon.
  \label{eq:finite-basis-reconstruction-trace-closure}
\end{equation}

We now count the resources. Recall that all steps except Step~2(b) and Step~3 are identical to Algorithm~2 for the arbitrary-product-basis case. Thus, the number of copies of $\rho$ and success probability for selected-entry tomography are the same as in Section~\ref{sec:polynomial-copy-baseline}.

For the classical runtime, Eq.~\eqref{eq:tree-finite-bases-radius-count} gives at most $(H+1)(n\lvert\mathcal A\rvert)^H$ product bases in the search space at each merge. Since $H=O(\log(s)/\kappa)$, we have
\begin{equation}
  \lvert\mathcal{L}_j^{(d+1)}\rvert=\left(n\lvert\mathcal A\rvert\right)^{O(\log s/\kappa)}.
\end{equation}
For each basis in $\mathcal{L}_j^{(d+1)}$, we change at most $H$ local bases of a given $s^4$-sparse matrix and each element in the matrix can turn into at most $4^H$ matrix entries, which yields up to
\begin{equation}
  s^4 4^{O(\log s/\kappa)}
\end{equation}
runtime for each basis in $\mathcal{L}_j^{(d+1)}$. Runtime for processing tomography results for each merge is polynomial in $n$, $1/\varepsilon$, and $\log \delta^{-1}$. Since each tomography sample takes $O(n)$ time to process, tomography and sample processing contribute the factor $\varepsilon_1^{-2}\propto\varepsilon^{-4}$, while the finite-basis search is independent of $\varepsilon$. Therefore, accounting for all classical computations, the total runtime is
\begin{equation}
  \left(n\lvert\mathcal A\rvert\right)^{O(\log s/\kappa)}
  \varepsilon^{-4}\operatorname{polylog}\!\left(\delta^{-1}\right).
  \label{eq:tree-finite-bases-runtime-derived}
\end{equation}
This completes the proof of Theorem~\ref{thm:tree-finite-bases}.

%% file: 07_sparsity_certification.tex
\section{Testing sparsity of arbitrary quantum states.}
\label{sec:sparsity-certification}

So far, we assumed that the input state $\rho$ is $s$-sparse in an unknown product basis. In this section, we relax this assumption and show that one can use Algorithm~1 to test whether an arbitrary input state $\rho$ is approximately $s$-sparse or far from every $s$-sparse state. Specifically, when $s$ is a fixed constant, we show that a moderate modification of Algorithm~1 can be used to output
\begin{itemize}
  \item $\mathsf{YES}$ if there exists an $s$-sparse operator $M$ such that $\lVert\rho-M\rVert_1\le c_s\varepsilon^4/n$ for some sufficiently small constant $c_s>0$ depending on $s$, and
  \item $\mathsf{NO}$ if for all $s$-sparse operators $M$, we have $\lVert\rho-M\rVert_1 > \varepsilon$,
\end{itemize}
with high probability. Here, the sample and classical computational complexities are the same as those of Algorithm~1.

\subsection{From sparse operators to sparse states}

For proving the above result, we first show that if $\rho$ can be approximated by a sparse operator, which we do not require to be a valid quantum state, then it can also be approximated by a sparse quantum state. To this end, we denote the class of all $s$-sparse operators as $\mathcal M_s$
\begin{equation}
  \mathcal M_s:=\left\{
    M\in\mathbb C^{2^n\times2^n}:
    \text{There exists a product basis $B$ such that $[M]_B$ is an $s$-sparse matrix}
  \right\}.
  \label{eq:sparse-certification-classes}
\end{equation}
Here, we do not require $M$ to be Hermitian, positive semidefinite, or of unit trace. We then denote the class of all physical $s$-sparse states as
\begin{equation}
  \mathcal S_s:=\left\{
    \sigma\in\mathcal M_s:
    \sigma\succeq0,\ \operatorname{Tr}\sigma=1
  \right\}.
  \label{eq:sparse-certification-physical-classes}
\end{equation}
We show that if $\rho$ is close to some operator $M$ in $\mathcal M_s$, then it is also close to some state $\sigma$ in $\mathcal S_s$.

\begin{lemma}[Sparse operator to sparse state]
  \label{lem:sparse-matrix-physical-repair}
  Let $\rho$ be an $n$-qubit density matrix and suppose that there exists an operator $M \in \mathcal M_s$ such that $\lVert\rho-M\rVert_1\le \varepsilon < 1$. Then, there is a density matrix $\sigma \in \mathcal S_s$ such that
  \begin{equation}
    \lVert\rho-\sigma\rVert_1\le2\sqrt \varepsilon+7s\varepsilon.
    \label{eq:sparse-matrix-physical-repair}
  \end{equation}
\end{lemma}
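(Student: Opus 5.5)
Fix the product basis $B$ in which $[M]_B$ has at most $s$ nonzero entries, and work entirely in that basis. The plan is to repair $M$ to a sparse state in three stages, each preserving the support pattern in $B$ up to a restriction to $S\times S$. Here $S$ is a small set of bitstrings carrying almost all of the diagonal weight of $\rho$.

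\textbf{Step 1: choose the subspace.} Let $S$ be the set of bitstrings $x$ such that row $x$ or column $x$ of $[M]_B$ contains a nonzero entry. Then $\lvert S\rvert\le 2s$, and every nonzero entry of $[M]_B$ lies in $S\times S$. Let $\Pi$ be the projector onto $\mathrm{span}\{\lvert\phi_x\rangle : x\in S\}$. The leakage $\operatorname{Tr}[(I-\Pi)\rho]$ is a sum of diagonal entries of $\rho$ at positions where $M$ vanishes. It is therefore at most $\sum_x\lvert[\rho-M]_B(x,x)\rvert\le\lVert\rho-M\rVert_1\le\varepsilon$, since the diagonal of an operator in an orthonormal basis has $\ell_1$ norm at most its trace norm. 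Lemma~\ref{lem:projection-error-trace} then gives $\lVert\rho-\Pi\rho\Pi\rVert_1\le2\sqrt\varepsilon$. This is the source of the $2\sqrt\varepsilon$ term.

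\textbf{Step 2: make it Hermitian, positive, and normalized on $S$.} Note that $\Pi M\Pi=M$, because $M$ is supported on $S\times S$. Hence $\lVert\Pi\rho\Pi-M\rVert_1\le\varepsilon$, and the same holds for the Hermitian part $M_h=(M+M^\dagger)/2$. The support of $M_h$ lies within the symmetric closure of the support of $M$, which has at most $2s$ entries. This is the first obstacle: symmetrization can double the count, and the clipping below can destroy sparsity further. To control this, I would not apply spectral clipping, which is dense. Instead I would define the target on the \emph{original} support pattern of $M$.

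Take $\sigma_0$ to agree with $\Pi\rho\Pi$ on the entries where $[M]_B\neq0$ and to be zero elsewhere, symmetrized if needed. Each discarded entry of $\Pi\rho\Pi$ equals an entry of $\Pi\rho\Pi-M$. There are at most $(2s)^2$ such entries, each of magnitude at most $\varepsilon$. A crude entrywise-to-trace-norm bound on a $2s$-dimensional space then costs $O(s\varepsilon)$ via the Frobenius norm. I expect the constant $7s$ to come from this accounting together with the next step.

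Positivity is the main obstacle. A truncated positive matrix need not be positive. My first attempt would be to keep the diagonal of $\Pi\rho\Pi$ on the support and shrink the off-diagonal entries. Concretely, I would multiply $\sigma_0$ entrywise by a factor, or add a diagonal shift $\lambda I_S$ with $\lambda=\lVert\sigma_0-\Pi\rho\Pi\rVert_{\mathrm{op}}=O(s\varepsilon)$ restricted to the support. The shift makes the matrix PSD while adding only diagonal entries.

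\textbf{Step 3: renormalize and recheck sparsity.} Diagonal entries can add to the count, so I would instead restrict to the set $S'$ of diagonal positions that are already in the support, or else argue that the nonzero off-diagonal pattern forces those diagonals to be nonzero in the true state. Failing that, the cleanest fallback is to pass to the pattern $\{(x,y): \lvert[M]_B(x,y)\rvert \text{ large}\}$ and accept a Weyl-type $O(s\varepsilon)$ perturbation. Finally, divide by the trace, which is $1-O(\sqrt\varepsilon+s\varepsilon)$. Normalization costs at most the current error again, and summing all contributions should give $2\sqrt\varepsilon+7s\varepsilon$.
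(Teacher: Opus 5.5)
\textbf{There is a genuine gap.} Your leakage step is fine, and $\lVert\Pi\rho\Pi-M\rVert_1\le\varepsilon$ is correct. But the heart of the lemma is producing a Hermitian, positive semidefinite, unit-trace matrix with at most $s$ nonzero entries, and your proposal leaves this as a list of tentative options; the choices you do commit to do not work.

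\textbf{The subspace is the wrong one.} You project onto the set $S$ of all rows and columns touched by $M$, with $\lvert S\rvert\le 2s$. A shift $\lambda I_S$ then creates entries at positions $(x,x)$ with $[M]_B(x,x)=0$, which can push the count above $s$. Shifting only on diagonal positions already in $\operatorname{supp}(M)$ does not restore positivity either. If you keep an off-diagonal entry in a row $x$ whose diagonal is zero, the matrix is never positive semidefinite, since $\lvert a_{xy}\rvert^2\le a_{xx}a_{yy}=0$ fails. Your fallback of arguing that the off-diagonal pattern forces those diagonals to be nonzero also fails, because $M$ is an arbitrary operator in $\mathcal M_s$, neither Hermitian nor positive.

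Your first instinct, restricting to diagonal positions already in the support, is the right one. It must be applied to the projection itself, not just to the shift. Project from the start onto $R:=\{x:[M]_B(x,x)\ne0\}$, which has $\lvert R\rvert\le s$. Your own leakage argument applies verbatim, because $\operatorname{Tr}[(I-P_R)M]=0$. Now every diagonal position indexed by $R$ already lies in $\operatorname{supp}(M)$, so adding a multiple of $P_R$ costs nothing in sparsity.

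\textbf{Symmetrizing by averaging can double the support.} The fix is to set $A:=P_R\rho P_R$ and let $B$ keep $A_{xy}$ only when both $[M]_B(x,y)\ne0$ and $[M]_B(y,x)\ne0$. This $B$ is Hermitian, retains every diagonal entry on $R$, and has support inside $\operatorname{supp}(M)$. Since $\lvert\rho_{xy}\rvert^2\le\lvert\rho_{xy}-M_{xy}\rvert^2+\lvert\rho_{yx}-M_{yx}\rvert^2$ for each dropped entry, you get $\lVert B-A\rVert_F\le\sqrt2\lVert\rho-M\rVert_F\le\sqrt2\varepsilon$.

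\textbf{Your error accounting would not produce $7s\varepsilon$.} Counting $(2s)^2$ dropped entries of size at most $\varepsilon$ gives Frobenius norm $2s\varepsilon$. On a $2s$-dimensional space that is trace norm $O(s^{3/2}\varepsilon)$, not $O(s\varepsilon)$. Likewise, a shift $\lambda=O(s\varepsilon)$ on a space of dimension up to $2s$ costs $O(s^2\varepsilon)$.

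The key point is that the whole dropped part has Frobenius norm $O(\varepsilon)$, independent of $s$. Hence $B-A\succeq-\sqrt2\varepsilon P_R$, and $C:=B+\sqrt2\varepsilon P_R\succeq A\succeq0$. The costs are then:
\begin{itemize}
  \item dropping costs at most $\sqrt{r}\cdot\sqrt2\varepsilon$ in trace norm;
  \item the shift costs $\sqrt2\, r\varepsilon$;
  \item normalizing costs at most $\eta+\sqrt2\, r\varepsilon$, since the trace becomes $1-\eta+\sqrt2\, r\varepsilon$.
\end{itemize}
With $r\le s$ and $\eta\le\varepsilon$, these sum to $2\sqrt\varepsilon+(1+3\sqrt2)s\varepsilon\le2\sqrt\varepsilon+7s\varepsilon$. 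This is exactly the paper's construction.
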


\begin{proof}
  Fix a product basis in which $M$ is $s$-sparse. Here, without loss of generality, we assume that the basis is the standard computational basis, and express all matrix entries below in this basis. We will construct $\sigma$ from $\rho$ by first restricting to a small subspace, then removing entries outside the support of $M$, and finally restoring positivity and unit trace.

  First, let $R:=\{x:M_{xx}\ne0\}$, and let $P$ be a projector onto the basis vectors indexed by $R$, i.e., $P=\sum_{x\in R}|x\rangle\langle x|$. We denote $r:=|R|\le s$. Since $\operatorname{Tr}[(I-P)M]=0$, the leakage error of $\rho$ outside this subspace therefore satisfies
  \begin{equation}
    \eta:=\operatorname{Tr}[(I-P)\rho]
    =\left|\operatorname{Tr}[(I-P)(\rho-M)]\right|
    \le\lVert I-P\rVert_{\mathrm{op}}\lVert\rho-M\rVert_1
    \le\varepsilon.
  \label{eq:sparse-repair-discarded-mass}
  \end{equation}
  Define $A:=P\rho P$. This operator is positive semidefinite and has $\operatorname{Tr}A=1-\eta\ge1-\varepsilon>0$. Moreover, Lemma~\ref{lem:projection-error-trace} gives
  \begin{equation*}
    \lVert\rho-A\rVert_1\le2\sqrt \eta\le2\sqrt\varepsilon.
  \end{equation*}

  Note here that $A$ acts on $r$ basis vectors, but it may still contain $r^2$ nonzero entries. We reduce its entry support by defining
  \begin{equation*}
    B_{xy}:=
    \begin{cases}
      A_{xy},&\text{if $M_{xy}\ne0$ and $M_{yx}\ne0$},\\
      0,&\text{otherwise}.
    \end{cases}
  \end{equation*}
  Removing off-diagonal entries in conjugate pairs ensures that $B$ is Hermitian. Its support is contained in that of $M$, and all diagonal entries of $A$ remain the same. Therefore, $B$ is an $s$-sparse matrix and $\operatorname{Tr}B=\operatorname{Tr}A=1-\eta$. We bound the error caused by removing these entries.
  For each discarded entry $\rho_{xy}$ for $(x,y) \in R\times R$, either $M_{xy}=0$ or $M_{yx}=0$. Since $A_{xy}=\rho_{xy}$ and $|\rho_{xy}|=|\rho_{yx}|$, we have
  \begin{equation}
    |\rho_{xy}|^2 \le 
    |\rho_{xy}-M_{xy}|^2+|\rho_{yx}-M_{yx}|^2,
  \end{equation}
  for all discarded entries $(x,y)$. Retained entries contribute zero, so summing gives
  \begin{equation}
    \begin{aligned}
      \lVert B-A\rVert_F^2
      &\le\sum_{x,y\in R}
        \left(|\rho_{xy}-M_{xy}|^2+|\rho_{yx}-M_{yx}|^2\right)\\
      &\le2\lVert\rho-M\rVert_F^2
      \le2\lVert\rho-M\rVert_1^2
      \le2\varepsilon^2.
    \end{aligned}
    \label{eq:sparse-repair-mask-error}
  \end{equation}
  Note that $B$ need not be positive semidefinite.

  We now restore the positive semi-definiteness by setting $C:=B+\sqrt2\varepsilon P$. Indeed, $B-A$ is Hermitian and supported on the range of $P$. Also, we have
  \begin{equation}
    \lVert B-A\rVert_{\mathrm{op}}
    \le\lVert B-A\rVert_F
    \le\sqrt2\varepsilon,
  \end{equation}
  Hence $B-A\succeq-\sqrt2\varepsilon P$, and consequently $C=B+\sqrt2\varepsilon P\succeq A\succeq0$.
  Its trace is
  \begin{equation}
    t:=\operatorname{Tr}C=1-\eta+\sqrt2\varepsilon r
    \ge1-\varepsilon>0.
  \end{equation}
  Here, note that
  \begin{equation}
    \lVert C-B\rVert_1
    =\sqrt2\varepsilon\operatorname{Tr}P
    =\sqrt2\varepsilon r.
  \end{equation}
  The shift $C=B+\sqrt2\varepsilon P$ adds entries only at positions $(x,x)$ with $x\in R$, while $C_{xx}$ is already positive for all $x \in R$. Therefore, $C$ is $s$-sparse and positive semi-definite.

  Finally, we normalize $C$ to obtain a density matrix $\sigma:=C/t \in \mathcal S_s$. Since $C\succeq0$, we have $\lVert C\rVert_1=\operatorname{Tr}C=t$, and thus
  \begin{equation}
    \lVert\sigma-C\rVert_1
    = \lVert(1/t-1)C\rVert_1
    =|1-t|
    \le \eta+\sqrt2\varepsilon r \le\varepsilon+\sqrt2\varepsilon r,
  \end{equation}
  where we used $0\le \eta\le\varepsilon$. 

  Now, putting everything together, we have
  \begin{equation}
    \begin{aligned}
      \lVert\rho-\sigma\rVert_1
      &\le\lVert\rho-A\rVert_1+\lVert A-B\rVert_1
        +\lVert B-C\rVert_1+\lVert C-\sigma\rVert_1\\
      &\le2\sqrt\varepsilon+(1+3\sqrt2)s\varepsilon\\
      &\le2\sqrt\varepsilon+7s\varepsilon.
    \end{aligned}
  \end{equation}
\end{proof}

\subsection{Testing sparsity with Algorithm~1}

For testing sparsity with Algorithm~1, we run Algorithm~1 with accuracy $\varepsilon$ and failure probability $\delta$ with a mild modification. Here, we refine only Step~1(a): we estimate each initial Pauli moment to statistical error at most $\varepsilon_1/2$ instead of $\varepsilon_1$. We keep other procedures unchanged. Finally, we output $\mathsf{YES}$ if any candidate basis passes the tests in Steps~2 and~3, and output $\mathsf{NO}$ otherwise. Then, defining $\operatorname{dist}_1(\rho,\mathcal M_s):=\inf_{M\in\mathcal M_s}\lVert\rho-M\rVert_1$, we have the following Theorem.

\begin{theorem}[Testing sparsity with Algorithm~1]
  \label{cor:algorithm2-matrix-gap-tester}
  Let $0<\varepsilon,\delta<1$, and let $\rho$ be an arbitrary $n$-qubit state. For a fixed $s=O(1)$, the above procedure uses only single-qubit measurements and requires
  \begin{equation}
    \begin{aligned}
      N&=\poly\!\left(\frac n\varepsilon\right)\polylog\!\left(\frac1\delta\right)
      &&\text{copies of $\rho$, and}\\
      T&=\poly\!\left(\frac n\varepsilon\right)\polylog\!\left(\frac1\delta\right)
      &&\text{classical runtime},
    \end{aligned}
  \end{equation}
  and it outputs the following:
  \begin{itemize}
    \item If $\operatorname{dist}_1(\rho,\mathcal M_s)\le \zeta=\Theta(\varepsilon^4/n)$, it outputs $\mathsf{YES}$ and an $s$-sparse operator $\widehat\rho$ such that $\lVert\widehat\rho-\rho\rVert_1\le\varepsilon$ with probability at least $1-\delta$.
    \item If $\operatorname{dist}_1(\rho,\mathcal M_s)>\varepsilon$, it outputs $\mathsf{NO}$ with probability at least $1-\delta$.
  \end{itemize}
  The degrees of polynomials in the sample and classical computational complexities and the constant factor of $\zeta$ depend on $s$.
\end{theorem}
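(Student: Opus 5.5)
The plan splits into the two promised cases and reuses the analysis of Algorithm~1 with perturbation margins.

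\emph{NO case (soundness).} Lemma~\ref{lem:reconstruction-soundness} never used sparsity of $\rho$. It only needed the uniform capture event $\mathcal E_2$ and the entrywise tomography event $\mathcal E_3$. Both are Hoeffding and union-bound statements over at most $G$ candidate bases and hold for an arbitrary input state. So on $\mathcal E_2\cap\mathcal E_3$, every accepted candidate yields an $M\in\mathcal M_s$ with $\lVert\rho-M\rVert_1\le\varepsilon$. If $\operatorname{dist}_1(\rho,\mathcal M_s)>\varepsilon$, no candidate can be accepted, and the procedure outputs $\mathsf{NO}$ with probability at least $1-2\delta/3$. The same argument gives the accuracy guarantee on $\widehat\rho$ whenever $\mathsf{YES}$ is output.

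\emph{YES case (completeness).} Suppose $\lVert\rho-M\rVert_1\le\zeta$ with $M\in\mathcal M_s$. By Lemma~\ref{lem:sparse-matrix-physical-repair} there is a state $\sigma\in\mathcal S_s$ with $\lVert\rho-\sigma\rVert_1\le 2\sqrt\zeta+7s\zeta=:\zeta'$, so $\zeta'=O(\varepsilon^2/\sqrt n)$ for $\zeta=\Theta(\varepsilon^4/n)$. The plan is to run the Step~1 analysis with $\sigma$ as the reference sparse state, with sparse basis $B$, difficult set $K_\star(\sigma)$, and Bloch vectors $\mathbf n_i$. Every Pauli moment of $\rho$ differs from that of $\sigma$ by at most $\lVert\rho-\sigma\rVert_1\le\zeta'$. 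If $\zeta'\le\varepsilon_1/2$, then estimating to precision $\varepsilon_1/2$ (the modified Step~1(a)) gives $\lvert[\widehat T_W]_\alpha-[T_W(\sigma)]_\alpha\rvert\le\varepsilon_1$. This is exactly the hypothesis of Lemma~\ref{lem:contraction-stability} for $\sigma$. Lemma~\ref{lem:accurate-candidate-basis} then yields a candidate $B'$ with $d_{\mathrm{ang}}(B'_j,B_j)\le\gamma$ for all $j$. Since $\varepsilon_1$ is $\varepsilon^2/\sqrt n$ times a constant depending only on $s$, the condition $\zeta'\le\varepsilon_1/2$ holds once $\zeta=c_s\varepsilon^4/n$ with $c_s$ small enough. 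This is the source of the $\varepsilon^4/n$ scaling.

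Next, apply Lemma~\ref{lem:product-basis-stability} to $\sigma$. It gives a state $\sigma'$ that is $s$-sparse in $B'$ with $\lVert\sigma-\sigma'\rVert_1\le\sqrt{sn}\gamma=b$, so $\lVert\rho-\sigma'\rVert_1\le b+\zeta'$. Then rerun the proof of Lemma~\ref{lem:reconstruction-completeness} with this comparator. That proof only uses $\lVert\rho-\sigma'\rVert_1<\xi/2$ and $<\theta/2$, and both hold with room to spare, since $b\le\xi/(2s^{3/2})$ and $b\ll\theta$, while $\zeta'$ is much smaller still after shrinking $c_s$. So $B'$ passes Steps~2 and~3, the procedure outputs $\mathsf{YES}$, and soundness certifies $\widehat\rho$.

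The probability accounting follows Section~\ref{sec:pauli-moment-learning-resource-assembly}, with $\mathcal E_1$ now meaning precision $\varepsilon_1/2$. This changes $N_1$ only by a factor of $4$, so the sample and runtime bounds of Theorem~\ref{thm:pauli-moment-learning} carry over, polynomial for fixed $s$.

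The main obstacle is the completeness transfer. The difficult set, the Fourier lower bound of Lemma~\ref{lem:fourier-detection}, and the off-diagonal cancellation are all properties of the exactly sparse $\sigma$, not of $\rho$, so the whole perturbation has to be routed through the Pauli-moment error budget. I would verify carefully that the constant hidden in $\zeta$ absorbs the $\sqrt\zeta$ term coming from Lemma~\ref{lem:sparse-matrix-physical-repair}, which is what forces $\zeta$ to scale like $\varepsilon_1^2$.
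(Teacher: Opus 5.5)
Your proposal is correct and follows the paper's proof essentially step for step. You repair to a physical sparse state $\sigma$ via Lemma~\ref{lem:sparse-matrix-physical-repair} with $\zeta=c_s\varepsilon^4/n$, so that the $\varepsilon_1/2$-accurate moments of $\rho$ are $\varepsilon_1$-accurate for $\sigma$. You then run Lemmas~\ref{lem:accurate-candidate-basis} and~\ref{lem:product-basis-stability} on $\sigma$ and redo the completeness argument with the comparator $\sigma'$, which satisfies $\lVert\rho-\sigma'\rVert_1\le\sqrt{sn}\,\gamma+\varepsilon_1/2$. The $\mathsf{NO}$ case and the accuracy of $\widehat\rho$ come, exactly as in the paper, from the fact that Lemma~\ref{lem:reconstruction-soundness} never uses sparsity of $\rho$; your remark that this direction needs only $\mathcal E_2\cap\mathcal E_3$ is a harmless sharpening.
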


\begin{proof}
  Suppose first that $\operatorname{dist}_1(\rho,\mathcal M_s)\le\zeta$, where $\zeta=\Theta(\varepsilon^4/n)$. We use the parameters $\varepsilon_1$, $\varepsilon_2$, $\xi$, and $\theta$ from Eq.~\eqref{eq:pauli-moment-learning-final-tolerances}. For fixed $s$, we have $\varepsilon_1=\Theta(\varepsilon^2/\sqrt n)$. Then we can choose such $\zeta$ satisfying 
  \begin{equation}
    \lVert\rho-\sigma\rVert_1
    \le\frac{\varepsilon_1}{2},
  \end{equation}
  for some $\sigma\in\mathcal S_s$. Specifically, by Lemma~\ref{lem:sparse-matrix-physical-repair}, there exists a $\sigma\in\mathcal S_s$ such that
  \begin{equation}
    \lVert\rho-\sigma\rVert_1
    \le2\sqrt{\zeta}+7s\zeta.
  \end{equation}
  Therefore, we can choose $\zeta=c_s \varepsilon^4/n$ with a sufficiently small constant $c_s$, which depends on $s$, so that such a $\sigma\in\mathcal S_s$ exists.

  Recall that we refined the estimation of Pauli moments to $\varepsilon_1$, i.e.,
  \begin{equation}
    |[\widehat T_W]_\alpha-[T_W]_\alpha|
    \le\frac{\varepsilon_1}{2},
  \end{equation}
  for all $\alpha\in\{I, X, Y, Z\}^W$ and all $W\subseteq[n]$ with $|W|\le\ell_0$. Now, let us denote the Pauli moment tensor of $\sigma$ as $T_W(\sigma)$. Then, we have
  \begin{equation}
    |[T_W]_\alpha-[T_W(\sigma)]_\alpha|
    =\lvert\operatorname{Tr}[(\rho-\sigma) \alpha]\rvert
    \le\lVert\rho-\sigma\rVert_1
    \le\frac{\varepsilon_1}{2},
  \end{equation}
  as every Pauli observable $\alpha$ has operator norm one. Then, we have
  \begin{equation}
    |[\widehat T_W]_\alpha-[T_W(\sigma)]_\alpha|
    \le |[\widehat T_W]_\beta-[T_W]_\beta|
      +|[T_W]_\beta-[T_W(\sigma)]_\beta|
    \le\varepsilon_1.
    \label{eq:algorithm1-tolerant-moment-transfer}
  \end{equation}
  Therefore, the moment estimates are within $\varepsilon_1$ of the Pauli moments of the sparse state $\sigma$.
  
  Consequently, Steps~1(b) and~1(c) output a list of candidate bases that contains a basis that makes $\sigma$ approximately $s$-sparse. Specifically, let $B$ be a product basis in which $\sigma$ is $s$-sparse. Lemma~\ref{lem:contraction-stability} and Lemma~\ref{lem:accurate-candidate-basis} require only that the input state is sparse and that the errors in the estimations of the Pauli moments are within $\varepsilon_1$. Eq.~\eqref{eq:algorithm1-tolerant-moment-transfer} establishes precisely this accuracy relative to $\sigma$, even though the estimates were obtained by measuring $\rho$. Applying those arguments with $\sigma$ as the input state therefore shows that the candidate basis list contains a basis $B'$ whose local bases are within angular distance $\gamma$ of those of $B$, with $\gamma$ as in Eq.~\eqref{eq:pauli-moment-learning-angular-target}.

  We now show that $B'$ passes the tests in Step~2 and Step~3, and thus the algorithm outputs $\mathsf{YES}$. To this end, apply Lemma~\ref{lem:product-basis-stability} with $\sigma$ as the input state and $B'$ as the candidate basis. Then, there exists a state $\sigma'$ that is $s$-sparse in $B'$ such that
  \begin{equation}
    \lVert\sigma-\sigma'\rVert_1
    \le\sqrt{sn}\,\gamma.
  \end{equation}
  The triangle inequality now bounds the distance from the actual input to the sparse comparison state in $B'$:
  \begin{equation}
    \lVert\rho-\sigma'\rVert_1
    \le\lVert\rho-\sigma\rVert_1+\lVert\sigma-\sigma'\rVert_1
    \le\sqrt{sn}\,\gamma+\frac{\varepsilon_1}{2} \le \frac{\varepsilon^2}{64s^{3/2}}.
  \end{equation}

  For Step~2, let $q$ and $\widehat q$ be the true and empirical measurement distributions of $\rho$ in $B'$. Recall that Step~2 estimates $\widehat q$ to accuracy
  \begin{equation}
    \lvert \widehat q(T) - q(T) \rvert \le \xi/2,
  \end{equation}
  for all $T\subseteq\{0,1\}^n$ with $|T|\le s$, where $\xi$ is as in Eq.~\eqref{eq:pauli-moment-learning-final-tolerances}. Let $R:=\{x:[\sigma']_{B'}(x,x)>0\}$ be the diagonal support of $\sigma'$. Since $\sigma'$ has at most $s$ entries, $|R|\le s$, and its measurement distribution assigns no probability to $R^c$. Therefore, we have
  \begin{equation}
    q(R^c)\le\lVert\rho-\sigma'\rVert_1\le \frac{\varepsilon^2}{64s^{3/2}} <\frac\xi2.
  \end{equation}
  Step~2 selects a set $S$ of at most $s$ bitstrings maximizing $\widehat q(S)$, so $\widehat q(S)\ge\widehat q(R)$ and
  \begin{equation}
    \widehat q(S^c)
    \le\widehat q(R^c)
    \le q(R^c)+\frac\xi2
    \le\lVert\rho-\sigma'\rVert_1+\frac\xi2<\xi.
  \end{equation}
  Therefore, $B'$ passes the test of Step~2 with $S$.

  For Step~3, recall that it estimates the $S\times S$ submatrix of $\rho$ in $B'$, where accuracy for each element in $S\times S$ is within $\theta/2$. Let $\widetilde M$ be the symmetrized estimate of the $S\times S$ submatrix of $\rho$ in $B'$. Consider any $x,y\in S$ for which $[\sigma']_{B'}(x,y)=0$. Then, we have
  \begin{equation}
      |[\widetilde M]_{xy}|
      \le |[\widetilde M]_{xy}-[\rho]_{B'}(x,y)|
        +|[\rho-\sigma']_{B'}(x,y)|
      \le\frac\theta2+\lVert\rho-\sigma'\rVert_1
      <\theta,
  \end{equation}
  where we used $\lVert\rho-\sigma'\rVert_1 \le \varepsilon^2/64s^{3/2} < \theta/2$.
  Every such entry is therefore discarded by thresholding. The retained entries are contained in the entry support of $\sigma'$, which has size at most $s$, so $B'$ also passes the test of Step~3. This is conditional on the success of the estimations in Step~1 to Step~3, which can fail with probability at most $\delta$. Therefore, the algorithm outputs $\mathsf{YES}$ with probability at least $1-\delta$.
  
  When the algorithm outputs $\mathsf{YES}$, it returns an $s$-sparse operator $\widehat\rho$ that is $\varepsilon$-close to the input $\rho$ in trace distance. To see this, note that Lemma~\ref{lem:reconstruction-soundness} applies to the actual input $\rho$: as long as the basis and the output state pass the tests in Step~2 and Step~3, we have $\lVert \widehat \rho - \rho \rVert_1 \le \varepsilon$, with no sparsity assumption on $\rho$ in the proof.
  
  This also proves the $\mathsf{NO}$ case. Specifically, suppose $\operatorname{dist}_1(\rho,\mathcal M_s)>\varepsilon$. Assume that all estimates in Steps~1 to 3 are successful with probability at least $1-\delta$. If the algorithm outputs $\mathsf{YES}$, then it returns an $s$-sparse operator $\widehat\rho$ such that $\lVert\rho-\widehat\rho\rVert_1\le\varepsilon$. This contradicts the promise that $\operatorname{dist}_1(\rho,\mathcal M_s)>\varepsilon$. Therefore, the algorithm must output $\mathsf{NO}$ with probability at least $1-\delta$.
\end{proof}